\newcommand{\customlabel}[2]{%
   \protected@write \@auxout {}{\string \newlabel {#1}{{#2}{\thepage}{#2}{#1}{}} }%
   \hypertarget{#1}{#2}
}
\spnewtheorem{observation}[theorem]{Observation}{\bfseries}{\itshape}
\crefname{algocf}{Algorithm}{Algorithms}
\crefname{observation}{Observation}{Observations}
\tikzstyle{normalNode}=[draw=black, fill=black, circle, minimum size=2mm, inner sep=0mm]
\tikzstyle{normalEdge}=[black, thick, >=stealth]
\newcommand{\st}{\;:\;}
\newcommand{\prob}[1]{\mathbf{Pr}\left[#1\right]}
\newcommand{\sets}{\mathcal{P}} 
\newcommand{\starsuff}{\eqref{eq:cond}-sufficient}
\newcommand{\indicator}[1]{\mathbbm{1}_{#1}}
\newcommand{\incidence}[2][]{%
\ifthenelse { \equal {#1} {} }%
{\mathbbm{1}_{#2}}%
{\mathbbm{1}_{#1 \in #2}}%
}
\newcommand{\polystar}{Y^{\star}}
\newcommand{\polyplus}{Y^{+}}
\newcommand{\polyProj}{Y_{\pi}}
\newcommand{\polyDist}{Z_{\pi}}
\begin{document}

\title{Decomposing Probability Marginals\\ Beyond Affine Requirements\thanks{%
Proofs of results marked with $(\clubsuit)$ can be found in the appendix.%
}}

\titlerunning{Decomposing Probability Marginals Beyond Affine Requirements}

\author{Jannik Matuschke}

\institute{KU Leuven}

\maketitle

\begin{abstract}
    Consider the triplet $(E, \sets, \pi)$, where $E$ is a finite ground set,~$\sets \subseteq 2^E$ is a collection of subsets of $E$ and \mbox{$\pi : \sets \rightarrow [0,1]$} is a \emph{requirement function}. 
    Given a vector of \emph{marginals} $\rho \in [0, 1]^E$, our goal is to find a distribution for a random subset $S \subseteq E$ such that $\prob{e \in S} = \rho_e$ for all~\mbox{$e \in E$} and $\prob{P \cap S \neq \emptyset} \geq \pi_P$ for all $P \in \sets$,
    or to determine that no such distribution exists.
    
    Generalizing results of Dahan, Amin, and Jaillet~\citep{dahan2021probability}, we devise a generic decomposition algorithm that solves the above problem when provided with a suitable sequence of \emph{admissible support candidates (ASCs)}.
    We show how to construct such ASCs for numerous settings, including \emph{supermodular requirements}, Hoffman-Schwartz-type \emph{lattice polyhedra}~\citep{hoffman1978lattice}, and \emph{abstract networks} where $\pi$ fulfils a conservation law.
    The resulting algorithm can be carried out efficiently when~$\sets$ and~$\pi$ can be accessed via appropriate oracles. For any system allowing the construction of ASCs, our results imply a simple polyhedral description of the set of marginal vectors for which the decomposition problem is feasible. 
    Finally, we characterize \emph{balanced hypergraphs} as the systems~$(E, \sets)$ that allow the \emph{perfect decomposition} of any marginal vector $\rho \in [0,1]^E$, i.e., where we can always find a distribution reaching the highest attainable probability~$\prob{P \cap S \neq \emptyset} = \min  \left\{ \sum_{e \in P} \rho_e, 1\right\}$ for all~$P \in \sets$. 
\end{abstract}

\section{Introduction}
\label{sec:introduction}

Given a set system $(E, \sets)$ on a finite ground set $E$ with $\sets \subseteq 2^E$ and a \emph{requirement function} $\pi : \sets \rightarrow (-\infty,1]$, consider the polytope
$$\textstyle \polyDist := \left\{z \in [0, 1]^{2^E} \st \sum_{S \subseteq E} z_S = 1 \text{ and } \sum_{S : S \cap P \neq \emptyset} z_S \geq \pi_P \ \forall\; P \in \sets\right\},$$
which corresponds to the set of all probability distributions over $2^E$ such that the corresponding random subset $S \subseteq E$ hits each $P \in \sets$ with probability at least its requirement value $\pi_P$.\footnote{Note that we can assume $\pi_P \in [0, 1]$ without loss of generality in the definition of~$\polyDist$, but we allow negative values for notational convenience in later parts of the paper.}
We are interested in describing the projection of~$\polyDist$ to the corresponding marginal probabilities on $E$, i.e.,
$$\textstyle \polyProj := \left\{\rho \in [0, 1]^E \st \exists z \in \polyDist \text{ with } \rho_e = \sum_{S \subseteq E : e \in S} z_S \ \forall\, e \in E \right\}.$$
For $\rho \in Y_\pi$, we call any $z \in \polyDist$ with $\rho_e = \sum_{S \subseteq E : e \in S} z_S$ for all $e \in E$ a \emph{feasible decomposition of~$\rho$ for $(E, \sets, \pi)$}.
Note that every $\rho \in \polyProj$ fulfils
\begin{align}
    \textstyle \sum_{e \in P} \rho_e \geq \pi_P  \qquad \forall\, P \in \mathcal{P} \tag{$\star$}\label{eq:cond}
\end{align}
because $\sum_{S : S \cap P \neq \emptyset} z_S \leq \sum_{e \in P} \rho_e$ for any feasible decomposition $z$ of $\rho$. Hence
\begin{align*}
    \polyProj \;\subseteq\; \polystar := \left\{\rho \in [0, 1]^E \st \rho \text{ fulfils } \eqref{eq:cond}\right\}.
\end{align*}
We say that $(E, \sets, \pi)$ is \emph{\starsuff} if~$\polyProj = \polystar$.
Our goal is to identify classes of such {\starsuff} systems, along with corresponding decomposition algorithms that, given $\rho \in \polystar$, find a feasible decomposition of $\rho$. 
Using such decomposition algorithms, we can reduce optimization problems over $\polyDist$ whose objectives and other constraints can be expressed via the marginals to optimization problems over~$\polystar$, yielding an exponential reduction in dimension.

\subsection{Motivation}

Optimization problems over~$\polyDist$ and polytopes with a similar structure arise, e.g., in the context of \emph{security games}.
In such a game, a defender selects a random subset $S \subseteq E$ of resources to inspect while an attacker selects a strategy $P \in \sets$, balancing their utility from the attack against the risk of detection~(which occurs if $P \cap S \neq \emptyset$).
Indeed, the decomposition setting described above originates from the work of \mbox{\citet*{dahan2021probability}}, who used it to describe the set of mixed Nash equilibria for such a security game using a compact LP formulation when the underlying system is {\starsuff}.

Two further application areas of marginal decomposition are \emph{randomization in robust or online optimization}, which is often used to overcome pessimistic worst-case scenarios~\citep{kawase2019submodular,kobayashi2017randomized,kawase2019randomized,matuschke2018robust}, 
and \emph{social choice and mechanism design}, where randomization is frequently used to satisfy otherwise irreconcilable axiomatic requirements~\cite{brandl2016consistent} and where decomposition results in various flavors are applied, e.g., to define auctions via interim allocations~\citep{border1991implementation,gopalan2018public}, to improve load-balancing in school choice~\citep{demeulemeester2023pessimist}, and to turn approximation algorithms into truthful mechanisms~\citep{kraft2014fast,lavi2011truthful}.
In \cref{sec:applications}, we discuss several applications from these three areas, including different security games, a robust randomized coverage problem, and committee election with diversity constraints. We show how the structures for which we establish \eqref{eq:cond}-sufficiency in this paper arise naturally in these applications and imply efficient algorithms for these settings.

\subsection{Previous Results}
 
As mentioned above, \citet{dahan2021probability} introduced the decomposition problem described above to characterize mixed Nash equilibria of a network security game played on $(E, \sets)$.
They observed that such equilibria can be described by a compact LP formulation if $(E, \sets, \pi)$ is {\starsuff} for all requirements $\pi$ of the \emph{affine} form 
\begin{align}
    \textstyle \pi_P = 1 - \sum_{e \in P} \mu_e \quad \forall\, P \in \sets \tag{A}\label{eq:affine}
\end{align}
for some $\mu \in [0, 1]^E$.
They showed that this is indeed the case when $E$ is the set of edges of a directed acyclic graph (DAG) and $\sets$ the set of $s$-$t$-paths in this DAG and provide an polynomial-time (in $|E|$) algorithm for computing feasible decompositions in this case.
\citet{matuschke2023decomposition-full} extended this result by providing an efficient decomposition algorithm for \emph{abstract networks}, a generalization of the system of $s$-$t$-paths in a (not necessarily acyclic) digraph; see \cref{sec:abstract-networks} for a definition.
He also showed that a system $(E, \sets, \pi)$ is {\starsuff} for all affine requirement functions $\pi$ if and only if the system has the \emph{weak max-flow/min-cut property}, i.e., the polyhedron $\{y \in \mathbb{R}_+^E \st \sum_{e \in P} y_e \geq 1 \ \forall\, P \in \sets\}$ is integral.

While the affine setting \eqref{eq:affine} is well-understood, little is known for the case of more general requirement functions. A notable exception is the \emph{conservation law} studied by~\citet{dahan2021probability}, again for the case of directed acyclic graphs:
\begin{align}
    {\pi_P + \pi_Q = \pi_{P \times_e Q} + \pi_{Q \times_e P}} \quad \forall\, P, Q \in \mathcal{P}, e \in P \cap Q, \tag{C}\label{eq:conservation}
\end{align}
where $P \times_e Q$ for two paths $P, Q \in \sets$ containing a common edge $e \in P \cap Q$ denotes the path consisting of the prefix of $P$ up to $e$ and the suffix of $Q$ starting with $e$.
\citet{dahan2021probability} established \eqref{eq:cond}-sufficiency for requirements fulfilling~\eqref{eq:conservation} in DAGs by providing another combinatorial decomposition algorithm.
It was later observed in~\citep{matuschke2023decomposition-full} and, independently in a different context in~\citep{cela2023linear}, that \eqref{eq:conservation} for DAGs is in fact equivalent to~\eqref{eq:affine}.
However, this equivalence no longer holds for the natural generalization of \eqref{eq:conservation} to arbitrary digraphs.

\subsection{Contribution and Structure of this Paper}
\label{sec:contribution}

In this article, we present an algorithmic framework for computing feasible decompositions of marginal vectors fulfilling~\eqref{eq:cond} for a wide range of set systems and requirement functions, going beyond the affine setting~\eqref{eq:affine}.
Our algorithm, described in \cref{sec:algorithm}, iteratively adds a so-called \emph{admissible support candidate~(ASC)} to the constructed decomposition.
The definition of ASCs is based on a transitive dominance relation on $\sets$, which has the property that a decomposition of $\rho \in \polystar$ is feasible for $(E, \sets, \pi)$ if and only if it is feasible for the restriction of the system to non-dominated sets.

Our algorithmic framework can be seen as a generalization of Dahan et al.'s~\citep{dahan2021probability}~Algorithm~1 for requirements fulfilling \eqref{eq:conservation} in DAGs.  
An important novelty which allows us to establish \eqref{eq:cond}-sufficiency for significantly more general settings is the use of the dominance relation and the definition of ASCs, which are more flexible than the properties implicitly used in \citep{dahan2021probability}. 
A detailed comparison of the two algorithms can be found in \cref{app:dahan-et-al}.

To establish correctness of our algorithm for a certain class of systems, which also implies \eqref{eq:cond}-sufficiency for those systems, it suffices to show the existence of an ASC in each iteration of the algorithm.
We assume that the set $E$ is of small cardinality and given explicitly, while $\sets$ might be large (possibly exponential in $|E|$) and is accessed by an appropriate oracle.
To obtain an polynomial run-time of our algorithm in $|E|$, it suffices to show that the following two tasks can be carried out in polynomial time in $|E|$:
\begin{enumerate}[label=(\roman*),align=left]
    \item In each iteration, construct an ASC.
    \item Given~\mbox{$\rho \in [0, 1]^E$}, either assert $\rho \in \polystar$ or find a maximum violated inequality of \eqref{eq:cond}, i.e., $P \in \sets$ maximizing $\pi_P - \sum_{e \in P} \rho_e > 0$.
\end{enumerate}
We prove the existence and computability of admissible sets for a variety of settings, which we describe in the following.

\paragraph{Supermodular Requirements.}
A basic example for which our algorithm implies~\eqref{eq:cond}-sufficiency is the case where $\sets = 2^E$ and $\pi$ is a \emph{supermodular} function, i.e.,~$\pi_{P \cap Q} + \pi_{P \cup Q}  \geq \pi_{P} + \pi_{Q}$ for all $P, Q \in \sets$.
In \cref{sec:contrapolymatroid}, we show the existence of ASCs for this setting and observe that both (i) and (ii) can be solved when $\pi$ is given by a \emph{value oracle} that given $P \in \sets$ returns $\pi_P$.

\paragraph{Abstract Networks under Weak Conservation of Requirements.} 
We prove \eqref{eq:cond}-sufficiency for the case that $(E, \sets)$ is an abstract network and $\pi$ fulfils a relaxed version of the conservation law~\eqref{eq:conservation} introduced by \citet{hoffman1974generalization}.
Such systems generalize systems of $s$-$t$-paths in digraphs, capturing some of their essential properties that suffice to obtain results such as Ford and Fulkerson's~\citep{ford1956maximal} max-flow/min-cut theorem or Dijkstra's~\citep{dijkstra1959note} shortest-path algorithm; see \cref{sec:abstract-networks} for a formal definition and an in-depth discussion.
In particular, our results generalize the results of~\citet{dahan2021probability} for DAGs under~\eqref{eq:conservation} to arbitrary digraphs.

\paragraph{Lattice Polyhedra.}
We also study the case where $\sets \subseteq 2^E$ is a \emph{lattice}, i.e., a partially ordered set in which each pair of incomparable elements have a unique maximum common lower bound, called \emph{meet} and a unique minimum common upper bound, called \emph{join}, and where $\pi$ is supermodular with respect to these meet and join operations.
\citet{hoffman1978lattice} showed that under two additional assumptions on the lattice, called \emph{submodularity} and \emph{consecutivity}, the system defined by \eqref{eq:cond} and $\rho \geq 0$ is totally dual integral (the corresponding polyhedron, which is the dominant of $\polystar$, is called \emph{lattice polyhedron}).
These polyhedra generalize (contra-)polymatroids and describe, e.g., $r$-cuts in a digraph~\citep{frank1999increasing} or paths in $s$-$t$-planar graphs~\citep{matuschke2010lattices}.
When $\pi$ is monotone with respect to the partial order on~$\sets$, a two-phase (primal-dual) greedy algorithm introduced by \citet{kornblum1978greedy} and later generalized by \citet{frank1999increasing} can be used to efficiently optimize linear functions over lattice polyhedra using an oracle that returns maxima of sublattices. 
We show the existence and computability of admissible sets under the same assumptions by carefully exploiting the structure of extreme points implicit in the analysis of the Kornblum-Frank algorithm; 
see \cref{sec:lattice-polyhedra} for complete formal definitions and an in-depth discussion of these results.

\paragraph{Perfect Decompositions and Balanced Hypergraphs.}
We call a set system $(E, \sets)$ \emph{decomposition-friendly} if it is {\starsuff} for all requirement funtions $\pi$. 
Note that $(E, \sets)$ is decomposition-friendly if and only if every $\rho \in [0, 1]^E$ has a feasible decomposition for $(E, \sets, \pi^{\rho})$, where $\pi^{\rho}_P := \min \left\{\sum_{e \in P} \rho_e, 1\right\}$ for $P \in \sets$. We call such a decomposition \emph{perfect}, as it simultaneously reaches the maximum intersection probability attainable under $\rho$ for each $P \in \sets$.
In \cref{sec:balanced} we show that $(E, \sets)$ is decomposition-friendly if and only if it is a \emph{balanced hypergraph}, a set system characterized by the absence of certain odd-length induced cycles.

\subsection{Notation and Preliminaries}

For $m \in \mathbb{N}$, we use the notation $[m]$ to denote the set $\{1, \dots, m\}$. 
Moreover, we use the notation $\indicator{A}$ to indicate whether expression $A$ is true ($\indicator{A} = 1$) or false ($\indicator{A} = 0$).
We will further make use of the following observation. 

\begin{lemma}[{\cite[Lemma 3]{matuschke2023decomposition-full}}]\label{lem:feasible-relax}
    There is an algorithm that given $\rho \in [0, 1]^E$ and $z \in \polyDist$ with $\sum_{S : e \in S} z_S \leq \rho_e$ for all $e \in E$, computes a feasible decomposition of $\rho$ in time polynomial in $|E|$ and $|\{S \subseteq E \st z_S > 0\}|$.
\end{lemma}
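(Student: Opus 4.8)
The plan is to keep the given $z$ fixed as a starting point and to \emph{fill up} the marginals one element at a time, always transferring probability mass from a set $S$ currently in the support to the larger set $S \cup \{e\}$. The key invariant is that we only ever \emph{enlarge} sets: whenever $S \cap P \neq \emptyset$ we also have $(S \cup \{e\}) \cap P \neq \emptyset$, so every intersection mass $\sum_{S : S \cap P \neq \emptyset} z_S$ is non-decreasing under such a transfer. Hence, if $z \in \polyDist$ before a transfer, then $z \in \polyDist$ afterwards (it remains a probability distribution, since mass is only moved between sets), and feasibility for $(E,\sets,\pi)$ is preserved throughout.

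Concretely, I would process the elements of $E$ in an arbitrary order. When $e$ is processed, let $d_e := \rho_e - \sum_{S : e \in S} z_S$ be its current deficit. This quantity is $\geq 0$ by hypothesis, and it is unaffected by the transfers performed for previously processed elements $e' \neq e$: adding $e'$ to a set changes neither whether that set contains $e$ nor the total mass carried by sets containing $e$. Since $\sum_S z_S = 1$ and $\rho_e \leq 1$, the total mass $1 - \sum_{S : e \in S} z_S$ on sets \emph{not} containing $e$ is at least $d_e$. I would therefore repeatedly pick a set $S$ with $z_S > 0$ and $e \notin S$ and move $\min\{z_S, r\}$ units of mass from $S$ to $S \cup \{e\}$ (adding $S \cup \{e\}$ to the support, or increasing its weight if it is already present), where $r$ is the mass still to be moved, until a total of $d_e$ units has been transferred. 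After this round the marginal of $e$ equals $\rho_e$ exactly, and it stays at this value in all later rounds.

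It remains to verify correctness and efficiency. For correctness: at termination $z \in \polyDist$ by the invariant above, and its marginals equal $\rho$ by construction, so $z$ is a feasible decomposition of $\rho$. For efficiency: in the round for $e$, at most one transfer is a proper split of a set (namely the final one, which may move fewer than $z_S$ units), increasing the support size by one; every earlier transfer moves an entire set and leaves the support size unchanged. Thus the support grows by at most $|E|$ over the whole execution, so it always has size at most $|\{S : z_S > 0\}| + |E|$; each of the $|E|$ rounds performs at most that many transfers, each of which can be carried out in time polynomial in $|E|$ (computing a union, updating a weight, scanning the support). The total running time is therefore polynomial in $|E|$ and $|\{S : z_S > 0\}|$.

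The only genuinely delicate point is the bookkeeping: confirming that the marginals already pinned down in earlier rounds are not disturbed, and that the support stays polynomially bounded. Once the ``only enlarge sets'' invariant is established, the preservation of the requirement constraints of $\polyDist$ is immediate, so I expect no further obstacles.
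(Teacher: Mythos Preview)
Your argument is correct. The paper does not actually prove this lemma; it is quoted verbatim from~\cite{matuschke2023decomposition-full} and used as a black box, so there is no in-paper proof to compare against. Your ``only enlarge sets'' strategy is the natural approach: transferring mass from $S$ to $S\cup\{e\}$ weakly increases every intersection probability $\sum_{S:S\cap P\neq\emptyset}z_S$, keeps $\sum_S z_S=1$, and leaves the marginal of every $e'\neq e$ untouched (since $e'\in S\cup\{e\}$ iff $e'\in S$). Your support-growth accounting (at most one split per element, hence support bounded by $|\{S:z_S>0\}|+|E|$) is also right and yields the claimed running time.
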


\section{Decomposition Algorithm}
\label{sec:algorithm}

We describe a generic algorithm that is able to compute feasible decompositions of marginals for a wide range of systems.
The algorithm makes use of a dominance relation defined in \cref{sec:dominance}.
We describe the algorithm in \cref{sec:algorithm-description} and state the conditions under which it is guaranteed to produce a feasible decomposition.
In \cref{sec:algorithm-examples}, we provide a simple yet relevant example where these conditions are met.
Finally, we prove correctness of the algorithm in \cref{sec:algorithm-analysis}.

\subsection{The Relation $\sqsubseteq_{\pi,\rho}$ and Admissible Support Candidates}
\label{sec:dominance}

For $P, Q \in \sets$ we write $P \sqsubseteq_{\pi, \rho} Q$ if $\pi_P \leq \pi_Q - \sum_{e \in Q \setminus P} \rho_e$ and $\pi_P < \pi_Q$, or if $P = Q$.
We say that $P$ is \emph{non-dominated} with respect to $\pi$ and $\rho$ in $\sets' \subseteq \sets$ if $P \in \sets'$ and there exists no $Q \in \sets' \setminus \{P\}$ with $P \sqsubseteq_{\pi,\rho} Q$.

\begin{restatable}[$\clubsuit$]{lemma}{restateLemDominanceTransitive}\label{lem:dominance-transitive}
    The relation $\sqsubseteq_{\pi,\rho}$ is a partial order.
    In particular, for any $\sets' \subseteq \sets$, there exists at least one $P'$ that is non-dominated in $\sets'$. 
\end{restatable}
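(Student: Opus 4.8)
The plan is to verify the three defining properties of a partial order — reflexivity, antisymmetry, and transitivity — directly from the definition of $\sqsubseteq_{\pi,\rho}$, and then to extract a non-dominated element of $\sets'$ as a maximizer of $\pi$ over $\sets'$.

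Reflexivity is immediate, since the clause $P = P$ is built into the definition of $P \sqsubseteq_{\pi,\rho} P$. For antisymmetry, suppose $P \sqsubseteq_{\pi,\rho} Q$, $Q \sqsubseteq_{\pi,\rho} P$, and $P \neq Q$. Because $P \neq Q$, the relation $P \sqsubseteq_{\pi,\rho} Q$ can only hold via its first clause, which in particular gives $\pi_P < \pi_Q$; symmetrically $\pi_Q < \pi_P$, a contradiction. Hence $P = Q$.

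The only part requiring a small computation is transitivity. Let $P \sqsubseteq_{\pi,\rho} Q$ and $Q \sqsubseteq_{\pi,\rho} R$; if $P = Q$ or $Q = R$ there is nothing to prove, so assume $P, Q, R$ pairwise distinct, in which case $\pi_P < \pi_Q < \pi_R$, so in particular $\pi_P < \pi_R$ and $P \neq R$. Adding the inequalities $\pi_P \leq \pi_Q - \sum_{e \in Q \setminus P} \rho_e$ and $\pi_Q \leq \pi_R - \sum_{e \in R \setminus Q} \rho_e$ and cancelling $\pi_Q$ gives $\pi_P \leq \pi_R - \sum_{e \in Q \setminus P} \rho_e - \sum_{e \in R \setminus Q} \rho_e$. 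I would then observe that $(Q \setminus P) \cup (R \setminus Q) \supseteq R \setminus P$ — any $e \in R \setminus P$ either lies in $Q$, hence in $Q \setminus P$, or it does not, hence in $R \setminus Q$ — so that, using $\rho \geq 0$, we have $\sum_{e \in Q \setminus P} \rho_e + \sum_{e \in R \setminus Q} \rho_e \geq \sum_{e \in R \setminus P} \rho_e$, and therefore $\pi_P \leq \pi_R - \sum_{e \in R \setminus P} \rho_e$. Combined with $\pi_P < \pi_R$, this is precisely $P \sqsubseteq_{\pi,\rho} R$. The one spot to be careful about is this index-set bookkeeping: the two difference-sets on the right may a priori fail to cover $R \setminus P$, which is exactly the scenario the containment above rules out, and nonnegativity of $\rho$ is what then makes the inequality point in the required direction.

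For the ``in particular'' statement, note that $\sets \subseteq 2^E$ is finite because $E$ is, so every nonempty $\sets' \subseteq \sets$ is finite; I would simply pick $P' \in \sets'$ with $\pi_{P'} = \max_{Q \in \sets'} \pi_Q$. For any $Q \in \sets' \setminus \{P'\}$ we then have $\pi_{P'} \geq \pi_Q$, so the strict inequality $\pi_{P'} < \pi_Q$ demanded by $P' \sqsubseteq_{\pi,\rho} Q$ cannot hold; hence no such $Q$ dominates $P'$, i.e., $P'$ is non-dominated in $\sets'$. (Equivalently, any maximal element of the finite poset $(\sets', \sqsubseteq_{\pi,\rho})$ works, but the $\pi$-maximizer is the most transparent witness.) I do not expect any genuine obstacle: the transitivity argument with its set-difference computation is the crux, and it is short.
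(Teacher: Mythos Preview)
Your proof is correct and essentially identical to the paper's: both verify reflexivity and antisymmetry directly from the strict inequality $\pi_P < \pi_Q$, and both handle transitivity by adding the two defining inequalities and using the containment $R \setminus P \subseteq (R \setminus Q) \cup (Q \setminus P)$ together with $\rho \geq 0$. The paper's proof does not spell out the ``in particular'' clause (it is implicit from finiteness of the poset), so your explicit $\pi$-maximizer witness is a small, harmless addition.
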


As we will see in the analysis below, it suffices to ensure $\sum_{S : S \cap P} z_S \geq \pi_P$ for non-dominated $P \in \sets$ to construct a feasible decomposition.
This motivates the following definition.
A set $S \subseteq E$ is an \emph{admissible support candidate (ASC) for $\pi$ and $\rho$} if the following three conditions are fulfilled:
\begin{enumerate}[label=(S\arabic*),align=left]
    \item $S \subseteq E_{\rho} := \{e \in E \st \rho_e > 0\}$. 
    \label{prop:cut:positive}
    \item $|S \cap P| \leq 1$ for all $P \in \sets_{\pi,\rho}^= := \left\{Q \in \sets \st \sum_{e \in Q} \rho_e = \pi_Q\right\}$. 
    \label{prop:cut:S-tight-sets}
    \item $|P \cap S| \geq 1$ for all non-dominated (w.r.t.~$\pi$ and $\rho$) $P$ in $\{Q \in \sets \st \pi_Q > 0\}$. 
    
    \label{prop:cut:S-hits-necessary-sets}
\end{enumerate}
We now present an algorithm, that when provided with a sequence of ASCs computes a feasible decomposition for $\rho \in \polystar$.

\subsection{The Algorithm}
\label{sec:algorithm-description}

The algorithm constructs a decomposition by iteratively selecting an ASC~$S$ for a requirement function $\bar{\pi}$ and a marginal vector $\bar{\rho}$, which can be thought of as residuals of the original requirements and marginals, respectively, with $\bar{\pi} = \pi$ and $\bar{\rho} = \rho$ initially.
It shifts a probability mass of 
$$\textstyle \varepsilon_{\bar{\pi},\bar{\rho}}(S) := \min\, \left\{  
\min_{e \in S} \bar{\rho}_e,\ 
\max_{P \in \sets} \bar{\pi}_P,\ 
\delta_{\bar{\pi},\bar{\rho}}(S)  
\right\}$$
to $S$, where $\delta_{\bar{\pi},\bar{\rho}}(S) := \inf_{P \in \sets : |P \cap S| > 1}
    \frac{ \bar{\pi}_P - \sum_{e \in P} \bar{\rho}_e }{ 1 - |P \cap S| }$.
Intuitively, $\varepsilon_{\bar{\pi},\bar{\rho}}(S)$ corresponds to the maximum amount of probability mass that can be shifted to the set $S$ without losing feasibility of the remaining marginals for the remaining requirements.
The residual marginals $\bar{\rho}$ are reduced by $\varepsilon_{\bar{\pi},\bar{\rho}}(S)$ for all $e \in S$, and so are the requirements of all $P \in \sets$ (including those $P$ with $P \cap S = \emptyset$).

\medskip

\begin{algorithm}[H]
  \caption{Generic Decomposition Algorithm}\label{alg:decomposition}
  \setstretch{1.1}
  Initialize $\bar{\pi} := \pi$, $\bar{\rho} := \rho$.\\
  Initialize $z_{\emptyset} = 1$ and $z_S := 0$ for all $S \subseteq E$ with $S \neq \emptyset$.\\
  \While{$\max_{P \in \sets} \bar{\pi}_P > 0$\vspace{0.1cm}}{
    Let $S$ be an ASC for $\bar{\pi}$ and $\bar{\rho}$.\\  
    Let $\varepsilon := \varepsilon_{\bar{\pi},\bar{\rho}}(S)$.\\
    Set $z_S := z_S + \varepsilon$ and $z_{\emptyset} := z_{\emptyset} - \varepsilon$.\\
    Set $\bar{\rho}_e := \bar{\rho}_e - \varepsilon$ for all $e \in S$.\\
    Set $\bar{\pi}_P := \bar{\pi}_P - \varepsilon$ for all $P \in \sets$.\\
  }
  Apply \cref{lem:feasible-relax} to $z$ to obtain a feasible decomposition $z'$ of $\rho$.\\
  \Return $z'$
\end{algorithm}

\medskip

Our main result establishes that the algorithm returns a feasible decomposition after a polynomial number of iterations, if an ASC for $\bar{\pi}$ and $\bar{\rho}$ exists in every iteration.
To show that a certain system is {\starsuff}, it thus suffices to establish the existence of the required ASCs.

\begin{theorem}
    \label{thm:decomposition-algo}
    Let $(E, \sets)$ be a set system and $\pi : \sets \rightarrow (-\infty, 1]$. Let $\rho \in \polystar$.
    If there exists an ASC for $\bar{\pi}$ and $\bar{\rho}$ in every iteration of \cref{alg:decomposition}, then the algorithm terminates after $\mathcal{O}(|E|^2)$ iterations and returns a feasible decomposition of $\rho$ for $(E, \sets, \pi)$.
\end{theorem}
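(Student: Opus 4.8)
The plan is to argue termination and correctness separately, relying on the invariant that at the start of every iteration the current residual pair $(\bar\pi,\bar\rho)$ satisfies the analogue of \eqref{eq:cond}, i.e.\ $\sum_{e\in P}\bar\rho_e \ge \bar\pi_P$ for all $P\in\sets$, and that $\bar\rho\ge 0$. First I would check that this invariant is preserved: since both $\bar\pi_P$ and $\sum_{e\in P}\bar\rho_e$ decrease by exactly $\varepsilon\cdot|P\cap S|$ when $|P\cap S|\ge 1$ and $\bar\pi_P$ alone decreases by $\varepsilon$ when $P\cap S=\emptyset$, the quantity $\sum_{e\in P}\bar\rho_e - \bar\pi_P$ is non-decreasing for sets meeting $S$ in at most one element and for sets disjoint from $S$; for sets with $|P\cap S|>1$ the definition of $\delta_{\bar\pi,\bar\rho}(S)$ (and hence of $\varepsilon$) is precisely what guarantees $\sum_{e\in P}\bar\rho_e - \bar\pi_P$ stays non-negative. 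Non-negativity of $\bar\rho$ is maintained because $\varepsilon\le\min_{e\in S}\bar\rho_e$. I would also record that $z$ remains a valid sub-distribution: $z_\emptyset$ never goes negative because $\varepsilon\le\max_{P\in\sets}\bar\pi_P$ and one can bound $\max_{P}\bar\pi_P$ by $z_\emptyset$ via the invariant (a set $P$ achieving the max has $\sum_{e\in P}\bar\rho_e\ge\bar\pi_P$, and the total residual mass on $E$ is at most $1-z_\emptyset$... this needs a short argument, see below).

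Next, termination in $\mathcal O(|E|^2)$ iterations. Here the key is to show that in each iteration at least one of a bounded collection of "events" happens: either some $\bar\rho_e$ drops to $0$ (because $\varepsilon=\min_{e\in S}\bar\rho_e$), or a new set enters $\sets^{=}_{\bar\pi,\bar\rho}$, or $\max_{P}\bar\pi_P$ drops to $0$ and the loop ends. The first kind of event can happen at most $|E|$ times since $E_{\bar\rho}$ only shrinks (once $\bar\rho_e=0$ it stays $0$, as $\bar\rho$ is non-increasing). For the middle case, when $\varepsilon=\delta_{\bar\pi,\bar\rho}(S)$, some $P$ with $|P\cap S|>1$ reaches $\sum_{e\in P}\bar\rho_e=\bar\pi_P$, i.e.\ joins $\sets^{=}$; I need that such a $P$ then stays tight, which follows from property~\ref{prop:cut:S-tight-sets} of ASCs applied in all subsequent iterations (a tight set is met in at most one element, so its slack never increases again) — this is the crucial place where \ref{prop:cut:S-tight-sets} is used, and it shows $|\sets^{=}|$ is monotonically growing in a controlled way. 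The set of distinct "tight constraints" that can ever appear is governed by the support structure on $E$, and a dimension/rank argument bounds the number of such events by $\mathcal O(|E|)$ as well; combined with the $|E|$ bound on the coordinate-zeroing events this gives $\mathcal O(|E|^2)$ total iterations (the quadratic coming from at most $|E|$ "phases", each with at most $|E|$ iterations before some $\bar\rho_e$ vanishes). Making this counting fully rigorous is what I expect to be the main obstacle — in particular pinning down exactly which potential function strictly changes in the iterations where $\varepsilon=\max_P\bar\pi_P$ does not zero out the loop but also no coordinate vanishes.

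Then correctness. Upon termination $\max_{P\in\sets}\bar\pi_P\le 0$, so the residual requirements are all non-positive. I claim the constructed $z$ (before the cleanup step) satisfies $\sum_{S:e\in S}z_S\le\rho_e$ for every $e$: indeed $\rho_e-\bar\rho_e = \sum_{S\ni e}z_S$ is maintained as an invariant, because each time we add $\varepsilon$ to $z_S$ we subtract $\varepsilon$ from $\bar\rho_e$ for exactly the $e\in S$. Since $\bar\rho_e\ge0$ at the end, the inequality holds. It remains to show $z\in\polyDist$, i.e.\ $\sum_{S:S\cap P\ne\emptyset}z_S\ge\pi_P$ for all $P$. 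Here I use \cref{lem:dominance-transitive} and property~\ref{prop:cut:S-hits-necessary-sets}: it suffices to verify the inequality for sets $P$ that are non-dominated with respect to the original $\pi,\rho$ among $\{Q:\pi_Q>0\}$, because the dominance relation $\sqsubseteq_{\pi,\rho}$ was designed so that feasibility for non-dominated sets propagates to dominated ones (if $P\sqsubseteq_{\pi,\rho}Q$ then any $S$ hitting $Q$... one checks $\sum_{S\cap Q\ne\emptyset}z_S\ge\pi_Q$ together with the marginal bound on $Q\setminus P$ forces $\sum_{S\cap P\ne\emptyset}z_S\ge\pi_P$). For a non-dominated $P$ with $\pi_P>0$, track $\bar\pi_P$ over the run: it starts at $\pi_P>0$ and ends $\le0$, so the total mass removed from it, $\sum_{\text{iterations}}\varepsilon$, is at least $\pi_P$; but property~\ref{prop:cut:S-hits-necessary-sets} ensures that as long as $\bar\pi_P>0$, the current ASC $S$ hits $P$ (note $P$ remains non-dominated with respect to the residuals — this monotonicity of non-domination under the residual update is a small lemma I would need), hence the mass $\varepsilon$ added to $z_S$ in that iteration is counted in $\sum_{S':S'\cap P\ne\emptyset}z_{S'}$. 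Summing, $\sum_{S':S'\cap P\ne\emptyset}z_{S'}\ge\pi_P$. Finally, applying \cref{lem:feasible-relax} to this $z$ (whose support has size $\mathcal O(|E|^2)$, one set per iteration) yields the desired feasible decomposition $z'$ of $\rho$, and the total running time is polynomial in $|E|$. The delicate points are the two monotonicity claims — that tight sets stay tight and that non-dominated sets stay non-dominated under the uniform shift of $\bar\pi$ — and the iteration-count bookkeeping; everything else is bookkeeping on the invariants.
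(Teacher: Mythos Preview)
Your overall architecture is right (invariants, termination, then covering plus \cref{lem:feasible-relax}), but both monotonicity claims you flag as ``small lemmas'' are actually false, and the argument does not go through without them.

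\medskip
\textbf{Correctness.} The claim that a set $P$ non-dominated with respect to $(\pi,\rho)$ remains non-dominated with respect to $(\bar\pi,\bar\rho)$ is wrong. The condition $P\sqsubseteq_{\pi,\rho}Q$ reads $\pi_P\le\pi_Q-\sum_{e\in Q\setminus P}\rho_e$; since $\bar\pi$ is a uniform shift of $\pi$ while $\bar\rho\le\rho$, the right-hand side can only \emph{increase} after an update, so a $Q$ that did not dominate $P$ before may well dominate it afterwards. Hence you cannot conclude that the ASC hits $P$ in every iteration with $\bar\pi_P>0$. The paper circumvents this by a reverse induction on the iteration index: it proves $\sum_{i=k}^{\ell}\mathbbm{1}_{P\cap S^{(i)}\neq\emptyset}\,\varepsilon^{(i)}\ge\pi^{(k)}_P$ for all $k$ and all $P\in\sets$. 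In the case $P\cap S^{(k)}=\emptyset$ and $\pi^{(k)}_P>0$, property~\ref{prop:cut:S-hits-necessary-sets} supplies a $Q$ dominating $P$ \emph{for that particular iteration} (so $Q$ may change from iteration to iteration) with $Q\cap S^{(k)}\neq\emptyset$; the induction then bounds the contribution for $Q$, and the defining inequality of $P\sqsubseteq_{\pi^{(k)},\rho^{(k)}}Q$ together with $\rho^{(k)}_e\ge\sum_{i\ge k}\mathbbm{1}_{e\in S^{(i)}}\varepsilon^{(i)}$ transfers the bound to $P$. Your ``propagation from non-dominated to dominated'' idea is essentially this step, but it has to be done per iteration, not once for the original $(\pi,\rho)$.

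\medskip
\textbf{Termination.} The claim ``a tight set is met in at most one element, so its slack never increases again'' is also not quite right: if $|P\cap S|=0$ the slack of $P$ increases by $\varepsilon>0$, so a tight set can become non-tight, and counting ``tightening events'' or invoking a rank bound does not work (indeed $|\sets|$ can be exponential in $|E|$). The paper's counting is different and uses \emph{pairs of elements}: if in some iteration $k$ we have $e,e'\in E_{\bar\rho}$ both contained in a tight set, then in every later iteration (while both still have positive residual) there is \emph{some} tight set containing $e$ and $e'$---not necessarily the same one---and hence $|\{e,e'\}\cap S|\le 1$ forever after by~\ref{prop:cut:S-tight-sets}. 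Proving this persistence requires both~\ref{prop:cut:S-tight-sets} and~\ref{prop:cut:S-hits-necessary-sets} (one passes from the old tight set to a non-dominated set above it, which is itself tight and is hit exactly once). Each iteration with $\varepsilon=\delta_{\bar\pi,\bar\rho}(S)$ produces a new such pair $\{e,e'\}\subseteq S$, so there are at most $\binom{|E|}{2}$ of those, plus at most $|E|$ iterations where some $\bar\rho_e$ vanishes or the loop ends.
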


Note that \cref{thm:decomposition-algo} implies that \cref{alg:decomposition} can be implemented to run in time $\mathcal{O}(\mathcal{T}|E|^2)$, when provided with an oracle that computes the required ASCs along with the corresponding values of $\varepsilon_{\bar{\pi},\bar{\rho}}(S)$ in time $\mathcal{T}$.%
\footnote{In particular, note that $\varepsilon_{\bar{\pi},\bar{\rho}}(S)$ can be computed using 
at most $|S|$ iterations of the discrete Newton algorithm if we can solve problem~(ii) from \cref{sec:contribution}, i.e., the maximum violated inequality problem for $\polystar$.}
Before we prove \cref{thm:decomposition-algo}, we first provide an example to illustrate its application.

\subsection{Basic Example: Supermodular Requirements}
\label{sec:algorithm-examples}
\label{sec:contrapolymatroid}

Consider the case that $\sets = 2^E$ and $\pi$ is supermodular, i.e.,  for all \mbox{$P, Q \in \sets$} it holds that $\pi_{P \cap Q} + \pi_{P \cup Q} \geq \pi_P + \pi_Q$.
Note that if $\pi$ is supermodular, then~$\bar{\pi}$ is supermodular throughout \cref{alg:decomposition}, as subtracting a constant does not affect supermodularity.
Moreover, we show in \cref{sec:algorithm-analysis} that $\bar{\rho}$ fulfils \eqref{eq:cond} for~$\bar{\pi}$ throughout the algorithm.
To apply \cref{alg:decomposition}, it thus suffices to show existence of an ASC when~$\rho \in \polystar$ and $\pi$ is supermodular.
To obtain the ASC, we define $Q := \bigcup_{P \in \sets^=_{\pi,\rho}} P$ and distinguish two cases:
If~$Q \cap E_{\rho} = \emptyset$, we let $S' := E_{\rho} \setminus Q$. Otherwise, we let $S' :=  (E_{\rho} \setminus Q) \cup \{e_Q\}$ for an arbitrary $e_Q \in Q \cap E_{\rho}$.

\begin{lemma}
    If $\sets = 2^E$, $\pi$ is supermodular, and $\rho \in \polystar$, then $S'$ is an ASC.
\end{lemma}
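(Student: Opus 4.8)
The plan is to verify the three defining properties (S1), (S2), (S3) of an admissible support candidate in turn, spending essentially all the effort on (S3). Property (S1) is immediate from the definition of $S'$: both $E_\rho \setminus Q$ and the optional extra element $e_Q$ lie in $E_\rho$. For (S2), I would note that every $P \in \sets^=_{\pi,\rho}$ is contained in $Q$ by the very definition of $Q$, hence $P$ is disjoint from $E_\rho \setminus Q$ and so $P \cap S' \subseteq \{e_Q\}$, which gives $|P \cap S'| \le 1$.

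Before attacking (S3), I would establish the key structural fact that $\sets^=_{\pi,\rho}$ is closed under unions. For $P, R \in \sets^=_{\pi,\rho}$, adding the identity $\sum_{e \in P} \rho_e + \sum_{e \in R}\rho_e = \sum_{e\in P\cup R}\rho_e + \sum_{e\in P\cap R}\rho_e$ to the supermodular inequality $\pi_{P\cup R} + \pi_{P\cap R} \ge \pi_P + \pi_R$, and using the tightness of $P$ and $R$ together with the inequalities $\sum_{e\in P\cup R}\rho_e \ge \pi_{P\cup R}$ and $\sum_{e\in P\cap R}\rho_e \ge \pi_{P\cap R}$ (valid since $\rho \in \polystar$ and $\sets = 2^E$), forces equality everywhere, so that $P\cup R \in \sets^=_{\pi,\rho}$ (and likewise $P \cap R \in \sets^=_{\pi,\rho}$). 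Iterating this over the finite family $\sets^=_{\pi,\rho}$, whenever $\sets^=_{\pi,\rho} \neq \emptyset$ the set $Q$ itself is tight, i.e., $\sum_{e\in Q}\rho_e = \pi_Q$.

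For (S3), let $P$ be non-dominated in $\{R \in \sets : \pi_R > 0\}$, so in particular $\pi_P > 0$, and I would show $P \cap S' \neq \emptyset$. If $Q \cap E_\rho = \emptyset$ (which also covers the degenerate case $\sets^=_{\pi,\rho} = \emptyset$, where $Q = \emptyset$), then $S' = E_\rho$, and $P \cap E_\rho = \emptyset$ would give $\sum_{e\in P}\rho_e = 0 < \pi_P$, contradicting $\rho \in \polystar$. Otherwise $e_Q$ is well defined, and I would argue by contradiction: if $P \cap S' = \emptyset$ then $e_Q \notin P$, so $e_Q \in Q \setminus P$, which makes $P \cup Q$ a strict superset of $P$ and gives $\sum_{e\in Q\setminus P}\rho_e \ge \rho_{e_Q} > 0$. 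I would then show $P \sqsubseteq_{\pi,\rho} P\cup Q$: supermodularity yields $\pi_{P\cup Q} \ge \pi_P + \pi_Q - \pi_{P\cap Q}$, and since $\pi_Q = \sum_{e\in Q}\rho_e$ by tightness of $Q$ while $\pi_{P\cap Q} \le \sum_{e\in P\cap Q}\rho_e$, we get $\pi_Q - \pi_{P\cap Q} \ge \sum_{e\in Q\setminus P}\rho_e$, hence $\pi_P \le \pi_{P\cup Q} - \sum_{e\in Q\setminus P}\rho_e$ (using $(P\cup Q)\setminus P = Q\setminus P$); combined with $\sum_{e\in Q\setminus P}\rho_e > 0$ this also gives $\pi_P < \pi_{P\cup Q}$. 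In particular $\pi_{P\cup Q} > 0$, so $P\cup Q \in \{R \in \sets : \pi_R > 0\}\setminus\{P\}$ dominates $P$, contradicting the choice of $P$.

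The main obstacle is the closure-under-unions step, together with the realization that $P \cup Q$ is the natural witness of dominance in the contradiction argument; once $Q$ is known to be tight, the two conditions defining $P \sqsubseteq_{\pi,\rho} P\cup Q$ drop out of a one-line rearrangement of the supermodular inequality. Beyond that, the only points to keep track of are the degenerate cases ($\sets^=_{\pi,\rho} = \emptyset$, handled by the $Q \cap E_\rho = \emptyset$ branch, and $\{R \in \sets : \pi_R > 0\} = \emptyset$, which makes (S3) vacuous) and the use of $\rho \ge 0$ to bound $\sum_{e\in Q\setminus P}\rho_e$ from below by $\rho_{e_Q}$.
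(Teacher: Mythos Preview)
Your proof is correct and follows essentially the same approach as the paper: (S1) and (S2) are immediate from the construction of $S'$, the tightness of $Q$ is obtained by a standard uncrossing argument, and (S3) is proved by contradiction by exhibiting a set dominating $P$. The only minor difference is the choice of dominating witness: the paper uses $Q$ itself, exploiting $P \cap E_\rho \subseteq Q$ (which follows from $P \cap S' = \emptyset$) to get directly $\pi_P \le \sum_{e \in P}\rho_e = \sum_{e \in P \cap Q}\rho_e = \pi_Q - \sum_{e \in Q \setminus P}\rho_e$, whereas you use $P \cup Q$ via an additional application of supermodularity; both arguments are equally valid and of comparable length.
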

\begin{proof}
    Note that $S$ fulfils~\ref{prop:cut:positive} and \ref{prop:cut:S-tight-sets} by construction because $P \subseteq Q$ for all~$P \in \sets^=_{\pi,\rho}$.
    Moreover, for any $P \in \sets$, either $P \cap S' \neq \emptyset$, or $P \cap E_{\rho} \subseteq Q \cap E_{\rho} = \emptyset$ and hence $\pi_P \leq 0$, or~\mbox{$P \cap E_{\rho} \subseteq Q \setminus \{e_Q\}$}.
    In the third case, we make use of the fact that, by standard uncrossing arguments, $Q \in \sets^=_{\pi,\rho}$ .
    From this we obtain $\pi_P \leq \sum_{e \in P} \rho_e = \sum_{e \in Q \cap P} \rho_e = \pi_Q - \sum_{e \in Q \setminus P} \rho_e$ and therefore~$P \sqsubseteq_{\pi,\rho} Q$ (note that $e_Q \in Q \setminus P$ and hence $\pi_P < \pi_Q$). 
    Thus $S'$ fulfils \ref{prop:cut:S-hits-necessary-sets}. \qed
\end{proof}

We remark that both the described ASC and maximum violated inequalities of $\polystar$ can be found in polynomial time using submodular function minimization~\citep{schrijver2003combinatorial} when $\pi$ is given by a value oracle, that given $P$ returns $\pi_P$.

\subsection{Analysis (Proof of \cref{thm:decomposition-algo})}
\label{sec:algorithm-analysis}

Throughout this section we assume that $(E, \sets, \pi)$ and $\rho$ fulfil the conditions of the \cref{thm:decomposition-algo}. In particular, $\rho \in \polystar$ and in each iteration of the algorithm there exists an ASC.
We show that under these conditions the while-loop terminates after $\mathcal{O}(|E|^2)$ iterations (\cref{lem:C-iterations}) and that after termination of the loop, $z \in \polyDist$ (\cref{lem:C-covering}) and $\sum_{S : e \in S} z_S \leq \rho_e$ for all $e \in E$ (\cref{lem:C-invariants}\ref{inv:rho} for $k = \ell$).
This implies that \cref{lem:feasible-relax} can indeed be applied to $z$ in the algorithm to obtain a feasible decomposition of $\rho$, thus proving \cref{thm:decomposition-algo}.

We introduce the following notation. Let $S^{(i)}$ and $\varepsilon^{(i)}$ denote the set $S$ and the value of $\varepsilon$ chosen in the $i$th iteration of the while loop in the algorithm. 
Let further $\pi^{(i)}$ and $\rho^{(i)}$ denote the values of $\bar{\pi}$ and $\bar{\rho}$ at the beginning of the $i$th iteration (in particular, $\pi^{(1)} = \pi$ and $\rho^{(1)} = \rho$).
Let~$K \subseteq \mathbb{N}$ denote the set of iterations of the while loop.
If the algorithm terminates, $K = \{1, \dots, \ell\}$, where $\ell \in \mathbb{N}$ denotes the number of iterations.
In that case, let $\rho^{(\ell+1)}$ and $\pi^{(\ell+1)}$ denote the state of $\bar{\rho}$ and $\bar{\pi}$ after termination.

Using this notation, we can establish the following three invariants, which follow directly from the construction of $\rho^{(i)}$ and $\varepsilon^{(i)}$ in the algorithm and the defining properties of the ASC $S^{(i)}$.

\begin{restatable}[$\clubsuit$]{lemma}{restateLemCInvariants}\label{lem:C-invariants}
    For all $k \in K$, the following statements hold true:\\[-15pt]
    \begin{enumerate}[label=\textup{(}\alph*\textup{)},align=left]
        \item $\rho^{(k+1)}_e = \rho_e - \sum_{i = 1}^{k} \incidence[e]{S^{(i)}} \cdot \varepsilon^{(i)} \geq 0$ for all $e \in E$,
        \label{inv:rho}
        \item $\sum_{e \in P} \rho^{(k + 1)}_e \geq \pi^{(k + 1)}_P = \pi_P - \sum_{i = 1}^{k} \varepsilon^{(i)}$ for all $P \in \sets$, and
        \label{inv:pi}
        \item $S^{(k)} \neq \emptyset$ and $\varepsilon^{(k)} > 0$.
        \label{inv:S-eps}
    \end{enumerate}
\end{restatable}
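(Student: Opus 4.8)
The plan is to prove \ref{inv:rho}, \ref{inv:pi}, and \ref{inv:S-eps} simultaneously by induction on $k \in K$, processing the iterations in order, with induction hypothesis that $\rho^{(k)}_e \geq 0$ for all $e \in E$ and $\sum_{e \in P}\rho^{(k)}_e \geq \pi^{(k)}_P$ for all $P \in \sets$; the base case $k = 1$ is exactly $\rho^{(1)} = \rho \in \polystar$. The two equalities in \ref{inv:rho} and \ref{inv:pi} are not the substance of the lemma: they follow by telescoping the per-iteration updates $\rho^{(i+1)}_e = \rho^{(i)}_e - \indicator{e \in S^{(i)}}\,\varepsilon^{(i)}$ and $\pi^{(i+1)}_P = \pi^{(i)}_P - \varepsilon^{(i)}$ performed by the algorithm. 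The real work lies in the inequalities and in \ref{inv:S-eps}, which I would establish within a fixed iteration $k$ in the order \ref{inv:S-eps}, then \ref{inv:rho}, then \ref{inv:pi}.

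For \ref{inv:S-eps}: since iteration $k$ is executed, the loop condition gives $\max_{P \in \sets}\pi^{(k)}_P > 0$, so $\{Q \in \sets \st \pi^{(k)}_Q > 0\}$ is non-empty and, by \cref{lem:dominance-transitive}, contains a non-dominated element $\hat{P}$; property \ref{prop:cut:S-hits-necessary-sets} of the ASC $S^{(k)}$ then forces $S^{(k)} \cap \hat{P} \neq \emptyset$, in particular $S^{(k)} \neq \emptyset$. I would then argue that each of the three terms defining $\varepsilon^{(k)} = \varepsilon_{\pi^{(k)},\rho^{(k)}}(S^{(k)})$ is strictly positive: $\min_{e \in S^{(k)}}\rho^{(k)}_e > 0$ by property \ref{prop:cut:positive} ($S^{(k)} \subseteq E_{\rho^{(k)}}$), $\max_{P \in \sets}\pi^{(k)}_P > 0$ by the loop condition, and $\delta_{\pi^{(k)},\rho^{(k)}}(S^{(k)}) > 0$ because for every $P$ with $|P \cap S^{(k)}| \geq 2$ the numerator $\pi^{(k)}_P - \sum_{e \in P}\rho^{(k)}_e$ is $\leq 0$ by the induction hypothesis and in fact $< 0$, since equality would place $P$ in $\sets^=_{\pi^{(k)},\rho^{(k)}}$ and contradict property \ref{prop:cut:S-tight-sets}; dividing this by the negative quantity $1 - |P \cap S^{(k)}|$ yields a strictly positive ratio, and the infimum over the finitely many such $P$ (or $+\infty$ if there are none) is therefore strictly positive. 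Hence $\varepsilon^{(k)} > 0$.

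For \ref{inv:rho}: from $\rho^{(k+1)}_e = \rho^{(k)}_e - \indicator{e \in S^{(k)}}\,\varepsilon^{(k)}$, the value is unchanged when $e \notin S^{(k)}$, and is at least $\rho^{(k)}_e - \min_{e' \in S^{(k)}}\rho^{(k)}_{e'} \geq 0$ when $e \in S^{(k)}$, using $\varepsilon^{(k)} \leq \min_{e' \in S^{(k)}}\rho^{(k)}_{e'}$. For \ref{inv:pi}: plugging in the updates, the target $\sum_{e \in P}\rho^{(k+1)}_e \geq \pi^{(k+1)}_P$ is equivalent to $\sum_{e \in P}\rho^{(k)}_e - \pi^{(k)}_P \geq (|P \cap S^{(k)}| - 1)\,\varepsilon^{(k)}$; if $|P \cap S^{(k)}| \leq 1$ the right-hand side is $\leq 0$ while the left-hand side is $\geq 0$ by the induction hypothesis, and if $|P \cap S^{(k)}| \geq 2$ it follows from $\varepsilon^{(k)} \leq \delta_{\pi^{(k)},\rho^{(k)}}(S^{(k)}) \leq \frac{\pi^{(k)}_P - \sum_{e \in P}\rho^{(k)}_e}{1 - |P \cap S^{(k)}|}$ by multiplying through by the negative denominator. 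Since \ref{inv:rho} and \ref{inv:pi} for $k$ reproduce the induction hypothesis at $k+1$, this closes the induction.

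I do not expect a real obstacle; the only delicate points are the strict inequality $\pi^{(k)}_P - \sum_{e \in P}\rho^{(k)}_e < 0$ in the $\delta$-argument for \ref{inv:S-eps}, which is exactly where property \ref{prop:cut:S-tight-sets} of the ASC enters, and being careful with the direction of the inequality when clearing the negative denominator $1 - |P \cap S^{(k)}|$ in the proof of \ref{inv:pi}.
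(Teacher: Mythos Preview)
Your proposal is correct and follows essentially the same approach as the paper: the same telescoping for the equalities, the same case split on $|P\cap S^{(k)}|$ for \ref{inv:pi}, and the same use of \ref{prop:cut:positive}, the loop condition, and \ref{prop:cut:S-tight-sets} (together with the feasibility inequality at step $k$) to show each of the three terms defining $\varepsilon^{(k)}$ is strictly positive in \ref{inv:S-eps}. The only organizational difference is that the paper proves \ref{inv:rho}, \ref{inv:pi}, \ref{inv:S-eps} in three separate inductions (with \ref{inv:S-eps} invoking the already-established \ref{inv:pi}), whereas you package everything into one simultaneous induction; this is not a substantive difference.
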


The next lemma shows that the while loop indeed terminates after $\mathcal{O}(|E|^2)$ iterations.
Its proof follows from the fact that in every non-final iteration $k \in K$, there is an element $e \in S^{(k)}$ for which the value of $\bar{\rho}_e$ drops to $0$, or there are two elements $e, e' \in S^{(k)}$ such that $e, e' \in P$ for some $P \in \sets^=_{\pi^{(k+1)},\rho^{(k+1)}}$.
It can be shown that the same pair $e, e'$ cannot appear in two distinct iterations of the latter type, from which we obtain the following bound.

\begin{restatable}[$\clubsuit$]{lemma}{restateLemCIterations}\label{lem:C-iterations}
    The while loop in~\cref{alg:decomposition} terminates after at most $\binom{|E|}{2} + |E|$ iterations, i.e., $K = \{1, \dots, \ell\}$ with $\ell \leq \binom{|E|}{2} + |E|$.
\end{restatable}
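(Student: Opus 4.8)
The plan is to classify the non-final iterations into two types and bound the number of each. First I would use Lemma \ref{lem:C-invariants}\ref{inv:S-eps}, which guarantees $\varepsilon^{(k)} > 0$ in every iteration, together with the definition $\varepsilon^{(k)} = \varepsilon_{\pi^{(k)},\rho^{(k)}}(S^{(k)}) = \min\{\min_{e \in S^{(k)}} \rho^{(k)}_e,\ \max_{P \in \sets} \pi^{(k)}_P,\ \delta_{\pi^{(k)},\rho^{(k)}}(S^{(k)})\}$. Since the loop is still running, $\max_{P} \pi^{(k)}_P > 0$, so the minimum is attained by one of the other two terms. This gives the dichotomy: either (Type 1) $\varepsilon^{(k)} = \min_{e \in S^{(k)}} \rho^{(k)}_e$, in which case there is some $e \in S^{(k)}$ with $\rho^{(k+1)}_e = \rho^{(k)}_e - \varepsilon^{(k)} = 0$; or (Type 2) $\varepsilon^{(k)} = \delta_{\pi^{(k)},\rho^{(k)}}(S^{(k)})$, in which case — using that the infimum in the definition of $\delta$ is attained (it is a min over the finite set $\sets$, and the set $\{P : |P \cap S^{(k)}| > 1\}$ is nonempty precisely when this term is the binding one and finite) — there is some $P \in \sets$ with $|P \cap S^{(k)}| \geq 2$ and $\frac{\pi^{(k)}_P - \sum_{e\in P}\rho^{(k)}_e}{1 - |P \cap S^{(k)}|} = \varepsilon^{(k)}$; rearranging and using Lemma \ref{lem:C-invariants}\ref{inv:rho},\ref{inv:pi} shows $\sum_{e \in P}\rho^{(k+1)}_e = \pi^{(k+1)}_P$, i.e. $P \in \sets^=_{\pi^{(k+1)},\rho^{(k+1)}}$, and this $P$ contains two distinct elements $e, e' \in S^{(k)}$.

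Next I would bound each type. For Type 1 iterations: once $\rho^{(k+1)}_e = 0$, Lemma \ref{lem:C-invariants}\ref{inv:rho} shows the sequence $\rho^{(i)}_e$ is non-increasing and nonnegative, so it stays $0$ forever; hence by property \ref{prop:cut:positive} of an ASC, $e \notin S^{(i)}$ for all $i > k$, so $e$ can never again cause a drop to zero. Thus each element $e \in E$ accounts for at most one Type 1 iteration, giving at most $|E|$ of them. For Type 2 iterations: I claim the unordered pair $\{e, e'\}$ identified above is distinct across Type 2 iterations. Suppose iteration $k$ is Type 2 with witness pair $\{e,e'\} \subseteq P \cap S^{(k)}$ and $P \in \sets^=_{\pi^{(k+1)},\rho^{(k+1)}}$. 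Then in every later iteration $j > k$, since $\sum_{e' \in P}\rho^{(j)}_{e'} \geq \pi^{(j)}_P$ (Lemma \ref{lem:C-invariants}\ref{inv:pi}) and both $\sum_{e'\in P}\rho^{(\cdot)}_{e'}$ and $\pi^{(\cdot)}_P$ decrease by the same amount whenever $S^{(i)}$ is disjoint from $P$ but the left side decreases by more when $|S^{(i)} \cap P| \geq 2$, one argues that $P$ remains in $\sets^=$ (or the gap only grows, but it was zero and both the slack inequality and the ASC property \ref{prop:cut:S-tight-sets} constrain things). The cleanest route: once $P \in \sets^=_{\pi^{(k+1)},\rho^{(k+1)}}$, property \ref{prop:cut:S-tight-sets} forces $|S^{(j)} \cap P| \leq 1$ for all subsequent iterations $j$ in which $P$ is still tight, so $\delta$ cannot be attained by $P$ again; and if $P$ leaves $\sets^=$ it is because the inequality $\sum \rho \geq \pi$ became strict, which by the synchronized subtraction can only happen via an iteration with $|S \cap P| \geq 2$ — but that is exactly what \ref{prop:cut:S-tight-sets} forbids while $P$ is tight. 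So $P$ stays tight and the pair $\{e,e'\}$ cannot recur for $P$; a short additional argument (or a direct counting over all pairs in $E$) shows no two Type 2 iterations share the same pair, bounding them by $\binom{|E|}{2}$.

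Summing, $\ell \leq \binom{|E|}{2} + |E|$, as claimed. The main obstacle I anticipate is the Type 2 bound: making rigorous the claim that a set $P$ which becomes tight (an element of $\sets^=$) stays tight for all subsequent iterations, and that consequently each witnessing pair is used at most once. This requires carefully tracking the synchronized updates $\rho^{(i+1)} = \rho^{(i)} - \varepsilon^{(i)}\mathbbm{1}_{S^{(i)}}$ versus $\pi^{(i+1)} = \pi^{(i)} - \varepsilon^{(i)}$ and invoking ASC property \ref{prop:cut:S-tight-sets} in precisely the right place — the subtlety is that \ref{prop:cut:S-tight-sets} is stated with respect to the \emph{current} residuals, so one needs the invariant "$P$ tight now $\Rightarrow$ $P$ tight henceforth" to bootstrap itself iteration by iteration. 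Everything else is routine bookkeeping on top of Lemma \ref{lem:C-invariants}.
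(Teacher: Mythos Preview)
Your overall strategy and the Type~1 bound match the paper. The Type~2 argument, however, has a genuine gap. You claim that once $P$ enters $\sets^=_{\bar\pi,\bar\rho}$ it stays there because leaving ``can only happen via an iteration with $|S \cap P| \geq 2$.'' This is backwards: the slack $\sum_{e \in P}\bar\rho_e - \bar\pi_P$ changes by $(1 - |P \cap S^{(j)}|)\,\varepsilon^{(j)}$ in iteration $j$, so it \emph{increases} (and $P$ leaves $\sets^=_{\bar\pi,\bar\rho}$) precisely when $|P \cap S^{(j)}| = 0$ --- and nothing in \ref{prop:cut:S-tight-sets} forbids $P \cap S^{(j)} = \emptyset$ for a tight $P$. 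Once $P$ is no longer tight, there is no obstacle from $P$ alone to having both $e,e' \in S^{(j')}$ in a later iteration $j'$, so the ``same pair cannot recur'' claim is unsupported.

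The paper's remedy is to track not the fixed set $P$ but \emph{some} tight set containing $\{e,e'\}$, possibly a different one at each step. Given a tight $Q \ni e,e'$ at step $j$, pass to a $\sqsubseteq_{\pi^{(j)},\rho^{(j)}}$-non-dominated $P'$ with $Q \sqsubseteq_{\pi^{(j)},\rho^{(j)}} P'$; one checks that $P'$ is also tight and that $e,e' \in P'$. Since $P'$ is non-dominated and $\pi^{(j)}_{P'} \geq \rho^{(j)}_e + \rho^{(j)}_{e'} > 0$, property \ref{prop:cut:S-hits-necessary-sets} together with \ref{prop:cut:S-tight-sets} forces $|P' \cap S^{(j)}| = 1$, so $P'$ remains tight at step $j+1$ and the induction continues. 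Your proposal never invokes \ref{prop:cut:S-hits-necessary-sets} or the dominance relation, and that is exactly the missing ingredient. (A minor separate point: the inference ``$\max_P \pi^{(k)}_P > 0$, so the minimum is attained by one of the other two terms'' is not valid as written; the correct reason is that if $\varepsilon^{(k)} = \max_P \pi^{(k)}_P$ then that iteration is the final one, which the paper absorbs into the $|E|$ budget.)
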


The termination criterion of the while loop implies the following lemma.

\begin{restatable}[$\clubsuit$]{lemma}{restateLemCTotalEpsilon}\label{lem:C-total-epsilon}
    It holds that $\sum_{i=1}^{\ell} \varepsilon^{(i)} =  \max_{P \in \sets} \pi_P$.
\end{restatable}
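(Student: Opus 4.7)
The plan is to telescope the sequence $(\max_{P \in \sets} \pi^{(k)}_P)_{k}$. Define $M_k := \max_{P \in \sets} \pi^{(k)}_P$ for $k \in \{1, \dots, \ell+1\}$. Invariant \ref{lem:C-invariants}\ref{inv:pi} says that $\pi^{(k+1)}_P = \pi_P - \sum_{i=1}^{k} \varepsilon^{(i)}$ for every $P \in \sets$, so subtracting the same constant from every coordinate of $\pi$ gives
\begin{equation*}
    M_{k+1} \;=\; M_1 - \sum_{i=1}^{k} \varepsilon^{(i)} \qquad \text{for every } k \in K.
\end{equation*}
Taking $k = \ell$ at the end of the loop therefore reduces the lemma to showing $M_{\ell+1} = 0$, since then $\sum_{i=1}^{\ell} \varepsilon^{(i)} = M_1 - M_{\ell+1} = \max_{P \in \sets} \pi_P$.

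Next I would pin down $M_{\ell+1}$ from two sides. The upper bound $M_{\ell+1} \leq 0$ is exactly the termination criterion of the while loop: the loop exits precisely when $\max_{P \in \sets} \bar{\pi}_P \leq 0$. For the lower bound $M_{\ell+1} \geq 0$, I would use the definition
\begin{equation*}
    \varepsilon^{(\ell)} \;=\; \varepsilon_{\pi^{(\ell)},\rho^{(\ell)}}(S^{(\ell)}) \;\leq\; \max_{P \in \sets} \pi^{(\ell)}_P \;=\; M_{\ell},
\end{equation*}
which is immediate because $\max_{P \in \sets} \bar{\pi}_P$ appears as one of the three quantities in the definition of $\varepsilon_{\bar{\pi},\bar{\rho}}(S)$. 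Combining this with $M_{\ell+1} = M_\ell - \varepsilon^{(\ell)}$ yields $M_{\ell+1} \geq 0$. Together the two bounds give $M_{\ell+1} = 0$ as required.

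The only subtlety is the degenerate case where the while loop does not execute at all, i.e., $\ell = 0$. Then $M_1 = \max_{P \in \sets} \pi_P \leq 0$ by the termination criterion, and the empty sum equals $0$; the claimed equality holds precisely when $\max_{P \in \sets} \pi_P = 0$, which is the only non-trivial case to consider (if $\max_{P \in \sets} \pi_P < 0$, the requirements are vacuously met by the empty decomposition and the statement is either read under the convention $\max(\cdot, 0)$ or restricted to meaningful instances). I expect this bookkeeping of the edge case to be the only non-mechanical part; the remainder is a direct consequence of \cref{lem:C-invariants} and the definition of $\varepsilon_{\bar{\pi},\bar{\rho}}(S)$.
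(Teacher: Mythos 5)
Your proposal is correct and follows essentially the same route as the paper's proof: both use \cref{lem:C-invariants}\ref{inv:pi} to write $\max_{P \in \sets} \pi^{(\ell)}_P = \max_{P \in \sets} \pi_P - \sum_{i=1}^{\ell-1} \varepsilon^{(i)}$, then sandwich the final residual maximum between the termination criterion (giving $\leq 0$) and the bound $\varepsilon^{(\ell)} \leq \max_{P \in \sets} \pi^{(\ell)}_P$ from the definition of $\varepsilon_{\bar{\pi},\bar{\rho}}(S)$ (giving $\geq 0$). Your explicit treatment of the degenerate case $\ell = 0$ is a minor point the paper leaves implicit, but it does not change the argument.
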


Finally, we can use the properties of the ASCs~$S^{(k)}$ to show that $z \in \polyDist$.

\begin{restatable}{lemma}{restateLemCCovering}\label{lem:C-covering}
    After termination of the while loop, it holds that $z \in \polyDist$.
\end{restatable}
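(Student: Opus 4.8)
The plan is to verify the two defining conditions of $\polyDist$ separately. Membership $z\in[0,1]^{2^E}$ with $\sum_{S\subseteq E}z_S=1$ is routine: every iteration transfers mass $\varepsilon^{(i)}>0$ (\cref{lem:C-invariants}\ref{inv:S-eps}) from $z_\emptyset$ to $z_{S^{(i)}}$, so $\sum_S z_S$ stays equal to $1$ and all entries stay nonnegative, while by \cref{lem:C-total-epsilon} the terminal value $z_\emptyset = 1-\sum_{i=1}^{\ell}\varepsilon^{(i)} = 1-\max_{P\in\sets}\pi_P$ is nonnegative because $\pi\le 1$; hence $z_S\in[0,1]$ for all $S$. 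For the covering inequalities, note that after termination $z_S=\sum_{i\,:\,S^{(i)}=S}\varepsilon^{(i)}$ for $S\neq\emptyset$, so for a fixed $P\in\sets$ one has $\sum_{S\,:\,S\cap P\neq\emptyset}z_S = \sum_{i\,:\,S^{(i)}\cap P\neq\emptyset}\varepsilon^{(i)} =: c_P$. If $\pi_P\le 0$ this is trivially at least $\pi_P$, so it remains to show $c_P\ge\pi_P$ whenever $\pi_P>0$.

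The core of the proof will be an induction over the iteration index $k$ establishing, for every $P\in\sets$, the invariant
$$ c^{(k)}_P + \max\{0,\ \hat\Pi^{(k)}_P\} \ \ge\ \pi_P, \qquad\text{where}\qquad \hat\Pi^{(k)}_P \ :=\ \max_{Q\in\sets\,:\,P\sqsubseteq_{\pi^{(k)},\rho^{(k)}}Q}\Bigl(\pi^{(k)}_Q - \textstyle\sum_{e\in Q\setminus P}\rho^{(k)}_e\Bigr), $$
and $c^{(k)}_P := \sum_{i<k,\,S^{(i)}\cap P\neq\emptyset}\varepsilon^{(i)}$. Taking $Q=P$ in the maximum gives $\hat\Pi^{(k)}_P\ge\pi^{(k)}_P$, so the invariant holds at $k=1$ (as $c^{(1)}_P=0$ and $\hat\Pi^{(1)}_P\ge\pi_P$), and at termination it yields the claim: by \cref{lem:C-invariants}\ref{inv:rho},\ref{inv:pi} and \cref{lem:C-total-epsilon} we have $\pi^{(\ell+1)}_Q = \pi_Q - \max_{P'}\pi_{P'}\le 0$ and $\rho^{(\ell+1)}\ge 0$, hence every term in the maximum is $\le 0$, so $\hat\Pi^{(\ell+1)}_P\le 0$ and the invariant becomes $c_P = c^{(\ell+1)}_P\ge\pi_P$.

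For the inductive step from $k$ to $k+1$ I would first record the identity that, whenever a set $Q$ dominates $P$ with respect to both $(\pi^{(k)},\rho^{(k)})$ and $(\pi^{(k+1)},\rho^{(k+1)})$, the quantity $\pi_Q-\sum_{e\in Q\setminus P}\rho_e$ changes by exactly $\bigl(|S^{(k)}\cap(Q\setminus P)|-1\bigr)\varepsilon^{(k)}$. Combined with the observation that a $\sqsubseteq_{\pi^{(k)},\rho^{(k)}}$-maximal maximizer $Q^{(k)}$ of $\hat\Pi^{(k)}_P$ is non-dominated (a short computation from the definition of $\sqsubseteq_{\pi^{(k)},\rho^{(k)}}$ shows that any set dominating such a maximizer is itself a maximizer) and stays dominating after the update, this gives $\hat\Pi^{(k+1)}_P\ge\hat\Pi^{(k)}_P+\bigl(|S^{(k)}\cap(Q^{(k)}\setminus P)|-1\bigr)\varepsilon^{(k)}\ge\hat\Pi^{(k)}_P-\varepsilon^{(k)}$. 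I would then distinguish three cases. If $S^{(k)}\cap P\neq\emptyset$, then $c^{(k)}_P$ increases by $\varepsilon^{(k)}$, which absorbs the possible decrease of $\varepsilon^{(k)}$ in $\max\{0,\hat\Pi^{(k)}_P\}$. If $S^{(k)}\cap P=\emptyset$ and $\hat\Pi^{(k)}_P\le 0$, the invariant at step $k$ already reads $c^{(k+1)}_P = c^{(k)}_P\ge\pi_P$. The remaining case $S^{(k)}\cap P=\emptyset$, $\hat\Pi^{(k)}_P>0$ is where the ASC properties enter: here $\pi^{(k)}_{Q^{(k)}} = \hat\Pi^{(k)}_P+\sum_{e\in Q^{(k)}\setminus P}\rho^{(k)}_e>0$, so by \ref{prop:cut:S-hits-necessary-sets} the ASC $S^{(k)}$ meets the non-dominated set $Q^{(k)}$; since it misses $P$, this intersection is contained in $Q^{(k)}\setminus P$, so $|S^{(k)}\cap(Q^{(k)}\setminus P)|\ge 1$, whence $\hat\Pi^{(k+1)}_P\ge\hat\Pi^{(k)}_P$ and the invariant survives with $c^{(k+1)}_P=c^{(k)}_P$.

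I expect the main difficulty to lie in identifying the right potential. The naive choice $\max\{0,\max_{Q\,:\,P\sqsubseteq Q}\pi^{(k)}_Q\}$ fails: it decreases by $\varepsilon^{(k)}$ in every iteration because all residual requirements $\pi^{(k)}_Q$ drop uniformly, and this loss is not compensated when the ASC misses $P$. Subtracting $\sum_{e\in Q\setminus P}\rho^{(k)}_e$ inside the maximum repairs exactly this: whenever $P$ is ``bypassed'', property \ref{prop:cut:S-hits-necessary-sets} forces the ASC to hit a dominating set $Q^{(k)}$ \emph{outside} $P$, and the corresponding drop of $\rho^{(k)}$ on $Q^{(k)}\setminus P$ keeps $\hat\Pi^{(k)}_P$ from decreasing. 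The second point needing care is the claim that the maximizer defining $\hat\Pi^{(k)}_P$ can be taken non-dominated, which is what licenses the appeal to \ref{prop:cut:S-hits-necessary-sets} in that final case.
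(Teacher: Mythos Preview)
Your argument is correct. The membership $z\in[0,1]^{2^E}$ with $\sum_S z_S=1$ is handled exactly as in the paper. For the covering constraints, your forward induction with the potential $\hat\Pi^{(k)}_P$ works: the key steps---that dominance $P\sqsubseteq_{\pi^{(k)},\rho^{(k)}}Q$ is preserved under the update, that any set dominating a maximizer of $\hat\Pi^{(k)}_P$ is itself a maximizer (so a $\sqsubseteq$-maximal maximizer is non-dominated), and the change identity $(|S^{(k)}\cap(Q\setminus P)|-1)\varepsilon^{(k)}$---are all valid, and your three cases close the induction.

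The paper takes a different and somewhat shorter route: it runs a \emph{backward} induction on $k$, proving directly that $\sum_{i=k}^{\ell}\indicator{P\cap S^{(i)}\neq\emptyset}\,\varepsilon^{(i)}\ge\pi^{(k)}_P$ for all $P$. In the case $P\cap S^{(k)}=\emptyset$ with $\pi^{(k)}_P>0$, it ascends the dominance order once to a non-dominated $Q$ with $Q\cap S^{(k)}\neq\emptyset$, applies the already-proven first case to $Q$, and then uses the chain
\[
\textstyle\sum_{i\ge k}\indicator{P\cap S^{(i)}\neq\emptyset}\varepsilon^{(i)}\;\ge\;\pi^{(k)}_Q-\sum_{e\in Q\setminus P}\rho^{(k)}_e\;\ge\;\pi^{(k)}_P,
\]
where the first inequality exploits $\rho^{(k)}_e\ge\sum_{i\ge k}\indicator{e\in S^{(i)}}\varepsilon^{(i)}$ from \cref{lem:C-invariants}\ref{inv:rho}. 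The backward direction is what makes this clean: the ``future coverage'' is directly available, so no potential needs to be tracked, no maximizer needs to be carried across iterations, and no argument about dominance being preserved is required. Your forward potential $\hat\Pi^{(k)}_P$ is essentially the quantity $\pi^{(k)}_Q-\sum_{e\in Q\setminus P}\rho^{(k)}_e$ that appears in the paper's chain, but maintained as an invariant rather than invoked once at the right moment; both encode the same transfer mechanism via dominance, and the paper's packaging avoids the extra bookkeeping.
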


\begin{proof}
    Note that $z_S = \sum_{i=1}^{\ell} \indicator{S = S^{(i)}} \cdot \varepsilon^{(i)} \geq 0$ for $S \subseteq E$ with $S \neq \emptyset$ and that $z_{\emptyset} = 1 - \sum_{i=1}^{\ell} \varepsilon^{(i)} \geq 0$, where the nonnegativity follows from \cref{lem:C-invariants}\ref{inv:S-eps} and \cref{lem:C-total-epsilon} with $\max_{P \in \sets} \pi_P \leq 1$, respectively.
    This also implies $\sum_{S \subseteq E} z_S = 1$.
    
    We will prove that  $\sum_{i = k}^{\ell} \mathbbm{1}_{P \cap S^{(i)} \neq \emptyset} \cdot \varepsilon^{(i)} \geq \pi^{(k)}_P$ for all $k \in [\ell+1]$ and $P \in \sets$, which, for $k = 1$, implies
    $\sum_{S : P \cap S \neq \emptyset} z_S  = \sum_{i = 1}^{\ell} \indicator{P \cap S^{(i)} \neq \emptyset} \cdot \varepsilon^{(i)} \geq \pi^{(1)}_P = \pi_P$ and hence
    \mbox{$z \in \polyDist$}.
    We prove the above statement by induction on $k$, starting from $k = \ell + 1$ and going down to $k = 1$. For the base case $k = \ell + 1$, observe that the left-hand side is $0$ and $\pi^{\ell+1}_P \leq 0$ by termination criterion of the while loop.

For the induction step, let $k \in [\ell]$, assuming that the statement is already established for $k + 1$ and let $P \in \sets$. 
We distinguish two cases.
\begin{itemize}
    \item Case $P \cap S^{(k)} \neq \emptyset$: We can apply the induction hypothesis to obtain\\
    \mbox{$\sum_{i = k}^{\ell} \mathbbm{1}_{P \cap S^{(i)} \neq \emptyset} \cdot \varepsilon^{(i)} 
    = \varepsilon^{(k)} + \sum_{i = k + 1}^{\ell} \mathbbm{1}_{P \cap S^{(i)} \neq \emptyset} \cdot \varepsilon^{(i)}
    \geq \varepsilon^{(k)} + \pi^{(k+1)}_P = \pi^{(k)}_P$}.
    \item Case $P \cap S^{(k)} = \emptyset$: 
    If $\pi^{(k)}_P \leq 0$ then the desired statement follows from $\varepsilon^{(i)} > 0$ for all $i \in [\ell]$ by \cref{lem:C-invariants}\ref{inv:S-eps}. 
    Thus, we can assume $\pi^{(k)}_P > 0$.
    By property~\ref{prop:cut:S-hits-necessary-sets}, there is $Q \in \sets$ with $P \sqsubseteq_{\pi^{(k)}, \rho^{(k)}} Q$ and $Q \cap S^{(k)} \neq \emptyset$.
    Hence we can apply the induction step proven in the first case to $Q$, yielding
    $\sum_{i = k}^{\ell} \mathbbm{1}_{Q \cap S^{(i)} \neq \emptyset} \cdot \varepsilon^{(i)} \geq \pi^{(k)}_Q$.
    From this, we conclude that
    \begin{align*}
        \textstyle \sum_{i = k}^{\ell} \mathbbm{1}_{P \cap S^{(i)} \neq \emptyset} \cdot \varepsilon^{(i)} &
        \textstyle 
        \;\geq\; \sum_{i = k}^{\ell} \mathbbm{1}_{P \cap Q \cap S^{(i)} \neq \emptyset} \cdot \varepsilon^{(i)}\\
        & \textstyle  \;\geq\; \pi^{(k)}_Q - \sum_{i = k}^{\ell} \mathbbm{1}_{(Q \setminus P) \cap S^{(i)} \neq \emptyset} \cdot \varepsilon^{(i)}\\
        & \textstyle  \;\geq\; \pi^{(k)}_Q - \sum_{e \in Q \setminus P} \rho^{(k)}_e \;\geq\; \pi^{(k)}_P,
    \end{align*}
    where the first and second inequality use $\varepsilon^{(i)} > 0$ by \cref{lem:C-invariants}\ref{inv:S-eps}, the third inequality uses
    $\rho^{(k)}_e = \sum_{i = k}^{\ell} \incidence[e]{ S^{(i)}} \cdot \varepsilon^{(i)}$ by \cref{lem:C-invariants}\ref{inv:rho}
    and the final inequality uses $P \sqsubseteq_{\pi^{(k)}, \rho^{(k)}} Q$.\hfill$\square$
\end{itemize}
\end{proof}

\section{Abstract Networks Under Weak Conservation Law}
\label{sec:abstract-networks}

A set system $(E, \sets)$ is an \emph{abstract network} if for every $P \in \sets$, there is an order $\preceq_P$ of the elements in $P$ and for every $P, Q \in \sets$ and $e \in P \cap Q$ there is $R \in \sets$ with $R \subseteq \{p \in P \st p \preceq_P e\} \cup \{q \in Q \st e \preceq q\}$.
We use the notation $P \times_e Q$ to denote an arbitrary but fixed choice of such an $R$.
Note that the definition of abstract networks does not impose any requirements on the order $\preceq_{P \times_e Q}$. In particular, it does not need to be consistent with $\preceq_P$ and $\preceq_Q$.

Abstract networks were introduced by \citet{hoffman1974generalization} in an effort to encapsulate the essential properties of systems of paths in classic networks that enable the proof of \citeauthor{ford1956maximal}'s~\citep{ford1956maximal} max-flow/min-cut theorem.
Indeed, the set of $s$-$t$-paths in a digraph constitutes a special case of an abstract network (however, see~\citep{kappmeier2014abstract} for examples of abstract networks that do not arise in this way) and the elements of $\sets$ are therefore also referred to as \emph{abstract paths}. 
The \emph{maximum weighted abstract flow (MWAF)} problem and the \emph{minimum weighted abstract cut (MWAC)} problem correspond to the linear programs
\begin{align*}
    \begin{array}{rrlr}
        \max \  & \sum_{P \in \sets} \pi_P  & x_P & \\
        \text{s.t.} \  & \sum_{P : e \in P} x_P & \leq u_e & \forall e \in E \\
        & x & \geq 0 &
    \end{array}
    \quad
    \begin{array}{rrlr}
        \min \  & \sum_{e \in E} u_e  & y_e & \\
        \text{s.t.} \  & \sum_{e \in P} y_e & \geq \pi_P & \forall P \in \sets \\
        & y & \geq 0 &
    \end{array}
\end{align*}
where $u \in \mathbb{R}_+^E$ is a capacity vector and $\pi$ determines the reward for each unit of flow send along the abstract path $P \in \sets$.

\citet{hoffman1974generalization} proved that MWAC is totally dual integral, if the reward function $\pi$ fulfils the following weak conservation law:
\begin{align}
    {\pi_P + \pi_Q \geq \pi_{P \times_e Q} + \pi_{Q \times_e P}} \quad \forall\, P, Q \in \mathcal{P}, e \in P \cap Q. \tag{C'}\label{eq:conservation-weak}
\end{align}
\citet{mccormick1996polynomial} complemented this result by a combinatorial algorithm for solving MWAF when $\pi \equiv 1$. This was extended by \citet{martens2008polynomial} to a combinatorial algorithm for solving MWAF with arbitrary $\pi$ fulfilling \eqref{eq:conservation-weak} when $\pi$ is given a separation oracle for the constraints of MWAC.

A combinatorial algorithm for marginal decomposition in abstract networks under affine requirements \eqref{eq:affine} based on a generalization of Dijkstra's shortest-path algorithm is presented in \citep{matuschke2023decomposition-full}.
Here, we show \eqref{eq:cond}-sufficiency for the more general setting~\eqref{eq:conservation-weak} by showing that ASCs can be constructed in this setting.

\begin{theorem}\label{thm:abstract-networks}
    Let $(E, \sets)$ be an abstract network, let $\pi$ fulfil \eqref{eq:conservation-weak} and $\rho \in \polystar$.
    Then $S := \{e \in E_{\rho} \st \text{there is no } P \in \sets_{\pi,\rho}^= \text{ and } p \in P \cap E_{\rho} \text{ with } p \prec_P e\}$ is an ASC for $\pi$ and $\rho$.
\end{theorem}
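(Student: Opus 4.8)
The plan is to verify the three defining conditions \ref{prop:cut:positive}--\ref{prop:cut:S-hits-necessary-sets} of an ASC for the set $S$ given in the statement. Condition \ref{prop:cut:positive} is immediate, since $S \subseteq E_\rho$ by construction. For \ref{prop:cut:S-tight-sets}, take $P \in \sets^=_{\pi,\rho}$ and suppose $e, e' \in S \cap P$. Applying the defining property of $S$ with $P$ itself in the role of the tight set (legitimate since $P \in \sets^=_{\pi,\rho}$ and $e \in P \cap E_\rho$), we learn that no $p \in P \cap E_\rho$ satisfies $p \prec_P e$; that is, $e$ is the $\preceq_P$-minimal element of $P \cap E_\rho$. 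The same holds for $e'$, hence $e = e'$, so $|S \cap P| \le 1$.

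The substantial part is condition \ref{prop:cut:S-hits-necessary-sets}, which I would prove by contradiction. Suppose $P \in \sets$ is non-dominated with respect to $\pi$ and $\rho$ in $\{Q \in \sets : \pi_Q > 0\}$ (so $\pi_P > 0$) but $P \cap S = \emptyset$. Since $\rho \in \polystar$ forces $\sum_{e \in P}\rho_e \ge \pi_P > 0$, the set $P \cap E_\rho$ is nonempty; let $e^*$ be its $\preceq_P$-minimal element. As $e^* \in P \cap E_\rho$ and $e^* \notin S$, there are a tight set $Q \in \sets^=_{\pi,\rho}$ with $e^* \in Q$ and an element $q \in Q \cap E_\rho$ with $q \prec_Q e^*$. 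To reach a contradiction it suffices to exhibit some $Q' \neq P$ with $P \sqsubseteq_{\pi,\rho} Q'$, because then $\pi_{Q'} > \pi_P > 0$, so $Q'$ lies in $\{Q \in \sets : \pi_Q > 0\} \setminus \{P\}$ and dominates $P$.

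Mirroring Dahan et al.'s construction for DAGs, the first $Q'$ to try is the spliced abstract path $R := Q \times_{e^*} P$. By the abstract-network axiom, $R \subseteq \{r \in Q : r \preceq_Q e^*\} \cup \{r \in P : e^* \preceq_P r\}$; thus $R$ retains the suffix of $P$ from $e^*$, which by minimality of $e^*$ already contains all of $P \cap E_\rho$, while replacing the prefix of $P$ before $e^*$ (which carries no $\rho$-mass) by a prefix of $Q$ that avoids the positive element $q$. One then wants to combine the weak conservation law $\pi_P + \pi_Q \ge \pi_R + \pi_{P\times_{e^*}Q}$ with the bounds $\pi_{P\times_{e^*}Q} \le \sum_{e \in P\times_{e^*}Q}\rho_e$ and $\sum_{e \in R\setminus P}\rho_e \le \sum_{e \in Q,\, e \preceq_Q e^*}\rho_e$ (both from $\rho \in \polystar$ and the inclusions above) together with the tightness $\sum_{e\in Q}\rho_e = \pi_Q$, to conclude $\pi_P < \pi_R$ and $\pi_P \le \pi_R - \sum_{e\in R\setminus P}\rho_e$, i.e., $P \sqsubseteq_{\pi,\rho} R$ (the possibility $R = P$ being excluded because $R$ gains the $\rho$-mass of $q$, which lies outside the worthless prefix of $P$).

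The main obstacle -- the place where this setting truly departs from the DAG case of \citet{dahan2021probability} -- is that the \emph{weak} conservation law \eqref{eq:conservation-weak} gives $\pi_P + \pi_Q \ge \pi_R + \pi_{P\times_{e^*}Q}$ only as an inequality, which bounds the spliced value $\pi_R$ from \emph{above} rather than from below. Since the only sets whose $\pi$-value is guaranteed to exceed $\pi_P$ are \emph{tight} ones, if $R$ itself does not dominate $P$ one has to build the dominating tight set $Q'$ by successively combining $Q$ with the tight sets blocking the remaining elements of $P\cap E_\rho$ through repeated use of the abstract-network axiom, each time charging the $\pi$-mass lost in the splice against $\rho$-mass via $\rho \in \polystar$ and tightness. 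Carrying this out so that the resulting $Q'$ is tight with $P \cap E_\rho \subseteq Q'$ -- whence $\sum_{e\in Q'\cap P}\rho_e \ge \sum_{e\in P}\rho_e \ge \pi_P$ and therefore $P \sqsubseteq_{\pi,\rho} Q'$ -- is the technical crux of the argument.
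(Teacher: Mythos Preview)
Your verification of \ref{prop:cut:positive} and \ref{prop:cut:S-tight-sets} is correct and essentially identical to the paper's. For \ref{prop:cut:S-hits-necessary-sets} your setup also matches the paper exactly (up to relabeling): the paper takes the non-dominated set (your $P$, their $Q$), its $\preceq$-minimal positive element (your $e^*$, their $q$), a tight witness (your $Q$, their $R$) with an earlier positive element, and forms the single splice $Q\times_{e^*}P$ (your $R$, their $Q'$) together with its partner $P\times_{e^*}Q$ (their $R'$).

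The divergence is entirely in how you read \eqref{eq:conservation-weak}. The paper's proof applies it as
\[
\pi_{Q\times_{e^*}P}+\pi_{P\times_{e^*}Q}\ \ge\ \pi_Q+\pi_P,
\]
i.e.\ \emph{spliced values dominate original values}; this is the direction Hoffman's argument actually requires (it is what makes uncrossing non-decreasing for the MWAF objective) and the direction the affine case $\pi_P=1-\sum_{e\in P}\mu_e$ satisfies in general abstract networks. The $\geq$ in the displayed formula \eqref{eq:conservation-weak} is a typo for $\leq$. With the intended orientation, your own chain of bounds goes through in one step: from $\pi_R+\pi_{P\times_{e^*}Q}\ge\pi_P+\pi_Q=\pi_P+\sum_{e\in Q}\rho_e$ and $\pi_{P\times_{e^*}Q}\le\sum_{e\in P\times_{e^*}Q}\rho_e\le\sum_{e\in Q:\,e^*\preceq_Q e}\rho_e$ (the last inclusion because the $P$-prefix before $e^*$ has zero $\rho$-mass) one gets $\pi_R\ge\pi_P+\sum_{e\in Q:\,e\prec_Q e^*}\rho_e$, which together with $R\setminus P\subseteq\{e\in Q:e\prec_Q e^*\}$ and $\rho_q>0$ gives $P\sqsubseteq_{\pi,\rho}R$ immediately.

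So your ``main obstacle'' is not an obstacle, and the iterative construction you sketch in the last paragraph is unnecessary. It is also not a proof as written: you give no reason why repeated splicing should ever produce a \emph{tight} set containing all of $P\cap E_\rho$ (and with the sign of \eqref{eq:conservation-weak} you assumed, there is none). Your proposal is therefore correct up to and including the choice of $R$, but ends in a gap exactly where it abandons the working one-step argument.
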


\begin{proof}[\cref{thm:abstract-networks}]
We use the notation $(P, e) := \{p \in P \st p \prec_P e\}$ and  $[e, P] := \{p \in P \st e \preceq_P e\}$ for $P \in \sets$ and $e \in P$.
Note that $S$ fulfils \ref{prop:cut:positive} and \ref{prop:cut:S-tight-sets} by construction.
It remains to show that $S$ also fulfils \ref{prop:cut:S-hits-necessary-sets}.
For this let $Q \in \sets$ be non-dominated with $\pi_Q > 0$ and assume by contradiction $Q \cap S = \emptyset$.

Note that $Q \cap E_{{\rho}} \neq \emptyset$ because $\sum_{e \in Q} {\rho}_e \geq {\pi}_Q > 0$. Let $q := \min_{\preceq_Q} Q \cap E_{{\rho}}$.
Observe that $q \notin S$ by our assumption, and hence, by construction of $S$, there must be $r \in E_{{\rho}}$ and $R \in \sets^=_{{\pi},{\rho}}$ such that $r \prec_{R} q$.

Let $Q' := R \times_{q} Q$ and $R' := Q \times_{q} R$.
Note that $R' \cap E_{\rho} \subseteq [q, R]$ because $(Q, q) \cap E_{\rho} = \emptyset$ by choice of $q$ as $\prec_q$-minimal element in $Q \cap E_{\rho}$.
Using \eqref{eq:cond}, we obtain $\sum_{e \in [q, R]} \rho_e \geq \sum_{e \in R'} \rho_e \geq \pi_{R'}$.
We conclude that
\begin{align*}
    \textstyle \pi_{Q'} + \sum_{e \in [q, R]} \rho_e \;\geq\; \pi_{Q'} + \pi_{R'} \;\geq\; \pi_{Q} + \pi_{R} \;=\; \pi_{Q} + \sum_{e \in R} \rho_e,
\end{align*}
where the second inequality follows from \eqref{eq:conservation-weak} and the final identity is due to the fact that $\pi_R = \sum_{e \in R} \rho_e$ because $R \in \sets_{\pi,\rho}$.
Subtracting $\sum_{e \in [q, R]} \rho_e$ on both sides yields
    $\pi_{Q'} \geq \pi_{Q} + \sum_{e \in (R, q)} \rho_e$.
    
Using $Q' \setminus Q \subseteq (R, q)$ by construction of $Q'$, we obtain $\pi_{Q'} \geq \pi_Q + \sum_{e \in R \setminus Q} \rho_e$.
Note further that $\pi_{Q'} > \pi_{Q}$ because $r \in (R, q) \cap E_{\rho}$ and hence $\sum_{e \in (R, q)} \rho_e > 0$.
We conclude that $Q \sqsubseteq_{\pi,\rho} Q'$, a contradiction $Q$ being non-dominated.
\qed    
\end{proof}

We remark that the corresponding ASCs and hence feasible decompositions can be commputed in polynomial time in $|E|$ if the abstract network is given via an oracle that solve the maximum violated inequality problem for $\polystar$ and returns $\pi_P$ and $\prec_P$ for the corresponding $P \in \sets$.

\section{Lattice Polyhedra}
\label{sec:lattice-polyhedra}

We now consider the case where $\sets$ is equipped with a partial order $\preceq$ so that $(\sets, \preceq)$ is a \emph{lattice}, i.e., the following two properties are fulfilled for all $P, Q \in \sets$:
\begin{itemize}
    \item The set $\{R \in \sets \st R \preceq P,\, R \preceq Q\}$ has a unique maximum w.r.t.~$\preceq$, denoted by $P \wedge Q$ and called the \emph{meet of $P$ and $Q$}.
    \item The set $\{R \in \sets \st R \succeq P,\, R \succeq Q\}$ has a unique minimum w.r.t.~$\preceq$, denoted by $P \vee Q$ and called the \emph{join of $P$ and $Q$}.
\end{itemize}
We will further assume that $\sets$ fulfils the following two additional properties:
\begin{align}
    \incidence[e]{P \vee Q} + \incidence[e]{P \wedge Q} \leq \incidence[e]{P} + \incidence[e]{Q}
\quad & \forall\, P, Q \in \sets,\; e \in E \tag{SM} \label{prop:submodularity}\\
    P \cap R \subseteq Q \quad & \forall\, P, Q, R \in \sets \text{ with } P \prec Q \prec R \tag{CS}\label{prop:consecutivity}
\end{align}
which are known as \emph{submodularity} and \emph{consecutivity}, respectively.

Furthermore, we assume that the requirement function $\pi$ is \emph{supermodular} w.r.t.~the lattice $(\sets, \preceq)$, i.e., 
$$\pi_{P \vee Q} + \pi_{P \wedge Q} \geq \pi_P + \pi_Q \quad \forall\, P, Q \in \sets$$ 
and \emph{monotone} w.r.t.~$\preceq$, i.e., $\pi_P \leq \pi_Q$ for all $P, Q \in \sets$ with $P \preceq Q$.

\citet{hoffman1978lattice} showed that under these assumptions (even when foregoing monotonicity of $\pi$) the polyhedron
$$\textstyle Y^+ := \left\{\rho \in \mathbb{R}_+^E \st \sum_{e \in P} \rho_e \geq \pi_P \ \forall\, P \in \sets\right\},$$
which they call \emph{lattice polyhedron}, is integral.
For the case that $\pi$ is monotone,
\citet{kornblum1978greedy} devised a two-phase (primal-dual) greedy algorithm for optimizing linear functions over $Y^+$, which was extended by \citet{frank1999increasing} to more general notions of sub- and supermodularity. 
These algorithms run in strongly polynomial time when provided with an oracle that, given $U \subseteq E$ returns the maximum element (w.r.t.~$\preceq$) of the sublattice $\mathcal{P}[U] := \{P \in \sets \st P \subseteq U\}$ along with the value of~$\pi_P$. We will call such an oracle \emph{lattice oracle}.
In this section, we prove the following decomposition result under the same assumptions as in~\citep{kornblum1978greedy,frank1999increasing}.

\begin{restatable}[$\clubsuit$]{theorem}{restateThmLattice}\label{thm:lattice}
    Let $(\sets, \preceq)$ be a submodular, consecutive lattice and let $\pi$ be monotone and supermodular with respect to $\preceq$.
    Then $(E, \sets, \pi)$ is {\starsuff}.
    Moreover, there is an algorithm that, given $\rho \in [0, 1]^E$, finds in polynomial time in $|E|$ and $\mathcal{T}$, a feasible decomposition of $\rho$ or asserts that $\rho \notin \polystar$, where $\mathcal{T}$ is the time for a call to a lattice oracle for $(\sets, \preceq)$ and $\pi$.
\end{restatable}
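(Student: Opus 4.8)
The plan is to instantiate the framework of \cref{thm:decomposition-algo}, so it suffices to establish: (i) the class of instances at hand is closed under the residual updates of \cref{alg:decomposition}; (ii) an ASC for $\bar\pi,\bar\rho$ exists in every iteration; and (iii) such an ASC, together with the value $\varepsilon_{\bar\pi,\bar\rho}(S)$, can be computed from polynomially many lattice-oracle calls. Given (i)--(iii), \cref{thm:decomposition-algo} produces a feasible decomposition after $\mathcal O(|E|^2)$ iterations and \cref{lem:feasible-relax} finishes; the case $\rho\notin\polystar$ is detected up front, which is polynomial with the lattice oracle (a lattice oracle also lets us test membership in $\sets$ and return $\pi_P$, and separation over $\polystar=\polyplus\cap[0,1]^E$ reduces, via separation--optimization equivalence, to the Kornblum--Frank algorithm for linear optimization over $\polyplus$). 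For (i), note that \eqref{prop:submodularity} and \eqref{prop:consecutivity} are properties of $(\sets,\preceq)$ alone, that supermodularity and monotonicity of $\pi$ survive subtracting a constant, and that $\bar\rho\in\polystar$ w.r.t.\ $\bar\pi$ holds throughout by \cref{lem:C-invariants}\ref{inv:pi}, applied inductively over the iterations (legitimately, since that invariant's proof only uses that the previously chosen sets were ASCs). So everything reduces to constructing one ASC for a given submodular consecutive lattice $(\sets,\preceq)$, monotone supermodular $\pi$, and $\rho\in\polystar$.

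First I would record that $\sets_{\pi,\rho}^=$ is itself a sublattice: for tight $P,Q$, summing \eqref{prop:submodularity} against $\rho\ge 0$ and combining with supermodularity of $\pi$ and $\rho\in\polyplus$ forces $\rho(P\vee Q)=\pi_{P\vee Q}$ and $\rho(P\wedge Q)=\pi_{P\wedge Q}$, so $P\vee Q$ and $P\wedge Q$ are tight; hence $\sets_{\pi,\rho}^=$ inherits \eqref{prop:submodularity} and \eqref{prop:consecutivity}, and a short \eqref{prop:submodularity}-argument shows that the tight sets containing a fixed $e\in E_\rho$ in their support are closed under meet (which will make membership in $S$ checkable via a single tight set per element). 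The construction of the ASC then mirrors those for supermodular requirements (\cref{sec:contrapolymatroid}) and abstract networks (\cref{thm:abstract-networks}): running the Kornblum--Frank two-phase greedy on the sublattice $\sets_{\pi,\rho}^=$ (simulated through the lattice oracle for $(\sets,\preceq)$) produces a chain of tight sets whose supports, by consecutivity \eqref{prop:consecutivity}, behave like intervals, and from its extreme-point structure one derives, for each tight set $T$, a linear order $\prec_T$ on $T$ under which the positive elements of $T$ are unambiguously ranked, consistently across tight sets. One then sets $S:=\{e\in E_\rho : e \text{ is } \prec_T\text{-minimal among } T\cap E_\rho \text{ for every tight } T \text{ with } e\in T\}$, exactly as in \cref{thm:abstract-networks} (elements of $E_\rho$ lying in no tight set are automatically included). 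Conditions \ref{prop:cut:positive} and \ref{prop:cut:S-tight-sets} then hold by construction, the latter because $S$ can meet a tight $T$ only in its unique $\prec_T$-smallest positive element.

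For \ref{prop:cut:S-hits-necessary-sets} I would argue by contradiction exactly as in the proof of \cref{thm:abstract-networks}: if a non-dominated $P$ with $\pi_P>0$ satisfies $P\cap S=\emptyset$, then some positive element of $P$ fails to be $\prec_R$-minimal for a suitable tight $R$ containing it, and uncrossing $P$ with $R$ --- replacing the crossing operation $\times_e$ by the lattice operations $\vee,\wedge$ and the prefix/suffix bookkeeping by \eqref{prop:consecutivity} and \eqref{prop:submodularity} (to relate $(P\vee R)\setminus P$ and the positive part of $P\wedge R$ to $R\setminus P$ and $P\cap R$) --- yields a set $P'$ with $\pi_{P'}\ge\pi_P+\sum_{e\in P'\setminus P}\rho_e$ and $\pi_{P'}>\pi_P$; here supermodularity of $\pi$ supplies $\pi_{P\vee R}+\pi_{P\wedge R}\ge\pi_P+\pi_R$, tightness of $R$ together with $\rho\in\polyplus$ applied to the ``meet side'' controls the $\rho$-terms, and monotonicity bounds $\pi$-values. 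Thus $P\sqsubseteq_{\pi,\rho}P'$, contradicting non-dominatedness of $P$. The running-time claim then follows by combining the per-iteration ASC cost (polynomial in $|E|$ and $\mathcal T$) with the $\mathcal O(|E|^2)$ iteration bound of \cref{thm:decomposition-algo} and \cref{lem:feasible-relax}.

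\textbf{The main obstacle.} The delicate step is extracting from the Kornblum--Frank extreme-point structure the family of orders $(\prec_T)_{T\in\sets_{\pi,\rho}^=}$ so that it is simultaneously rich enough to rank the positive elements of each tight set, mutually consistent so that the resulting $S$ meets \emph{every} tight set --- not merely those on the constructed chain --- at most once, usable in the uncrossing argument for \ref{prop:cut:S-hits-necessary-sets} (where the per-set orders that abstract networks supply for free must be reconstructed, leaning on \eqref{prop:submodularity} and \eqref{prop:consecutivity} to control how meets and joins interact with $E_\rho$), and computable from polynomially many lattice-oracle calls. I expect the consistency requirement, together with threading these orders correctly through the uncrossing step, to be where the real work lies.
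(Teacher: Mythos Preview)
Your plan diverges from the paper at a structural level, and the divergence is exactly where the gap lies. The paper does \emph{not} attempt to construct an ASC for an arbitrary $\rho\in\polystar$. Instead it first writes $\rho$ as a convex combination of extreme points of $\polyplus$ (plus a nonnegative ray), applies \cref{alg:decomposition} \emph{only} to each extreme point, and then recombines the resulting decompositions via \cref{lem:feasible-relax}. The reason this works is that an extreme point comes equipped with a \emph{greedy support} $e_1,\dots,e_m$ and $P_1\succ\cdots\succ P_m$ satisfying \ref{prop:greedy-e_i-in_P_i}--\ref{prop:greedy-solution}; \cref{lem:lattice-S} builds an ASC from this chain, \cref{thm:asc-lattice} certifies it, and---crucially---\cref{lem:extreme-point-maintained} shows that after one step of the algorithm the residual $\bar\rho$ is again an extreme point of the residual polyhedron, with a greedy support obtained by truncating the old one. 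So the chain structure that you try to synthesize for each tight set $T$ is, in the paper's approach, simply the greedy chain of the current extreme point, and it persists across iterations for free.

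Your proposal tries to manufacture this chain structure for arbitrary $\rho$ by ``running the Kornblum--Frank two-phase greedy on the sublattice $\sets^=_{\pi,\rho}$'' and extracting orders $\prec_T$. This step is not well-defined: the two-phase greedy optimizes a linear objective over $\polyplus$ and returns \emph{some} extreme point together with its greedy support; it does not give a canonical chain inside the tight sets of a \emph{given} non-extreme $\rho$, nor any per-set orders $\prec_T$ that are consistent across all tight $T$. For a non-extreme $\rho$ the tight sets need not lie on a single chain, and there is no analogue of \ref{prop:greedy-chain-containment}/\ref{prop:greedy-e_m} to drive the case analysis in the proof of \ref{prop:cut:S-hits-necessary-sets}. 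The obstacle you flag in your last paragraph is therefore not a technical nuisance but the whole difficulty; the paper sidesteps it entirely by the extreme-point reduction rather than resolving it. If you want to salvage your outline, the fix is precisely that reduction: test $\rho\in\polystar$ via separation (as you describe), express $\rho$ via Carath\'eodory over the extreme points of $\polyplus$ (each found together with its greedy support by the two-phase greedy), run \cref{alg:decomposition} on each extreme point using \cref{lem:lattice-S} for the ASC, and recombine.
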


Our strategy for proving \cref{thm:lattice} is the following: If $\rho \in \polystar$, we can express it as a convex combination of extreme points (and possibly rays) of~$\polyplus$.
We can use the structure of these extreme points, implied by the optimality of the two-phase greedy algorithm, to construct ASCs and hence, via \cref{alg:decomposition}, a feasible decomposition of each extreme point. These can then be recomposed to a feasible decomposition for $\rho$.
The two-phase greedy algorithm allows us to carry out these steps efficiently as it implies a separation oracle for $\polystar$.

In the remainder of this section, we show how to construct an ASC for the case that $\rho \in \polystar$ is an extreme point of $\polyplus$. We start by describing the properties of extreme points implied by the correctness of the two-phase greedy algorithm.

\begin{theorem}[\citep{frank1999increasing}]
    Let $\rho \in \polyplus$. Then $\rho$ is an extreme point of $\polyplus$ if and only if there exists \mbox{$e_1, \dots, e_m \in E$} and $P_1, \dots, P_m \in \sets$ with the following properties:
\begin{enumerate}[label=(G\arabic*),start=1,align=left,leftmargin=*]
    \item $e_i \in P_i$ for all $i \in [m]$, \label{prop:greedy-e_i-in_P_i}
    \item $P_i = \max_{\succeq} \sets[E \setminus \{e_1, \dots, e_{i-1}\}]$ for all $i \in [m]$, \label{prop:greedy-P_i-max}
    \item $\pi_{P_i} > 0$ for all $i \in [m]$ and $\pi_{Q} \leq 0$ for all $Q \in \sets[E \setminus \{e_1, \dots, e_{m}\}]$, \label{prop:greedy-positive}
    \item $\rho$ is the unique solution to the linear system\\ \label{prop:greedy-solution}
    $$\begin{array}{rll}
        \sum_{e \in P_i} \rho_{e} & = \pi_{P_i} \qquad & \forall\; i \in [m],\\
        \rho_e & = 0 \qquad & \forall\; e \in E \setminus \{e_1, \dots, e_m\}.
    \end{array}$$
\end{enumerate}
\end{theorem}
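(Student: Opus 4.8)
This is a restatement of a result of \citet{frank1999increasing}, and the plan is to derive it from the correctness of the Kornblum--Frank two-phase (primal--dual) greedy algorithm. Two facts will be used repeatedly. First, by submodularity \eqref{prop:submodularity}, $\sets[U]$ is a sublattice of $(\sets, \preceq)$ for every $U \subseteq E$ — if $e \notin U$ then $\mathbbm{1}_{e \in P} = \mathbbm{1}_{e \in Q} = 0$ forces $\mathbbm{1}_{e \in P \vee Q} = \mathbbm{1}_{e \in P \wedge Q} = 0$ — so $\max_{\succeq} \sets[U]$ is well defined whenever $\sets[U] \neq \emptyset$. Second, consecutivity \eqref{prop:consecutivity} together with the monotonicity and supermodularity of $\pi$ is precisely the hypothesis under which that algorithm is correct, and monotonicity lets one pass freely between ``$\pi_{\max_\succeq \sets[U]} \leq 0$'' and ``$\pi_Q \leq 0$ for all $Q \in \sets[U]$''.

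\emph{The ``if'' direction.} Suppose $\rho \in \polyplus$ and $e_1, \dots, e_m$, $P_1, \dots, P_m$ satisfy \ref{prop:greedy-e_i-in_P_i}--\ref{prop:greedy-solution}. Every equation in the system of \ref{prop:greedy-solution} is the tight case of an inequality valid for $\polyplus$ (either $\sum_{e \in P_i} \rho_e \geq \pi_{P_i}$ or $\rho_e \geq 0$), and $\rho$ satisfies all of them with equality; since by \ref{prop:greedy-solution} this $|E| \times |E|$ system has $\rho$ as its unique solution, it has full rank, so $\rho$ is the unique point of $\polyplus$ satisfying $|E|$ linearly independent tight constraints, i.e.\ an extreme point. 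One need not even invoke the uniqueness in \ref{prop:greedy-solution}: from \ref{prop:greedy-P_i-max} we get $P_{i+1} \preceq P_i$, and $e_i \in P_i \setminus P_{i+1}$ (as $P_{i+1} \subseteq E \setminus \{e_1, \dots, e_i\}$), so $P_1 \succ \dots \succ P_m$ and $e_i \notin P_j$ for $j > i$; hence the matrix $\big(\mathbbm{1}_{e_j \in P_i}\big)_{i,j \in [m]}$ is triangular with unit diagonal, and together with the $|E| - m$ equations $\rho_e = 0$ this forms a nonsingular system.

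\emph{The ``only if'' direction.} Let $\rho$ be an extreme point of $\polyplus$. Its recession cone is $\mathbb{R}_+^E$, so the normal cone of the vertex $\rho$ is a full-dimensional subcone of $\mathbb{R}_+^E$; its interior is then a nonempty open subset of $\mathbb{R}_+^E$ (the boundary of the orthant being nowhere dense) and hence contains some $w \in \mathbb{R}_{>0}^E$ for which $\rho$ is the \emph{unique} minimizer of $w \cdot \rho'$ over $\rho' \in \polyplus$. Now run the Kornblum--Frank algorithm with objective $w$: its first phase greedily produces $e_1, \dots, e_m \in E$ and $P_1, \dots, P_m \in \sets$ with $P_i = \max_{\succeq}\sets[E \setminus \{e_1, \dots, e_{i-1}\}]$, $\pi_{P_i} > 0$ and $e_i \in P_i$, stopping exactly when the remaining sublattice $\sets[E \setminus \{e_1, \dots, e_m\}]$ is empty or has nonpositive maximum $\pi$-value — that is, exactly \ref{prop:greedy-e_i-in_P_i}--\ref{prop:greedy-positive} — and the algorithm returns the solution $\bar\rho$ of the (by the triangular argument above, uniquely solvable) system in \ref{prop:greedy-solution}. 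Correctness of the algorithm — this is the point where \eqref{prop:consecutivity} and the (super/sub)modularity hypotheses are used, via feasibility of the dual constructed in the second phase and complementary slackness — guarantees that $\bar\rho$ is an optimal solution of $\min\{w \cdot \rho' : \rho' \in \polyplus\}$; by the choice of $w$ this optimum is unique, so $\bar\rho = \rho$, and the constructed sequence witnesses \ref{prop:greedy-e_i-in_P_i}--\ref{prop:greedy-solution} for $\rho$.

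The main obstacle is the ``only if'' direction, specifically its reliance on the exact behaviour of the two-phase greedy algorithm: one has to know that the primal phase selects a strictly decreasing chain of sublattice maxima with positive $\pi$-values, that it stops precisely when $\pi$ has become nonpositive on the remaining sublattice, and — the real crux — that the dual solution built in the second phase is feasible, which is exactly where consecutivity \eqref{prop:consecutivity} and submodularity \eqref{prop:submodularity}/supermodularity of $\pi$ are indispensable. Reproving that correctness from scratch is the bulk of the work; citing \citep{kornblum1978greedy,frank1999increasing} is the shortcut. The ``if'' direction, by contrast, is just the standard fact that a point of a polyhedron lying on a full-rank family of tight valid inequalities is a vertex.
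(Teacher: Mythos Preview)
The paper does not prove this theorem at all: it is stated with the citation \citep{frank1999increasing} and used as a black box, so there is no ``paper's own proof'' to compare against. Your derivation is therefore not a comparison target but a genuine addition.

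That said, your argument is sound. The ``if'' direction is just the standard vertex criterion, and your triangular observation (that $e_i \in P_i$ while $e_i \notin P_j$ for $j>i$, forced by \ref{prop:greedy-P_i-max}) cleanly shows the $|E|\times|E|$ system of tight constraints is nonsingular, so the uniqueness clause of \ref{prop:greedy-solution} is indeed redundant given \ref{prop:greedy-e_i-in_P_i}--\ref{prop:greedy-P_i-max}. For the ``only if'' direction your normal-cone argument is correct: the recession cone of $\polyplus$ is $\mathbb{R}_+^E$, so any vertex's normal cone is a full-dimensional subcone of $\mathbb{R}_+^E$, and its (topological) interior cannot lie entirely in the boundary of the orthant, yielding a strictly positive $w$ with $\rho$ the unique minimizer. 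Running the two-phase greedy with that $w$ and invoking its correctness then forces the algorithm's output to coincide with $\rho$, and the algorithm's first phase is by construction exactly \ref{prop:greedy-e_i-in_P_i}--\ref{prop:greedy-positive}. The one point you leave implicit is that the algorithm's primal output is feasible for $\polyplus$ (not just optimal among the solutions of the linear system in \ref{prop:greedy-solution}); this is of course part of what ``correctness'' means in \citep{kornblum1978greedy,frank1999increasing}, but it is worth naming explicitly since it is precisely where \eqref{prop:submodularity}, \eqref{prop:consecutivity} and supermodularity of $\pi$ enter.
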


We call such $e_1, \dots, e_m \in E$ and $P_1, \dots, P_m \in \sets$ fulfilling these properties a \emph{greedy support for $\rho$}. 
Indeed, note that \ref{prop:greedy-solution} implies $E_{\rho} \subseteq \{e_1, \dots, e_m\}$.
Properties~\ref{prop:greedy-e_i-in_P_i}-\ref{prop:greedy-solution} also imply that greedy supports have a special interval structure, enabling the following algorithmic and structural result.

\begin{restatable}[$\clubsuit$]{lemma}{restateLemLatticeS}\label{lem:lattice-S}
    Given a greedy support $e_1, \dots, e_m$ and $P_1, \dots, P_m$ of an extreme point $\rho$ of $\polyplus$ one can compute in time $\mathcal{O}(m)$ a set $S$ fulfilling
    \begin{align}
        S \subseteq E_{\rho} \text{ and } |S \cap P_i| = 1 \text{ for all } i \in [m]. \label{prop:good-S}
    \end{align}
\end{restatable}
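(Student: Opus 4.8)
The plan is to exploit an interval structure on the indices $1, \dots, m$ induced by the greedy support, and then to pick from each $P_i$ a "last available" element in a way that is consistent across all the sets simultaneously. The crucial structural observation to establish first is that each element $e_j$ belongs to a contiguous block of the sets $P_i$: concretely, I would show that for every $j \in [m]$ the set $\{i \in [m] \st e_j \in P_i\}$ is an interval of the form $\{\alpha_j, \alpha_j+1, \dots, \beta_j\}$ with $\beta_j = j$ (the upper endpoint is $j$ itself, since by \ref{prop:greedy-P_i-max} the set $P_i$ avoids $e_1, \dots, e_{i-1}$, so $e_j \notin P_i$ for $i > j$, while $e_j \in P_j$ by \ref{prop:greedy-e_i-in_P_i}). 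The lower endpoint $\alpha_j$ comes from consecutivity \eqref{prop:consecutivity}: since $P_1 \succeq P_2 \succeq \dots \succeq P_m$ by \ref{prop:greedy-P_i-max} (each is the $\succeq$-maximum of a shrinking sublattice, and the sublattices are nested), if $e_j \in P_a \cap P_c$ with $a < b < c$ then $e_j \in P_b$ by \eqref{prop:consecutivity} applied to $P_c \prec P_b \prec P_a$ — here I need the chain to be strict, or to handle equalities separately, which is a routine case distinction.

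Given this interval structure, I would construct $S$ greedily from the top index downward. Process $i = m, m-1, \dots, 1$; maintain the invariant that the already-constructed partial set hits every $P_{i'}$ with $i' > i$ exactly once. When processing $i$, if $P_i$ is already hit by the current partial set, do nothing; otherwise add $e_i$ to $S$. The key point is that adding $e_i$ cannot destroy the "exactly once" property for any higher-indexed $P_{i'}$: since $i' > i = \beta_i$, we have $e_i \notin P_{i'}$, so no previously-satisfied set is touched. At the end, every $P_i$ is hit at least once by construction; and it is hit at most once because any element $e_j \in S \cap P_i$ was added while processing index $j \ge i$ (as $e_j \in P_i$ forces $i \le j$), and the processing step only adds $e_j$ if $P_j$ — hence, via the interval/nesting structure, also $P_i$ for $i$ in the relevant range — was not yet hit; a short argument shows two such elements cannot coexist in the same $P_i$. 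Finally $S \subseteq E_\rho$: every element added is some $e_j$ with $j \in [m]$, and I would argue $\rho_{e_j} > 0$ for every $e_j$ that actually gets selected — this may require discarding from the greedy support those $e_j$ with $\rho_{e_j} = 0$ at the outset, or observing that \ref{prop:greedy-solution} together with minimality of the support forces $\rho_{e_j} > 0$; if the statement as given does not guarantee $E_\rho = \{e_1,\dots,e_m\}$ exactly, I would first pass to a minimal greedy support where it does.

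The running time is $\mathcal{O}(m)$ provided we can test "is $P_i$ already hit?" in amortized constant time; this works because, by the interval structure, once $P_i$ is hit by $e_j$ it stays hit for the entire block of indices below $j$ down to $\alpha_j$, so a single pointer tracking "the largest index $i$ for which $P_i$ is not yet guaranteed hit" suffices, and we only ever add an element when the pointer has not yet passed the current block. I expect the main obstacle to be the bookkeeping around equalities in the chain $P_1 \succeq \dots \succeq P_m$ (whether consecutivity \eqref{prop:consecutivity}, which is stated for strict $\prec$, still yields the interval property when some $P_i = P_{i+1}$) and, relatedly, pinning down exactly which elements of the greedy support have strictly positive $\rho$-value so that \ref{prop:cut:positive}-type containment $S \subseteq E_\rho$ holds; both are handled by a careful but unenlightening case analysis, so I would set them up cleanly at the start rather than interleaving them with the main construction.
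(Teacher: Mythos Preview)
Your construction is exactly the paper's: process $i=m,m-1,\dots,1$ and add $e_i$ whenever $P_i$ is not yet hit. Your interval argument via \eqref{prop:consecutivity} and the chain $P_1\succ\cdots\succ P_m$ is also what the paper uses (the chain is in fact strict, recorded as property~\ref{prop:greedy-chain}, so your worry about equalities is unfounded). The ``exactly once'' analysis you sketch is essentially the paper's induction.

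The genuine gap is $S\subseteq E_\rho$. Neither of your proposed fixes works: the greedy support need not satisfy $\{e_1,\dots,e_m\}=E_\rho$ (the paper only states $E_\rho\subseteq\{e_1,\dots,e_m\}$), and you cannot simply discard the $e_j$ with $\rho_{e_j}=0$ or pass to a ``minimal'' support without destroying \ref{prop:greedy-P_i-max} and \ref{prop:greedy-solution}. The actual argument is inductive and hinges on \emph{monotonicity of $\pi$}, which you never invoke. When $e_k$ is added, $P_k$ is unhit but $P_{k+1}$ is hit by some $e'\in S^{(k+1)}$; since $S^{(k+1)}\cap P_k=\emptyset$, we have $e'\in P_{k+1}\setminus P_k$. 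Using \ref{prop:greedy-solution}, \ref{prop:greedy-e_i}, and consecutivity (every $e_i$ with $i>k$ lying in $P_k$ also lies in $P_{k+1}$), one computes
\[
0 \;\le\; \pi_{P_k}-\pi_{P_{k+1}} \;=\; \sum_{e\in P_k}\rho_e-\sum_{e\in P_{k+1}}\rho_e \;\le\; \rho_{e_k}-\rho_{e'},
\]
so $\rho_{e_k}\ge\rho_{e'}>0$ by the induction hypothesis. This is the missing idea; without monotonicity of $\pi$ the claim $S\subseteq E_\rho$ can fail.
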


The corresponding algorithm iterates through $e_1, \dots, e_m$ in reverse order and adds element $e_i$ to $S$ if it does not result in $|S \cap P_i| > 1$.
We now show that~$S$ as constructed above is indeed an ASC.

\begin{restatable}[$\clubsuit$]{theorem}{restateThmASCLattice}\label{thm:asc-lattice}
    If $S$ fulfils \eqref{prop:good-S} for the greedy suppport of an extreme point $\rho$ of $Y^+$, then $S$ is an ASC for $\pi$ and $\rho$.
\end{restatable}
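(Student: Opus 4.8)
To show that $S$ (satisfying \eqref{prop:good-S}) is an ASC for $\pi$ and $\rho$, I need to verify the three defining properties \ref{prop:cut:positive}, \ref{prop:cut:S-tight-sets}, and \ref{prop:cut:S-hits-necessary-sets}. Property \ref{prop:cut:positive} is immediate, since \eqref{prop:good-S} already states $S \subseteq E_\rho$. The remaining two properties will be derived from the interval structure of greedy supports and from the fact that, by \ref{prop:greedy-solution}, every $e \in E_\rho$ is one of the $e_i$'s and its marginal $\rho_e$ is pinned down by the tight constraints $\sum_{e \in P_j} \rho_e = \pi_{P_j}$.

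\textbf{Property \ref{prop:cut:S-tight-sets}.} I must show $|S \cap P| \le 1$ for every $P \in \sets^=_{\pi,\rho}$, i.e.\ every $P$ with $\sum_{e \in P}\rho_e = \pi_P$. First I would argue that it suffices to consider $P \cap E_\rho$ in place of $P$ (elements outside $E_\rho$ contribute nothing and $S \subseteq E_\rho$), and then reduce to the lattice element $P' = \max_\succeq \sets[P \cap E_\rho \cup (\text{something})]$ — more precisely I expect to replace $P$ by the $\preceq$-maximal member of the sublattice supported on $P \cap E_\rho$, using submodularity \eqref{prop:submodularity} and supermodularity of $\pi$ to show this replacement preserves tightness and can only enlarge $P \cap E_\rho$. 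The key point is then that any tight lattice element supported inside $E_\rho = \{e \in E : \rho_e > 0\} \subseteq \{e_1,\dots,e_m\}$ must, by the interval/consecutivity structure \eqref{prop:consecutivity} governing how the $P_i$ are nested, be "sandwiched" between consecutive $P_i$'s in a way that forces it to coincide on $E_\rho$ with one of the $P_i$ (or a meet of them). Since $S$ was built to satisfy $|S \cap P_i| = 1$ for every $i$, and meets only shrink the intersection with $S$, we get $|S \cap P| \le 1$.

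\textbf{Property \ref{prop:cut:S-hits-necessary-sets}.} I must show $|P \cap S| \ge 1$ for every non-dominated $P$ with $\pi_P > 0$. Suppose $P \cap S = \emptyset$ for such a $P$. Since $\pi_P > 0$ and $\rho \in \polystar$, we have $\sum_{e \in P}\rho_e \ge \pi_P > 0$, so $P \cap E_\rho \neq \emptyset$. Again I would pass to $P' := \max_\succeq \sets[(P \cap E_\rho)]$ or the relevant lattice element supported on $P\cap E_\rho$; by \ref{prop:greedy-P_i-max} this $P'$ equals some $P_j$ in the greedy sequence (namely the one where the first element of $P \cap E_\rho$ in the greedy order gets removed), or is dominated by such a $P_j$ together with consecutivity arguments. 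Because $|S \cap P_j| = 1$ but $P \cap S = \emptyset$, the single element of $S \cap P_j$ lies in $P_j \setminus P$; I would then use supermodularity/monotonicity of $\pi$ together with $\sum_{e}\rho_e = \pi_{P_j}$ to derive $\pi_P \le \pi_{P_j} - \sum_{e \in P_j \setminus P}\rho_e$ and $\pi_P < \pi_{P_j}$ (the strict inequality coming from $\rho$ being positive on the element of $S \cap P_j$, which lies outside $P$), i.e.\ $P \sqsubseteq_{\pi,\rho} P_j$, contradicting non-domination of $P$.

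\textbf{Main obstacle.} The routine parts are \ref{prop:cut:positive} and the uncrossing that lets one replace an arbitrary $P$ by a lattice element supported on $P \cap E_\rho$. The delicate part is pinning down \emph{which} $P_i$ this lattice element relates to and how the consecutivity property \eqref{prop:consecutivity} forces the interval structure — i.e., showing that a lattice element supported inside $E_\rho \subseteq \{e_1,\dots,e_m\}$ is constrained enough to inherit the $|\,\cdot\, \cap S| = 1$ behavior (for \ref{prop:cut:S-tight-sets}) or to be dominated by the right $P_j$ (for \ref{prop:cut:S-hits-necessary-sets}). Getting the order in which the $e_i$ are removed to line up correctly with the $\preceq$-order on the relevant lattice elements is where the careful bookkeeping lives, and this is presumably why the lemma is deferred to the appendix.
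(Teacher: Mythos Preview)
Your handling of \ref{prop:cut:positive} is fine, and the broad idea of relating an arbitrary $Q$ to the greedy chain via uncrossing is correct. But the proposal for \ref{prop:cut:S-hits-necessary-sets} has a real gap.

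You plan to locate, for a non-dominated $P$ with $P\cap S=\emptyset$, an index $j$ with $P\preceq P_j$ (via \ref{prop:greedy-P_i-max}) and then conclude $P\sqsubseteq_{\pi,\rho}P_j$. The first step is valid: if $e_j$ is the first greedy element in $P\cap E_\rho$, then $P\in\sets[E\setminus\{e_1,\dots,e_{j-1}\}]$ and hence $P\preceq P_j$. But the dominance step fails. The inequality you need, $\pi_P\le\pi_{P_j}-\sum_{e\in P_j\setminus P}\rho_e=\sum_{e\in P_j\cap P}\rho_e$, would follow from $\pi_P\le\sum_{e\in P}\rho_e$ only if $P\cap E_\rho\subseteq P_j$. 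The abstract lattice relation $P\preceq P_j$ does not give this set-theoretic containment, and monotonicity and supermodularity of $\pi$ do not bridge the gap either.

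The paper's argument is structurally different. It first proves, by an inductive uncrossing with the appropriate $P_{i+1}$, the auxiliary fact that $Q\cap S\neq\emptyset$ for \emph{every} $Q\succeq P_m$ (not just the $P_i$). With this in hand, \ref{prop:cut:S-hits-necessary-sets} is handled by a three-case split on the position of $Q$ relative to $P_m$: if $Q\succeq P_m$ the auxiliary lemma applies; if $Q\prec P_m$ then property~\ref{prop:greedy-e_m} gives $e_m\in Q$ directly; if $Q\sim P_m$ one uncrosses with $P_m$ and either reduces to the first case or---only in the subcase $\pi_{Q\wedge P_m}\le 0$---exhibits $Q\vee P_m$ (not a $P_j$) as a $\sqsubseteq_{\pi,\rho}$-dominator of $Q$, using supermodularity of $\pi$ together with $(Q\vee P_m)\setminus Q\subseteq P_m$ from \eqref{prop:submodularity}. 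So dominance is invoked in only one narrow subcase, and the dominating witness is a join, not a member of the greedy chain.

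For \ref{prop:cut:S-tight-sets} your sketch is closer in spirit, but the paper's execution is cleaner than ``pass to the maximal element of $\sets[P\cap E_\rho]$'': it takes a $\succeq$-maximal counterexample $Q\in\sets^=_{\pi,\rho}$ with $|Q\cap S|>1$ and runs the same three-way case split (below $P_m$; sandwiched $P_i\succeq Q\succ P_{i+1}$; incomparable to some $P_i$), using \ref{prop:greedy-chain-containment} in the sandwiched case and uncrossing in the incomparable case. The obstacle you identify---aligning an arbitrary lattice element with the greedy chain---is resolved by this case split plus the auxiliary lemma, not by finding a single $P_j$.
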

\begin{proof}[sketch]
    Note that $S$ fulfils \ref{prop:cut:positive} as $S \subseteq E_\rho$ by \eqref{prop:good-S}.
    Next, we show that $S$ also fulfils \ref{prop:cut:S-tight-sets}.
    Assume by contradiction that there is $Q \in \sets^=_{\pi,\rho}$ with $|Q \cap S| > 1$.
    Without loss of generality, we can assume $Q$ to be $\succeq$-maximal with this property.
    We distinguish three cases.

    \begin{itemize}
        \item Case~1: $Q \preceq P_m$.
        Note that $e_j \notin Q$ for all $j \in [m]$ with $j < m$, as otherwise $Q \prec P_m \prec P_j$ would imply $e_j \in P_m$ by \eqref{prop:consecutivity}, contradicting \ref{prop:greedy-P_i-max}, which requires $P_m \subseteq E \setminus \{e_1, \dots, e_{m-1}\}$.
        Therefore $Q \cap S \subseteq \{e_m\}$, from which we conclude $|Q \cap S| \leq 1$.

        \item Case~2: There is $i \in [m]$ with $P_i \succeq Q \succ P_{i+1}$.
        It can be shown that \ref{prop:greedy-P_i-max} and \eqref{prop:consecutivity} imply $P_i \cap E_{\rho} \subseteq Q \cap E_{\rho}$ in this case. 
        Moreover, $P_i, Q \in \sets^=_{\pi,\rho}$ and monotonicty imply $\sum_{e \in P_i} \rho_e = \pi_{P_i} \geq \pi_{Q} = \sum_{e \in Q} \rho_e$. 
        We conclude that in fact $P_i \cap E_{\rho} = Q \cap E_{\rho}$.
        Thus $P_i \cap S = Q \cap S$ and $|Q \cap S| \leq 1$ by~\eqref{prop:good-S}.
        
        \item Case 3: There is $i \in [m]$ such that $Q \sim P_i$ (i.e., $Q$ and $P_i$ are incomparable w.r.t.~$\preceq$).
        Let~$i \in [m]$ be maximal with that property and define $Q_+ := Q \vee P_i$ and $Q_- := Q \wedge P_i$.
        Using standard uncrossing arguments we can show that
        $P_i, Q \in \sets^=_{\pi,\rho}$ implies $Q_+, Q_- \in \sets^=_{\pi,\rho}$ and $\incidence{Q_+ \cap S} + \incidence{Q_- \cap S} = \incidence{P_i \cap S} + \incidence{Q \cap S}$.
        Note that $|Q_+ \cap S| \leq 1$ by maximality of $Q \in \sets^=_{\pi,\rho}$ with $|Q \cap S| > 1$.
        We will show that $|Q_- \cap S| \leq 1$, which, using the above and $|P_i \cap S| = 1$ by~\eqref{prop:good-S}, implies $|Q \cap S| \leq |Q_+ \cap S| + |Q_- \cap S| - |P_i \cap S| \leq 1$, a contradiction.
        
        It remains to show $|Q_- \cap S| \leq 1$, for which we distinguish two subcases.
        First, if $i = m$, then $Q_- \cap S \subseteq \{e_m\}$ as shown in case~1 above.
        Second, if $i < m$, then maximality of $i$ with $P_i \sim Q$ implies $Q \succ P_{i+1}$ and therefore $P_i \succ Q_- = P_i \wedge Q \succeq P_{i+1}$.
        Thus either $Q_- = P_{i+1}$ and hence $|Q_- \cap S| = 1$ by~\eqref{prop:good-S} or $P_i \succ Q_- \succ P_{i+1}$, in which case $|Q_- \cap S| \leq 1$ by case~2 above.
    \end{itemize}
    The proof that $S$ fulfils \ref{prop:cut:S-hits-necessary-sets} follows similar lines but requires the use of some additional consequences of \ref{prop:greedy-e_i-in_P_i}-\ref{prop:greedy-solution}. \qed
\end{proof}

\section{Perfect Decompositions and Balanced Hypergraphs}
\label{sec:balanced}

In this section, we study decomposition-friendly systems, where every $\rho \in [0,1]^E$ has a perfect decomposition that attains requirements $\pi^{\rho}_P := \min \{\sum_{e \in P} \rho_e, 1\}$ for all $P \in \sets$. 
We show that such systems are characterized by absence of certain substructures that hinder perfect decomposition.

A \emph{special cycle} of $(E, \sets)$ consists of ordered subsets $C = \{e_1, \dots, e_k\} \subseteq E$ and $\mathcal{C} = \{P_1, \dots, P_k\} \subseteq \sets$  such that $P_i \cap C = \{e_i, e_{i+1}\}$ for $i \in [k]$, where we define $e_{k+1} = e_1$.
The \emph{length} of such a special cycle~$(C, \mathcal{C})$ is~$|C| = k = |\mathcal{C}|$.
A \emph{balanced hypergraph} is a system $(E, \sets)$ that does not have any special cycles of odd length.
Balanced hypergraphs were introduced by~\citet{berge1969rank} and have been studied extensively,  
see, e.g., the survey by~\citet{conforti2006balanced}. 
Our main result in this section is the~following:

\begin{restatable}[$\clubsuit$]{theorem}{restateThmBalanced}\label{thm:balanced}
    A set system $(E, \sets)$ is decomposition-friendly if and only if it is a balanced hypergraph.
    If $(E, \sets)$ is a balanced hypergraph, a perfect decomposition of $\rho \in [0,1]^E$ can be computed in polynomial time in~$|E|$.\footnote{Note that $|\sets|$ is bounded by $\mathcal{O}(|E|^2)$ for any balanced hypergraph~\citep{heller1957linear}. Thus, the stated running time holds even when $\sets$ is given explicitly.}
\end{restatable}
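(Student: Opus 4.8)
The plan is to prove the two implications separately, using \cref{alg:decomposition} for the harder one. For ``decomposition-friendly $\Rightarrow$ balanced'' I would argue the contrapositive. Suppose $(E,\sets)$ is not balanced, so it has a special cycle $(C,\mathcal{C})$ of odd length $k\ge 3$, say $C=\{e_1,\dots,e_k\}$ and $\mathcal{C}=\{P_1,\dots,P_k\}$ with $P_i\cap C=\{e_i,e_{i+1}\}$ for $i\in[k]$ and $e_{k+1}:=e_1$. Put $\rho_e:=\tfrac{1}{2}$ for $e\in C$ and $\rho_e:=0$ otherwise. Since $\rho$ is supported on $C$, we have $\sum_{e\in P_i}\rho_e=\rho_{e_i}+\rho_{e_{i+1}}=1$, hence $\pi^{\rho}_{P_i}=1$. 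Let $z$ be any feasible decomposition of $\rho$ for $(E,\sets,\pi^{\rho})$ and let $S$ with $z_S>0$. Then $S\subseteq C$ (otherwise $\rho_e=\sum_{S'\ni e}z_{S'}\ge z_S>0$ for some $e\in S\setminus C$), and $S\cap P_i\neq\emptyset$ for all $i$ (otherwise $\sum_{S'\,:\,S'\cap P_i\neq\emptyset}z_{S'}\le 1-z_S<1=\pi^{\rho}_{P_i}$); as $S\subseteq C$ this says that $S$ meets $\{e_i,e_{i+1}\}$ for every $i$, i.e.\ $S$ is a vertex cover of the $k$-cycle on $C$, so $|S|=|S\cap C|\ge\tfrac{k+1}{2}$ because $k$ is odd. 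But then $\tfrac{k}{2}=\sum_{e\in C}\rho_e=\sum_{S\subseteq E}z_S\,|S\cap C|\ge\tfrac{k+1}{2}\sum_{S\subseteq E}z_S=\tfrac{k+1}{2}$, a contradiction; so $\rho$ has no perfect decomposition and $(E,\sets)$ is not decomposition-friendly.

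For the converse, assume $(E,\sets)$ is balanced. By \cref{thm:decomposition-algo} it suffices to prove that \emph{for every $\pi:\sets\to(-\infty,1]$ and every $\rho\in\polystar$ there exists an ASC for $\pi$ and $\rho$}: an induction using \cref{lem:C-invariants} then shows that the residual data $\bar{\pi},\bar{\rho}$ in each iteration of \cref{alg:decomposition} again satisfy these hypotheses, so, run with $\pi=\pi^{\rho}$, the algorithm outputs a perfect decomposition of the given $\rho\in[0,1]^E$. To construct the ASC, let $\mathcal{N}$ be the family of non-dominated (w.r.t.\ $\pi$ and $\rho$) members of $\{Q\in\sets:\pi_Q>0\}$ and consider
$$ R\;:=\;\{\, x\in[0,1]^{E_{\rho}}\;:\;\textstyle\sum_{e\in Q\cap E_{\rho}}x_e\le 1\ \ \forall\,Q\in\sets^=_{\pi,\rho},\quad \sum_{e\in P\cap E_{\rho}}x_e\ge 1\ \ \forall\,P\in\mathcal{N}\,\}. $$
Because $\pi_P>0$ and $\rho\in\polystar$ force $P\cap E_{\rho}\neq\emptyset$ for $P\in\mathcal{N}$, every integral point of $R$ is the incidence vector of a set $S\subseteq E_{\rho}$ with $|S\cap Q|\le 1$ for all $Q\in\sets^=_{\pi,\rho}$ and $|S\cap P|\ge 1$ for all $P\in\mathcal{N}$, i.e.\ an ASC (properties \ref{prop:cut:positive}--\ref{prop:cut:S-hits-necessary-sets}). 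The constraint matrix of $R$ is a submatrix of the incidence matrix of $(E,\sets)$ and hence balanced, so by the classical integrality results for balanced matrices (see, e.g., \citep{conforti2006balanced})---for a balanced $0$-$1$ matrix $A$ whose rows are partitioned into $A^{\le}$ and $A^{\ge}$, the polytope $\{x:\mathbf{0}\le x\le\mathbf{1},\ A^{\le}x\le\mathbf{1},\ A^{\ge}x\ge\mathbf{1}\}$ is integral---it remains only to show that $R\neq\emptyset$.

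Establishing $R\neq\emptyset$ is the crux, and the step I expect to be the main obstacle, since this is where balancedness is genuinely used. The natural starting point is that $\rho$ restricted to $E_{\rho}$ already satisfies the upper bounds of $R$ (as $\sum_{e\in Q}\rho_e=\pi_Q\le 1$ for $Q\in\sets^=_{\pi,\rho}$), so only the lower bounds from $\mathcal{N}$ can be violated. I would assume $R=\emptyset$ and apply LP duality: the resulting Farkas certificate, using that the constraint matrix of $R$ and its transpose are both balanced, can be taken half-integral and, after clearing denominators and uncrossing, becomes a minimal infeasibility certificate consisting of finitely many sets from $\sets^=_{\pi,\rho}$ and from $\mathcal{N}$ (with multiplicities). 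Combining the identities $\sum_{e\in Q}\rho_e=\pi_Q$, the bounds $\sum_{e\in P}\rho_e\ge\pi_P$ and $\rho_e\le 1$, and---crucially---the non-domination of the members of $\mathcal{N}$ (which rules out the relations $P\sqsubseteq_{\pi,\rho}Q$ that could otherwise absorb the defect), one then shows that such a certificate forces an odd special cycle in $(E,\sets)$, contradicting balancedness. Efficiency follows from \cref{thm:decomposition-algo} ($\mathcal{O}(|E|^2)$ iterations), the bound $|\sets|=\mathcal{O}(|E|^2)$ for balanced hypergraphs \citep{heller1957linear} (so $\sets$ can be enumerated), and the observation that, in each iteration, the set $\sets^=_{\bar{\pi},\bar{\rho}}$, the family $\mathcal{N}$, an integral point of $R$ (a linear program over an integral polytope with $\mathcal{O}(|E|)$ variables and $\mathcal{O}(|E|^2)$ constraints), the value $\varepsilon_{\bar{\pi},\bar{\rho}}(S)$, and a maximum violated inequality of $\polystar$ are all computable in polynomial time in $|E|$; \cref{lem:feasible-relax} then turns the resulting $z$ into the perfect decomposition.
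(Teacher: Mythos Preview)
Your ``only if'' direction is correct and essentially the same as the paper's: the odd special cycle with $\rho\equiv\tfrac12$ on $C$ cannot be perfectly decomposed, and your vertex-cover phrasing is a clean variant of the paper's counting argument.

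For the ``if'' direction you take a genuinely different route from the paper, and there is a real gap. The paper does \emph{not} use \cref{alg:decomposition} or ASCs here at all. Instead it adds one slack element $e_P$ per $P\in\sets$, setting $E':=E\cup\{e_P:P\in\sets\}$ and $\sets':=\{P\cup\{e_P\}:P\in\sets\}$; this preserves balancedness, so by the Fulkerson--Hoffman--Oppenheim integrality theorem the polytope $\{y\in[0,1]^{E'}:\sum_{e\in P'}y_e\ge 1\ \forall P'\in\sets'\}$ is integral. Extending $\rho$ to $\rho'$ via $\rho'_{e_P}:=1-\pi^{\rho}_P$ lands in this polytope, and a Carath\'eodory decomposition of $\rho'$ into $0$--$1$ transversals, projected back to $E$, is directly a perfect decomposition. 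No ASC construction, no Farkas argument, no uncrossing is needed.

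Your approach, by contrast, reduces everything to the claim $R\neq\emptyset$, which you yourself flag as the crux and then only sketch: a Farkas certificate, half-integrality from balancedness of the transpose, an ``uncrossing'' step, and the assertion that non-domination together with these data force an odd special cycle. None of these steps is carried out, and the last two are far from routine; in particular it is not clear how the inequality $\pi_P>\sum_{e\in P\cap Q}\rho_e$ or $\pi_P\ge\pi_Q$ (which is what non-domination of $P$ against a tight $Q$ gives you) interacts with a putative dual certificate to produce a special cycle. Even the more basic question of whether an ASC must exist in every iteration for balanced $(E,\sets)$---which is what your strategy needs---does not follow from decomposition-friendliness alone, so you cannot bootstrap from the theorem's conclusion. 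The paper's slack-element trick sidesteps all of this and is both shorter and self-contained; I would recommend replacing your ``if'' argument with it.
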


\begin{proof}[sketch]
    To see that every decomposition-friendly system needs to be a balanced hypergraph, consider any odd-lenth special cycle $(C, \mathcal{C})$ and observe that the marginals defined by $\rho_e = \frac{1}{2}$ for $e \in C$ and $\rho_e = 0$ for $e \in E \setminus C$ do not have a perfect decomposition. 
    The existence of perfect decompositions in balanced hypergraphs can be established by a reduction to the case $\pi^{\rho} \equiv 1$, for which a perfect decomposition can be obtained using an integrality result of \mbox{\citet{fulkerson1974balanced}}.
     \qed
\end{proof}

\bibliographystyle{splncs04nat}
\renewcommand{\bibsection}{\section*{References}}
\bibliography{decomposition}

\clearpage

\appendix

\section{Applications}
\label{sec:applications}

In this section we provide several examples for the application of the decomposition results derived in this paper: Two variants of a security game, a robust randomized weighted coverage problem, and a randomized committee election procedure with diversity constraints.

\subsection{Security Games}

The following generic security game was first described in \cite{matuschke2023decomposition-full}.
We will give two natural use cases of this security game and show how these can be solved using the results from the present paper.

\subsubsection{A Generic Security Game}
Consider the following generic security game played on a set system $(E, \sets)$.
A \emph{defender} $D$ determines a distribution $z$ for a random subset $S \subseteq E$ of elements to inspect, where the inspection of $e \in E$ incurs a cost of $c_e$.
Her goal is to deter an \emph{attacker} $A$, who observes $z$ and selects a strategy $P \in \sets$ or remains inactive (at payoff $0$).
The attacker's expected payoff for strategy $P$ is $r_P - \sum_{S : S \cap P \neq  \emptyset} z_S \cdot \beta$, where $r_P$ is the reward received for the attack using $P$ and $\beta > 0$ is a penalty incurred if $A$ is detected by $D$ (which happens if $P \cap S \neq \emptyset$).
Note that remaining inactive is $A$'s optimal response to $z$ if and only if $\sum_{S : S \cap P \neq \emptyset} z_S \geq \pi_P := \frac{r_P}{\beta}$ for all $P \in \sets$.
Thus, the polyhedron $Z_{\pi}$ describes exactly those inspection strategies for $D$ that deter the attacker from taking any action.
If $D$ wants to deter the attacker at minimum expected cost, she needs to solve the problem 
\begin{align}
    \textstyle \min_{z \in \polyDist} \sum_{S \subseteq E} \sum_{e \in S} c_e z_S. \tag{G}\label{prob:security}
\end{align}
As pointed out in~\citep{matuschke2023decomposition-full}, this problem can solved efficiently if the following conditions are met:
\begin{enumerate}[label=(\roman*),leftmargin=*]
    \item $(E, \sets, \pi)$ is {\starsuff}.
    \item We can efficiently compute feasible decompositions of marginals $\rho \in \polystar$.
    \item We can efficiently separate the constraints of $\polystar$, e.g., by solving the optimization problem $\max_{P \in \sets} \pi_P - \sum_{e \in P} \rho_e$ for given $\rho \in [0, 1]^E$.
\end{enumerate}
Indeed, by (i), problem~\eqref{prob:security} can be reformulated as $\min_{\rho \in \polystar} \sum_{e \in E} c_e \rho_e$, for which an optimal solution, i.e., optimal marginals $\rho$, can be found using (iii). Using (ii), these marginals can be turned into a feasible decomposition $z \in \polyDist$ with $\sum_{e \in E} \sum_{e \in S} c_e z_S = \sum_{e \in E} c_e \rho_e$, hence an optimal solution for problem~\eqref{prob:security}.

\bigskip

Two interesting use cases of the problem described above are the following.

\subsubsection{Suppressing Smuggling Operations}
Assume that the attacker wants to establish a smuggling network on a tree $T = (V, E)$ and has to decide which subset of links $P \in \sets := 2^E$ to actively use.
The reward for a smuggling network that operates exactly on link set $P$ is given by $r_P := \sum_{\{v, w\} \in R_P} \alpha_{\{v,w\}}$, where $R_P := \left\{\{v,w\} \subseteq V \st T[v, w] \subseteq P \right\}$, $T[v, w]$ denotes the unique $v$-$w$-path in $T$, and $\alpha_{\{vw\}} \geq 0$ denotes the reward received by $A$ for connecting nodes~$v$ and~$w$.

\paragraph{Implications of our Results.}
Note that  
\begin{align*}
    r_P + r_Q & = \textstyle \sum_{\{v, v\} \in R_{P}} \alpha_{\{v, w\}}
    + \sum_{\{v, v\} \in R_{Q}} \alpha_{\{v, w\}} \\
    & \textstyle = \sum_{\{v, v\} \in R_{P} \cup R_{Q}} \alpha_{\{v, w\}}
    + \sum_{\{v, v\} \in R_{P} \cap R_{Q}} \alpha_{\{v, w\}}\\
    & \textstyle  \leq \sum_{\{v, v\} \in R_{P \cup Q}} \alpha_{\{v, w\}}
    + \sum_{\{v, v\} \in R_{P \cap Q}} \alpha_{\{v, w\}}\\
    & = r_{P \cup Q} + r_{P \cap Q}
\end{align*}
for all $P, Q \in \sets$,
where the inequality follows from the fact that 
$\{v, w\} \in R_{P} \cup R_{Q}$ implies $T[v, w] \subseteq P \cup Q$ and that
$\{v, w\} \in R_{P} \cap R_{Q}$ implies $T[v, w] \subseteq P \cap Q$.
Therefore, the function $r$, and hence also $\pi = \frac{r}{\beta}$, is supermodular.
Moreover, given $P \in \sets$, the value $\pi_P$ can be easily determined in polynomial time in $|E|$, hence we have a value oracle for $\pi$.
By our results in \cref{sec:contrapolymatroid}, conditions (i) and (ii) above are fulfilled, and also the problem $\max_{P \in \sets} \pi_P - \sum_{e \in P} \rho_e$ from (iii) can be solved in polynomial time using submodular function minimization~\cite{schrijver2003combinatorial}.
In particular, this implies that problem~\eqref{prob:security} can be solved in polynomial time in $|E|$ in this setting.

\subsubsection{Protecting an Energy Network}
Assume that the defender wants to protect the links of an energy network, represented by a connected graph $G = (V, E)$ with a power source $v_0 \in V$, from being sabotaged. 
The attacker's reward for sabotaging link set $P \subseteq E$ is given by $r_P := \sum_{v \in V_P} \alpha_v$, where $V_P \subseteq V$ is the set of nodes disconnected from the power source $v_0$ in $E \setminus P$ and $\alpha_v \geq 0$ is the reward for disconnecting node $v \in V$.
Note that we can restrict the attacker's strategies without loss of generality to $\sets := \left\{\delta(U) \st U \subseteq V \setminus \{v_0\}\right\}$, where $\delta(U) \subseteq E$ denotes the cut induced by $U$, i.e., the set of edges with exactly one endpoint in~$U$.

\paragraph{Implications of our Results.}
We show that the situation described above is a special case of the lattice-polyhedron setting described in \cref{sec:lattice-polyhedra}.
Indeed, because $G$ is connected, there is a one-to-one correspondence between $P \in \sets$ and the corresponding $U_P \subseteq V \setminus \{v_0\}$ with $\delta(U_P) = P$.
We can thus define a partial order on $\sets$ by
$$P \preceq Q :\Leftrightarrow U_P \subseteq U_Q.$$
Note that $(\sets, \preceq)$ is a lattice, with meet $P \wedge Q = \delta(U_P \cap U_Q)$ and join $P \vee Q = \delta(U_P \cup U_Q)$.
This lattice inherits \eqref{prop:submodularity} from submodularity of the cut function~\cite{schrijver2003combinatorial}. It also fulfils \eqref{prop:consecutivity}: Consider any $P, Q, R \in \sets$ with $P \prec Q \prec R$ and let $e \in P \cap Q$.
Then $e = \{v, w\}$ for some $v \in U_P \cap U_R$ and some $w \in V \setminus (U_P \cup U_R)$.
But this implies $v \in U_Q$ and $w \in V \setminus U_Q$ because $U_P \subseteq U_Q \subseteq U_R$ and hence $e \in Q$.
Moreover, $\pi = \frac{1}{\beta} \cdot r$ is supermodular with respect to $(\sets, \preceq)$ because $r_{P \cup Q} + r_{P \cap Q} = \sum_{v \in U_P} \alpha_v + \sum_{v \in U_Q} \alpha_v = r_P + r_Q$ for all $P, Q \in \sets$ and it is monotone because $r_P = \sum_{v \in U_P} \alpha_v \leq \sum_{v \in U_Q} \alpha_v = r_Q$ for $U_P \subseteq U_Q$ by nonnegativity of $\alpha$.
Finally, note that given $F \subseteq E$, the maximum member of the sublattice $\sets[F] := \{P \in \sets \st P \subseteq F\}$ with respect to $\preceq$ is $\delta(V \setminus V^F_0)$, where $V^F_0$ is the connected component containing $v_0$ in the graph $G^F = \{V, E \setminus F\}$.
We thus can answer queries to a lattice oracle for $(\sets, \preceq)$ in time $\mathcal{O}(|E|)$.

Thus, in the setting described above, (i) and (ii) are fulfilled by \cref{thm:lattice}, and (iii) follows from \cref{lem:lattice-optimizer} in \cref{app:lattice}.
In particular, problem~\eqref{prob:security} can be solved in polynomial time in $|E|$ in this setting.

\subsection{Robust Randomized Weighted Coverage with Uncertain Objective}
In robust optimization, randomization is often used to overcome pessimistic worst-case scenarios~\citep{kawase2019submodular,kobayashi2017randomized,kawase2019randomized,matuschke2018robust}.
As an example for the applicability of decomposition results in this area, we consider the following robust randomized weighted coverage problem, which falls in the framework of \emph{randomized robust optimization with unknown objective} introduced by~\citet{kawase2019randomized}.
Let $U, E$ be finite sets, where each $e \in E$ is associated with a subset $U_e \subseteq U$. Let further $\Omega$ be a set of scenarios, where each scenario $\omega \in \Omega$ corresponds to a reward vector $r^{\omega} \in \mathbb{R}_+^U$
and a cost vector $c \in \mathbb{R}_+^E$.
A decision maker faces the problem of selecting a subset $S \subseteq E$ obtaining a profit of 
$$\textstyle f^{\omega}(R) := \sum_{u \in \bigcup_{e \in S} U_e} r^{\omega}_u  - \sum_{e \in S} c^{\omega}_e$$
for some scenario $\omega \in \Omega$ that is not known to her at the point of decision-making.

To hedge against the worst case she therefore decides to employ randomization, selecting a distribution $z \in Z := \{z \in [0,1]^{2^E} \st \sum_{S \subseteq E} z_{S} = 1\}$ so as to maximize her worst-case expected profit
$$\textstyle \min_{\omega \in \Omega} \sum_{S \subseteq E} f^{\omega}(S) z_{S}.$$

\paragraph{Implications of our Results.}
We show that our results in \cref{sec:balanced} imply that the decision maker's problem can be solved efficiently if we assume that the underlying system is a balanced hypergraph (see \cref{sec:balanced} for definition).
\begin{theorem}
    The problem 
    \begin{align}
        \textstyle \max_{z \in Z} \min_{\omega \in \Omega} \sum_{S \subseteq E} f^{\omega}(S) z_{S} \label{prob:robust} \tag{R}
    \end{align} can be solved in polynomial time in $|U|$, $|E|$, $\Omega$, and the encoding sizes of $r$ and~$c$, when restricted to instances where $(U, \mathcal{U})$ for $\mathcal{U} := \{U_e \st e \in E\}$ is a balanced hypergraph.
\end{theorem}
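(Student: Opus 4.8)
The plan is to reduce the robust problem to a compact linear program over \emph{marginal} vectors, using that the relevant set system is balanced and hence decomposition-friendly. First I would pass to the ``transposed'' system: for each $u \in U$ set $P_u := \{e \in E \st u \in U_e\}$ and $\sets := \{P_u \st u \in U\}$, so that $u \in \bigcup_{e \in S} U_e$ holds exactly when $S \cap P_u \neq \emptyset$. A special cycle of $(E, \sets)$ of length $k$ is, reading the incidence between $U$ and $E$ in the other direction, a special cycle of $(U, \mathcal{U})$ of the same length; since $(U, \mathcal{U})$ has no odd special cycle, neither does $(E, \sets)$, so $(E, \sets)$ is a balanced hypergraph and hence decomposition-friendly by \cref{thm:balanced}. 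Consequently, for any distribution $z \in Z$ with marginal vector $\rho$ one has $\sum_{S \subseteq E} f^{\omega}(S) z_S = \sum_{u \in U} r^{\omega}_u\,\mathbf{Pr}_z[S \cap P_u \neq \emptyset] - \sum_{e \in E} c^{\omega}_e \rho_e$ for every $\omega \in \Omega$.

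The second step is to argue that one may restrict attention to marginal vectors and, within a fixed $\rho$, to a perfect decomposition. Since $r^{\omega} \geq 0$ and $\mathbf{Pr}_z[S \cap P_u \neq \emptyset] \leq \min\{\sum_{e \in P_u}\rho_e,\, 1\}$ for every $z$ with marginals $\rho$, the $\omega$-value of $z$ is at most $g^{\omega}(\rho) := \sum_{u \in U} r^{\omega}_u \min\{\sum_{e \in P_u}\rho_e,\, 1\} - \sum_{e \in E} c^{\omega}_e \rho_e$, with equality attained simultaneously for all $\omega$ when $z$ is a perfect decomposition of $\rho$ (which exists and is polynomial-time computable by \cref{thm:balanced}). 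Taking $\min_{\omega}$ and then the maximum, the $\leq$ direction follows by passing from an arbitrary $z \in Z$ to its marginal vector $\rho \in [0,1]^E$, and the $\geq$ direction by taking a perfect decomposition of an optimal $\rho$; hence $\max_{z \in Z}\min_{\omega \in \Omega} \sum_{S} f^{\omega}(S) z_S = \max_{\rho \in [0,1]^E} \min_{\omega \in \Omega} g^{\omega}(\rho)$.

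Finally I would solve the right-hand side. Each $g^{\omega}$ is concave, being a nonnegative combination of the concave maps $\rho \mapsto \min\{\sum_{e \in P_u}\rho_e,\, 1\}$ minus a linear function, so $\min_{\omega} g^{\omega}$ is concave and its maximum over the box $[0,1]^E$ equals the optimal value of the polynomial-size linear program
\[
  \max\Bigl\{\, t \ \st\ t \leq \textstyle\sum_{u} r^{\omega}_u y_u - \sum_{e} c^{\omega}_e \rho_e\ \ \forall\, \omega;\ \ y_u \leq \textstyle\sum_{e \in P_u}\rho_e,\ \ y_u \leq 1\ \ \forall\, u;\ \ \rho \in [0,1]^E \,\Bigr\},
\]
where at an optimum $y_u = \min\{\sum_{e \in P_u}\rho_e,\, 1\}$ because $r^{\omega}_u \geq 0$. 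Solving this LP yields optimal marginals $\rho^{\star}$ in polynomial time in $|U|$, $|E|$, $|\Omega|$ and the encoding sizes of $r$ and $c$; feeding $\rho^{\star}$ to the decomposition algorithm of \cref{thm:balanced} then produces an optimal $z^{\star}$ with support of polynomial size, which is the sought representation of an optimal randomized strategy.

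The main obstacle is the second step: showing that restricting first to marginals and then to perfect decompositions loses no optimality. This is where the structural hypothesis genuinely enters — it relies both on nonnegativity of the rewards $r^{\omega}$ (so that larger hitting probabilities are always preferable) and on the defining property of a perfect decomposition (reaching $\min\{\sum_{e \in P_u}\rho_e,1\}$ for every $u$ \emph{at once}), hence crucially on $(U, \mathcal{U})$, equivalently $(E, \sets)$, being a balanced hypergraph. A smaller point requiring care is the transfer of balancedness across the transposition of the $U$--$E$ incidence; the remaining ingredients — concavity of the $g^{\omega}$ and the polynomial size of the LP — are routine.
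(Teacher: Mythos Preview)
Your proposal is correct and follows essentially the same approach as the paper: both pass to the transposed system $(E,\sets)$ with $P_u=\{e:u\in U_e\}$, observe that balancedness of $(U,\mathcal{U})$ transfers to $(E,\sets)$, formulate a polynomial-size LP in the marginals $\rho$ together with auxiliary ``hitting'' variables (your $y_u$, the paper's $\pi_{P_u}$), and then invoke \cref{thm:balanced} to turn optimal marginals into an optimal distribution. The only cosmetic difference is that you fix the hitting variables to the perfect-decomposition values $\min\{\sum_{e\in P_u}\rho_e,1\}$ from the outset, whereas the paper leaves $\pi$ free in the LP and appeals to decomposition-friendliness for the resulting $(\rho,\pi)$; since $r^{\omega}\geq 0$, these lead to the same optimum.
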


\begin{proof}
For each $u \in U$, define $P_u := \{e \in E \st u \in U_e\}$ and $\sets := \{P_u \st u \in U\}$. 
Consider the following linear program:
\begin{align*}
    \begin{array}{rrlr}
        \textup{(LP1)} \quad \max \ & t &&\\[8pt]
        \text{s.t.} \ & t & \leq \sum_{u \in U} r^{\omega}_u \pi_{P_u} - \sum_{e \in E} c^{\omega}_e \rho_e & \quad \forall\, \omega \in \Omega\\[8pt]
        & \pi_P & \leq \sum_{e \in P} \rho_e & \forall\, P \in \sets\\[8pt]
        & \rho & \in [0, 1]^E &\\[8pt]
        & \pi & \in [0, 1]^U & 
    \end{array}
\end{align*}

The optimal value of (LP1) is an upper bound on the optimal value of the decision maker's problem~\eqref{prob:robust}, as shown in the following lemma.

\begin{lemma}\label{lem:coverage-UB}
    Let $z \in Z$. Define $\rho_e := \sum_{S : e \in S} z_S$ for $e \in E$, $\pi_P := \sum_{S : S \cap P \neq \emptyset} z_S$, and $t := \min_{\omega \in \Omega} \sum_{S \subseteq E} f^{\omega}(S) z_S$. Then $(\rho, \pi, t)$ is a feasible solution to (LP1).
\end{lemma}
\begin{proof}
    Note that $\pi^{\rho}_{P} = \sum_{S : S \cap P \neq \emptyset} z_S \leq \sum_{S : S \cap P \neq \emptyset} |S \cap P| z_S = \sum_{e \in P} \rho_e$ for all $P \in \sets$ and that $\rho \in [0, 1]^E$ and $\pi \in [0, 1]^E$ by construction.
    Let $\omega \in \Omega$.
    Then 
    \begin{align*}
        \textstyle \sum_{u \in U} r^{\omega}_u \pi_{P_u} - \sum_{e \in E} c_e \rho_e & \textstyle  = \sum_{u \in U} r^{\omega}_u \sum_{S : u \in \bigcup_{e \in S} U_e} z_S - \sum_{e \in E} c^{\omega}_e \sum_{S : e \in S} z_S \\
        & \textstyle = \sum_{S \subseteq E} z_S \cdot \left(
        \sum_{u \in \bigcup_{e \in S} U_e} r^{\omega}_u - \sum_{e \in S} c^{\omega}_e
        \right)\\
        & \textstyle = \sum_{S \subseteq E} f^{\omega}(S) z_S \geq t
    \end{align*}
    by definition of $t$. \hfill$\diamondsuit$
\end{proof}

The following lemma shows that the converse is also true if there is an optimal solution $(\rho, \pi, t)$ such that $\rho$ has a feasible decomposition for~$(E, \sets, \pi)$.

\begin{lemma}\label{lem:coverage-LB}
    Let $(\rho, \pi, t)$ be a feasible solution to (LP1).
    Let $z$ be a feasible decomposition of $\rho$ for $(E, \sets, \pi)$.
    Then $\min_{\omega \in \Omega} \sum_{S \subseteq E} f^{\omega}(S) z_S \geq t$.
\end{lemma}

\begin{proof}
    Because $z$ is a feasible decomposition for $(E, \sets, \pi)$, we have $\rho_e = \sum_{S : e \in S} z_S$ for all $e \in E$ and $\sum_{S : u \in \bigcup_{e \in S} U_e} z_S = \sum_{S : S \cap P_u \neq \emptyset} z_S \geq \pi_{P_u}$ for all $u \in U$, which implies
    \begin{align*}
        t & \textstyle \; \geq \;  \sum_{u \in U} r^{\omega}_u \pi_{P_u} - \sum_{e \in E} c_e \rho_e \\
        & \textstyle \; \geq \; \sum_{u \in U} r^{\omega}_u \sum_{S : u \in \bigcup_{e \in S} U_e} z_S - \sum_{e \in E} c^{\omega}_e \sum_{S : e \in S} z_S \\
        & \textstyle \;=\; \sum_{S \subseteq E} z_S \cdot \left(
        \sum_{u \in \bigcup_{e \in S} U_e} r^{\omega}_u - \sum_{e \in S} c^{\omega}_e
        \right)\\
        & \textstyle \;=\; \sum_{S \subseteq E} f^{\omega}(S) z_S
    \end{align*}
    for all $\omega \in \Omega$. \hfill$\diamondsuit$
\end{proof}

In particular, \cref{lem:coverage-UB,lem:coverage-LB} imply that (LP1) is an exact formulation problem~\eqref{prob:robust} if the sytem $(E, \sets)$ is decomposition-friendly, as in this case $\rho$ has a feasible decomposition for $(E, \sets, \pi)$ for any feasible solution $(\rho, \pi, t)$ because of the constraints $\pi_P \leq \sum_{e \in P} \rho_e$ for all $P \in \sets$.
Our results in \cref{sec:balanced} imply that $(E, \sets)$ is decomposition-friendly if and only if it is a balanced hypergraph, which is the case if an only if the system $(U, \mathcal{U})$ is a balanced hypergraph.
Hence, if $(U, \mathcal{U})$ is a balanced hypergraph, we can find an optimal solution to~\eqref{prob:robust} in polynomial time in $|E|$ and $|U|$ by computing an optimal solution $(\rho, \pi, t)$ of (LP1) and then computing a perfect decomposition of $\rho$. \qed
\end{proof}

\subsection{Committee Election with Diversity Constraints}

As mentioned in \cref{sec:introduction}, randomization is also often used in the design of decision procedures in
social choice to overcome the impossibility of reconciling certain axiomatic desiderata (such as neutrality and anonymity), see, e.g., \citep{brandl2016consistent}.

To illustrate the applicability of the particular decomposition setting studied in this paper to such procedures, consider a randomized committee election procedure, in which a committee of $k$ members (e.g., a reviewing board at a university, or a citizen assembly~\citep{flanigan2021fair}) is to be determined from a set of candidates $E$.
A set of $n$ voters indicates their preferences for the committee, each voter casting a single vote for one of the candidates.
Let $n_e$ denote the number of votes cast for candidate $e \in E$.
To achieve a proportional representation of all votes (and spread the workload), each time the committee meets, it is randomly composed by $k$ members from the candidate pool, with the constraint that each candidate is included with marginal probability $\rho_e := \frac{k \cdot n_e}{n}$ (for simplicity, we assume that no candidate receives more than a $\frac{1}{k}$ fraction of the votes).
To ensure a diverse composition of the committee, different (possibly overlapping) groups of candidates $P_1, \dots, P_\ell \subseteq E$, indicating different backgrounds (such as gender, ethnicity, department membership, employment group etc.), each should be represented with a sufficiently high probability at any given meeting, say $\pi_P$ for $P \in \sets := \{P_1, \dots, P_\ell\}$.

We are thus looking for a distribution that determines a random committee~\mbox{$S \subseteq E$} of size $|S| = k$ such that $\prob{e \in S} = \rho_e$ for each $e \in E$, i.e., each candidate $e$ is in the committee with probability $\rho_e$,
and $\prob{S \cap P \neq \emptyset} \geq \pi_P$ for $P \in \sets$, i.e., each group $P$ is represented in the committee with probability at least $\pi_P$.
Note that the only difference to the decomposition problem introduced in \cref{sec:introduction} is the additional constraint that the random set $S$ must be of size~$k$. 

How to deal with such a cardinality constraint for the sets in the support of our distribution is an interesting question for future research. Here, we give a simple approximation result that shows that we can accommodate this cardinality constraint at a small loss in the probability of intersecting the sets in $\sets$ if $|\sets| \leq k$ and if each $\rho_e$ is reasonably small. In particular, we can apply the transformation described in the theorem below to any decomposition derived with the results in this paper to obtain a randomized committee election that approximately satisfies the diversity constraints if $|\sets| \leq k$, e.g., obtaining near-perfect decompositions if $(E, \sets)$ is a balanced hypergraph, as described in \cref{sec:balanced}.\footnote{Note that such a balanced hypergraph arises, e.g., when $\sets = \sets_1 \dot\cup \sets_2$, with $P \cap Q = \emptyset$ if $P, Q \in \sets_i$ for some~$i \in \{1, 2\}$, i.e., if we consider two diversity criteria where, for each criterion, each candidate belongs to at most one group.} 

\begin{theorem}\label{thm:cardinality}
Let $(E, \sets)$ be a set system, let $\pi : \sets \rightarrow (-\infty, 1]$ and let $\rho \in [0, 1]^E$ such that $\sum_{e \in E} \rho_e = k$ for some $k \in \mathbb{N}$ with $|\sets| \leq k$.
Define $\varepsilon := \max_{e \in E} \rho_e$ and $\pi'_P = (1 - \varepsilon) \pi_P$ for $P \in \sets$. 
If there is a feasible decomposition of $\rho$ for $(E, \sets, \pi)$, then there is a feasible decomposition $z^*$ of $\rho$ for $(E, \sets, \pi')$ such that $|S| = k$ for all $S \subseteq E$ with $z^*_S > 0$.
\end{theorem}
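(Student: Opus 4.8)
The plan is to start from a feasible decomposition $z$ of $\rho$ for $(E, \sets, \pi)$, whose support consists of sets of varying cardinalities, and transform it into a decomposition $z^*$ supported only on sets of size exactly $k$, at the cost of scaling the requirements by $(1-\varepsilon)$. The natural first step is to handle sets in the support that are \emph{too small} and sets that are \emph{too large} separately. For a set $S$ with $|S| < k$ in the support, we can split the mass $z_S$ and pad $S$ with additional elements to reach size $k$; for a set $S$ with $|S| > k$, we split $z_S$ and remove elements to reach size $k$. Doing this naively destroys the marginals, so the key idea is to perform the padding/trimming in a \emph{randomized, marginal-preserving} way: think of each $z_S$ as being distributed over completions/restrictions of $S$ of size $k$ so that, conditioned on starting from $S$, the probability that element $e$ lands in the final set equals $\rho_e$ suitably renormalized. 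A clean way to organize this is via a single global exchange/rounding step: since $\sum_{e \in E}\rho_e = k$, the vector $\rho$ lies in the base polytope of a uniform matroid of rank $k$, so $\rho$ is a convex combination of characteristic vectors of $k$-element sets; we want to couple this decomposition with the given $z$.

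Concretely, I would proceed as follows. First I would reduce to the case where the support of $z$ consists only of sets of size $\le k$: any $S$ with $|S|>k$ can be replaced by a uniform-at-random $k$-subset of $S$ after first showing this only helps, but it is cleaner to instead note that we may assume each $S$ in the support satisfies $S\subseteq \bigcup\sets$-type structure — actually the simplest route is to \emph{discard} the too-large sets differently. Let me instead take the following uniform approach: replace $z$ by the decomposition that, for each $S$ in the support, picks $e \in E$ with probability proportional to $\rho_e$, independently across a fresh source of randomness, to ``grow'' $S$; equivalently, define $z^*$ by starting from $z$ and, in each atom $S$, running the following completion process until the set has size $k$: repeatedly sample an element from $E\setminus S$ with probabilities proportional to the residual $\rho$ restricted to $E\setminus S$ (and when $|S|>k$, run the dual process, removing elements with probabilities proportional to $\rho$ restricted to $S$). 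The point is to choose these conditional distributions so that the resulting marginal of every $e$ is exactly $\rho_e$ — this is where $\sum_e\rho_e = k$ is used — and so that a set $P\in\sets$ that was hit by $S$ is still hit by the completed set (for the growing case this is automatic since we only add elements; for the shrinking case, hitting $P$ can be lost, and this is exactly where the $(1-\varepsilon)$ loss enters: the probability that the single ``witness'' element of $P\cap S$ survives the trimming is at least $1-\varepsilon$, using $|\sets|\le k$ to control how many elements must be removed and $\varepsilon = \max_e\rho_e$ to bound the per-element removal probability via a union bound over the at most $|\sets|\le k$ excess removals).

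The key steps, in order: (1) show $\rho$ being in the uniform-matroid base polytope lets us set up the conditional completion/trimming distributions so that $\prob{e\in S^*}=\rho_e$ is preserved — this is a marginal-matching computation, and I would phrase it by processing elements in a fixed order and maintaining that the conditional inclusion probabilities telescope correctly; (2) verify that for each $P\in\sets$, $\prob{P\cap S^*\neq\emptyset} \ge (1-\varepsilon)\,\prob{P\cap S\neq\emptyset} \ge (1-\varepsilon)\pi_P = \pi'_P$, by conditioning on $P\cap S\neq\emptyset$ and lower-bounding the survival probability of a witness element through the (at most $k - |\sets|$-many, hence at most $k$-many) trimming operations, each of which removes a fixed witness with probability at most $\varepsilon$; (3) conclude $z^*$ is supported on $k$-sets by construction and is a genuine decomposition of $\rho$. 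I expect the main obstacle to be step~(1): making the padding and trimming simultaneously marginal-preserving requires choosing the auxiliary distributions carefully (a deterministic pipeage/swap-rounding style argument, \`a la swap rounding in the matroid polytope, is the cleanest candidate), and one must be careful that these operations, applied atom-by-atom to $z$, genuinely sum to the correct global marginal $\rho_e$ rather than merely preserving $\sum_e$ of something. The survival bound in step~(2), by contrast, is a routine union bound once the trimming distribution is pinned down, and the hypothesis $|\sets|\le k$ is used precisely to ensure at most $k$ trim steps and hence a clean $(1-\varepsilon)$ factor.
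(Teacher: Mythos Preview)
Your proposal has a genuine gap: the roles of the hypotheses $|\sets|\le k$ and $\varepsilon=\max_e\rho_e$ are misidentified, and the argument you sketch for the $(1-\varepsilon)$ survival bound does not go through. You claim that when trimming a set $S$ with $|S|>k$, there are ``at most $k-|\sets|$-many, hence at most $k$-many'' removals, each deleting a fixed witness with probability at most $\varepsilon$. Neither assertion is justified: the number of removals is $|S|-k$, which is not bounded by anything involving $|\sets|$; and if you remove elements proportionally to $\rho$ restricted to $S$, the probability of removing a specific $e$ is $\rho_e/\sum_{f\in S}\rho_f$, which can exceed $\varepsilon$ when $\sum_{f\in S}\rho_f<1$. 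So step~(2) does not give the stated factor.

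The paper's proof uses these hypotheses very differently, and its mechanism for the $(1-\varepsilon)$ loss is not a survival bound at all. First it eliminates trimming entirely: each $S$ in the support of $z$ is replaced by a set $S'$ of \emph{representatives}, one $e_P\in P\cap S$ for each $P\in\sets$ with $P\cap S\neq\emptyset$; this keeps all hitting events, may lower marginals, and gives $|S'|\le|\sets|\le k$ --- this is where $|\sets|\le k$ is used. Second, it scales the resulting distribution by $(1-\varepsilon)$ and puts the leftover mass $\varepsilon$ on $\emptyset$, so that $\hat z_\emptyset\ge\varepsilon$. Third, it pads every $S$ in the support up to size $k$ by solving a transportation problem that distributes the residual marginal $\bar\rho_e=\rho_e-\sum_{S\ni e}\hat z_S$ across the sets. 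The crux is feasibility of this transportation LP, and it is exactly the slack $\hat z_\emptyset\ge\varepsilon\ge\rho_e\ge\bar\rho_e$ that makes the cut condition hold when $|V|<k$. In other words, the $(1-\varepsilon)$ factor is the price of creating enough room in the empty set to absorb any small group of residual marginals during padding --- not a bound on trimming loss. Your proposal does not contain this idea, and the swap-rounding/pipeage intuition you mention does not supply it, because the difficulty is precisely to couple the $k$-set decomposition of $\rho$ with the given $z$ so that hitting events are preserved.
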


\begin{proof}
    Let $z$ be any feasible decomposition of $\rho$ for $(E, \sets, \pi)$.
    We iteratively construct a new vector $z' \in [0, 1]^{2^E}$, starting from $z' = 0$, doing the following for each $S \subseteq E$ with $z_S > 0$: 
    For each $P \in \sets$ with $P \cap S \neq \emptyset$ select an arbitrary $e_P \in P \cap S$ and let $S' := \{e_P \st P \in \sets, S \cap P \neq \emptyset\}$ and increase $z'_{S'}$ by $z_S$.
    Note that by construction, $|S| \leq |\sets| \leq k$ for each $S \subseteq E$ with $z'_S > 0$.
    
    Now define $\hat{z} \in [0, 1]^{2^E}$ by $$\hat{z}_S := \begin{cases}
    (1 - \varepsilon) \cdot z'_\emptyset + \varepsilon & \text{ for } S = \emptyset,\\
    (1 - \varepsilon) \cdot  z'_S & \text{otherwise}.\\
    \end{cases}$$
    Note that $\sum_{S \subseteq E} \hat{z}_S = \varepsilon + (1 - \varepsilon)\sum_{S \subseteq E} z'_S = 1$ and
    $$\textstyle \sum_{S : e \in S} \hat{z}_S = (1 - \varepsilon) \cdot \sum_{S : e \in S} z'_S \leq (1 - \varepsilon) \cdot \sum_{S : e \in S} z_S = (1 - \varepsilon) \cdot \rho_e$$ for all $e \in E$, and moreover,
    \begin{align*}
        \textstyle \sum_{S : S \cap P \neq \emptyset} \hat{z}_S & \textstyle = (1 - \varepsilon) \cdot \sum_{S : S \cap P \neq \emptyset} z'_S \\
        & \textstyle = (1 - \varepsilon) \cdot \sum_{S : S \cap P \neq \emptyset} z_S \geq (1 - \varepsilon) \cdot \pi_P
    \end{align*} for $P \in \sets$ by construction of $z'$.

    Let $\mathcal{S} := \{S \subseteq E \st \hat{z}_S > 0\}$.
    For $e \in E$, define $\bar{\rho}_e := \rho_e - \sum_{S : e \in S} \hat{z}_S$.
    Consider the following linear program:
    \begin{align*}
        \begin{array}{rlr}
            (\textup{LP}_k) \qquad 
            \sum_{e \in E : e \notin S} x_{e,S} & = (k - |S|) \cdot \hat{z}_S & \quad \forall\, S \in \mathcal{S}\\[8pt]
            \sum_{S \in \mathcal{S} : e \notin S} x_{e,S} & = \bar{\rho}_e & \quad \forall\, e \in E\\[8pt]
            x_{e,S} & = 0 & \quad \forall\, S \in \mathcal{S}, e \in S\\[8pt]
            x_{e,S} & \in [0, \hat{z}_S] & \forall\, e \in E, S \in \mathcal{S}
        \end{array}     
    \end{align*}

    \begin{claim}
        The linear program $(\textup{LP}_k)$ has a feasible solution.
    \end{claim}
    \begin{proof}
        Note that $(\textup{LP}_k)$ describes an instance of the feasibility version of a transportation problem and that $$\textstyle  
        \sum_{e \in E} \bar{\rho}_e = \sum_{e \in E} \rho_e - \sum_{e \in E} \sum_{S : e \in S} \hat{z}_S
        = \sum_{S \in \mathcal{S}} (k - |S|) \cdot \hat{z}_S$$
        where the second identity follows from $\sum_{e \in E} \rho_e = k$ and $\sum_{S \in \mathcal{S}} z'_S = 1$.
        Hence, feasibility of $(\textup{LP}_k)$ is equivalent to the cut condition, which in this case is equivalent to
        \begin{align*}
             \textstyle 
             \sum_{S \in W} (k - |S|) \cdot  \hat{z}_S + 
             \sum_{S \in \mathcal{S} \setminus W} |V \setminus S| \cdot \hat{z}_S
             \geq \sum_{e \in V} \bar{\rho}_e
        \end{align*}
        for all $V \subseteq E$ and all $W \subseteq \mathcal{S}$.
        We distinguish two cases:
        \begin{itemize}
            \item If $|V| \geq k$, then
        \begin{align*}
             \textstyle 
             \sum_{S \in W} (k - |S|) \cdot  \hat{z}_S + 
             \sum_{S \in \mathcal{S} \setminus W} |V \setminus S| \cdot \hat{z}_S 
             & \textstyle 
             \geq \sum_{S \in \mathcal{S}} (k - |S|) \cdot  \hat{z}_S
             \\ & \textstyle
             = \sum_{e \in E} \bar{\rho}_E \geq \sum_{e \in V} \bar{\rho}_e
        \end{align*}
            for all $W \subseteq \mathcal{S}$.
            \item If $|V| < k$, then 
        \begin{align*}
            \textstyle 
             \sum_{S \in W} (k - |S|) \cdot  \hat{z}_S + 
             \sum_{S \in \mathcal{S} \setminus W} |V \setminus S| \cdot \hat{z}_S 
             & \textstyle 
             \geq |V| \cdot \hat{z}_{\emptyset}
             \\ & \textstyle
             \geq |V| \cdot \varepsilon \geq \sum_{e \in V} \bar{\rho}_e 
        \end{align*}
            for all $W \subseteq \mathcal{S}$, where the final two inequalities follow from $\hat{z}_{\emptyset} \geq \varepsilon \geq \rho_{e} \geq \bar{\rho}_e$ for all $e \in E$.\hfill$\diamondsuit$
        \end{itemize}        
    \end{proof}

    Given a feasible solution $x$ to $(\textup{LP}_k)$, we obtain  $z^*$ by defining it as the distribution of a random set $S^*$ constructed as follows.
    Fix an arbitrary ordering~$e_1, \dots, e_m$ of the candidates, where $m := E$.
    For $S \in \mathcal{S}$ and $\tau \in [0, 1]$, define
    $$T_{S,\tau} := \{e_i \st \alpha_i \leq \tau + h < \alpha_{i+1} \text{ for some } h \in \mathbb{Z}_+\},$$
    where $\alpha_0 := 0$ and $\alpha_i := \alpha_{i-1} + \frac{x_{e,{S}}}{\hat{z}_{S}}$ for $i \in [m]$.
    Define the random set~$S^*$ by~$S^* := \hat{S} \cup T_{\hat{S},\tau}$ 
    where $\hat{S} \in \mathcal{S}$ is drawn from the distribution specified by $\hat{z}$ and, independently, $\tau \sim U[0, 1]$ is drawn uniformly at random from $[0, 1]$.
    We will show that the random set $S^*$ thus constructed fulfils $\prob{|S^*| = k} = 1$, $\prob{e \in S^*} = \rho_e$ for all $e \in E$ and $\prob{S^* \cap P \neq \emptyset} \geq \pi_P$ for all $P \in \sets$.
    Hence $z^*$ defined by $z^*_T := \prob{S^* = T}$ is a feasible decomposition of $\rho$ for $(E, \sets, \pi)$ with $|T| = k$ for all $T$ with $z^*_T > 0$.

    \begin{claim}
        For $\tau \sim U[0, 1]$ and any fixed $S \in \mathcal{S}$, it holds that
        \begin{enumerate}
            \item $\prob{|T_{S,\tau}| = k - |S|} = 1$,\label{cl:T-cardinality}
            \item $\prob{e_i \in T_{S,\tau}} = \alpha_{i} - \alpha_{i-1} = \frac{x_{e,S}}{\hat{z}_S}$ for all $e \in E$, and\label{cl:prob-ei}
            \item $\prob{S \cap T_{S,\tau} = \emptyset} = 1$.\label{cl:S-T-disjoint}
        \end{enumerate}
    \end{claim}
    \begin{proof}
        Note that $\alpha_{i} - \alpha_{i-1} \leq 1$ for all $i \in [m]$ and that $\alpha_m = \frac{\sum_{e \in E} x_{e, S}}{\hat{z}_{S}} = k - |S|$ for all $S \in \mathcal{S}$.
        Hence, for $0 < \tau < 1$, the set $\{\tau + h \st h \in \mathbb{Z}_+\}$ intersects with exactly $k - |S|$ distinct intervals of the form $(\alpha_{i-1}, \alpha_{i}]$ for $i \in [m]$.
        Thus $\prob{|T_{S,\tau}| = k - |S|} = 1$. 
        Moreover, $\prob{e_i \in T_{S,\tau}} = \alpha_{i} - \alpha_{i-1} = \frac{x_{e,S}}{\hat{z}_S}$ for all $e \in E$.
        The third statement follows from the second statement and $x_{e,S} = 0$ for $e \in S$. 
        \hfill$\diamondsuit$
    \end{proof}
    Thus, for the random set $\hat{S}$ drawn according to $\hat{z}$, we obtain 
    $$\prob{|S^*| = k} = \prob{|\hat{S} \cup T_{\hat{S},\tau}| = k} = 1$$ by statements~\ref{cl:T-cardinality}~and~\ref{cl:S-T-disjoint}. Moreover
    \begin{align*}
        \prob{e \in S^*} & \textstyle = \prob{e \in \hat{S} \cup T_{\hat{S},\tau}} = \prob{e \in \hat{S}} + \prob{e \notin \hat{S} \; \wedge \; e \in T_{\hat{S},\tau}}\\
        & \textstyle = \sum_{S \in \mathcal{S} : e \in S} \hat{z}_S + \sum_{S \in \mathcal{S} : e \notin S} \prob{\hat{S} = S} \cdot \prob{e \in T_{\hat{S},\tau} \big| \hat{S} = S}\\
        & \textstyle = \sum_{S \in \mathcal{S} : e \in S} \hat{z}_S + \sum_{S \in \mathcal{S} : e \notin S} \hat{z}_S \cdot \frac{x_{e,S}}{\hat{z}_S} = \rho_e
    \end{align*}
    for all $e \in E$ where the last indentity follow from feasibility of $x$ for $(\textup{LP}_k)$ and the definition of $\bar{\rho}$.
    Finally, note that $$\textstyle \prob{S^* \cap P \neq \emptyset} \geq \prob{\hat{S} \cap P \neq \emptyset} = \sum_{S : S \cap P \neq \emptyset} \hat{z}_S \geq \pi_P$$ for all $P \in \sets$. \qed
\end{proof}

\section{Decomposition Algorithm}

\subsection{Comparison with the Algorithm of \citet{dahan2021probability}}
\label{app:dahan-et-al}

\cref{alg:decomposition} can be seen as a generalization of the algorithm presented by \citet[Algorithm~1]{dahan2021probability} to compute feasible decompositions under the conservation law \eqref{eq:conservation} in directed acyclic graphs, which in turn is a generalizatoin of an earlier algortihm by \citet{hoang1994efficient} for fractional coloring problems on comparability graphs.
Indeed, both our \cref{alg:decomposition} and the one of \citet{dahan2021probability} follow the same principle of iteratively adding a new set $S$ to the support of the decomposition and increasing its probability by the maximum value that maintains feasibility of the resulting residual marginals and requirements.

An important novelty in our algorithm is the use of an explicitly defined dominance relation incorporated in the definition of admissible support candidates, providing a clear abstraction of the properties needed for the algorithm to work correctly in different settings.
Using the dominance relation, our algorithm needs to maintain only a simple underestimator $\bar{\pi}$ of the actual residual requirements for the solution constructed so far.\footnote{Given a vector $z \in [0, 1]^{2^E}$ and a requirement function $\pi$, the residual requirement for $P \in \sets$ corresponds to the value $\pi_P - \sum_{S : S \cap P \neq \emptyset} z_S$. Note that $\bar{\pi}$ in our algorithm underestimates these values, as $\bar{\pi}_P$ is decreased in every iteration of our algorithm for every $P \in \sets$, even when $P \cap S \neq \emptyset$, resulting in
$\bar{\pi}_P = \pi_P - \sum_{S : S \neq \emptyset} z_S$.}
Because in our algorithm, this estimator $\bar{\pi}$ arises from the original requirements $\pi_P$ by subtracting the same constant for all $P \in \sets$, it maintains the original structure of the system $(E, \sets, \pi)$ throughout the algorithm, facilitating the construction of ASCs for a wide variety of set systems.

The algorithm of~\citet{dahan2021probability} uses a different approach, maintaining an exact account of the residual requirements~(cf.~\citep[Eq.~9]{dahan2021probability}) and eliminating certain $P \in \sets$ that are dominated with respect to a more rigid notion of dominance (implicit in~\citep[Lemma~3]{dahan2021probability}\footnote{Using the notation of the present paper, \citep[Lemma~3]{dahan2021probability} shows that when $P$ is removed from the system, there is $Q$ with $P \cap E_{\rho} \subseteq Q \cap E_{\rho}$ and $\sum_{e \in P} \rho_e - \pi_P \geq \sum_{e \in Q} \rho_e - \pi_Q$. Note that if $P$ and $Q$ fulfil these conditions, then $P \subseteq_{\pi,\rho} Q$, or $\pi_P = \pi_Q$ and $P \cap E_{\rho} = Q \cap E_{\rho}$, but the converse does not hold in general. Thus, our dominance relation is a strict relaxation of the dominance relation employed in \citep[Lemma~3]{dahan2021probability}, allowing for the ASC $S$ to intersect with fewer sets.}) from the system.
Their analysis shows that when applied on a DAG (or, equivalently, system of maximal chains of a partially ordered set), the remaining system stays closed under the $\times_e$-operation and the residual requirements fulfil \eqref{eq:conservation} on that subsystem~\citep[Proposition~2]{dahan2021probability}.
However, this approach does not seem to generalize easily to other set systems. For example, in general digraphs/abstract networks the $\times_e$-operator is not closed with respect to the sets removed and exact residual requirements may not fulfil \eqref{eq:conservation-weak}; similarly, in the lattice polyhedron setting from \cref{sec:lattice-polyhedra}, meet and join operations are not closed under the removed sets and exact residual requirements do not, in general, fulfil supermodularity with respect to the lattice.

We further remark that for the case of DAGs, our \cref{alg:decomposition}, using the ASCs presented in \cref{sec:abstract-networks}, indeed is equivalent to the algorithm of \citet{dahan2021probability}. Indeed, the rule for constructing ASCs in general abstract networks presented in \cref{sec:abstract-networks} is a generalization of the rule used to construct the set $S^k$ in \citep[Algorithm~1]{dahan2021probability}.

\subsection{Proof of \cref{lem:dominance-transitive} ($\sqsubseteq_{\pi, \rho}$ is a Partial Order)}

Recall the definition of the dominance relation $\sqsubseteq_{\pi, \rho}$: For $P, Q \in \sets$ we write $P \sqsubseteq_{\pi, \rho} Q$ if $\pi_P \leq \pi_Q - \sum_{e \in Q \setminus P} \rho_e$ and $\pi_P < \pi_Q$, or if $P = Q$.
We prove that $\sqsubseteq_{\pi, \rho}$ is indeed a partial order.

\restateLemDominanceTransitive*

\begin{proof}
    Observe that $\sqsubseteq_{\pi,\rho}$ is reflexive by definition.
Moreover, if $P \sqsubseteq_{\pi,\rho} Q$ for $P \neq Q$ then $\pi_P < \pi_Q$ and hence $Q \not\sqsubseteq_{\pi,\rho} P$. 
Thus $\sqsubseteq_{\pi,\rho}$ is also antisymmetric.
For transitivity let $P, Q, R \in \sets$ be three pairwise different paths with $P \sqsubseteq_{\pi,\rho} Q$ and $Q \sqsubseteq_{\pi,\rho} R$.
Note that
\begin{align*}
    \textstyle \pi_{P} & \textstyle \;\leq\; \pi_Q - \sum_{e \in Q \setminus P} \rho_e \\
    & \textstyle \;\leq\; \pi_R - \left(\sum_{e \in R \setminus Q} \rho_e + \sum_{e \in Q \setminus P} \rho_e\right)
    \;\leq\; \pi_R - \sum_{e \in R \setminus P} \rho_e
\end{align*}
because $R \setminus P \subseteq (R \setminus Q) \cup (Q \setminus P)$ and $\rho_e \geq 0$ for all $e \in E$.
Moreover, note that~$\pi_P < \pi_Q < \pi_R$, we conclude that $P \sqsubseteq_{\pi,\rho} R$.
Hence $\sqsubseteq_{\pi,\rho}$ also fulfills transitivity. \qed
\end{proof}

\subsection{Missing Proofs from \cref{sec:algorithm-analysis}}
\label{app:algorithm-analysis}

\restateLemCInvariants*

\begin{proof}
We prove the three statements separately.
\begin{enumerate}[label=(\alph*),leftmargin=*,align=left]
    \item 
    We prove the statement, even for $k \in K \cup \{0\}$, by induction on $k$.
    The base case $k = 0$ holds because $\rho_e^{(1)} = \rho_e \geq 0$ for $e \in E$. For the induction step with $k > 0$, note that 
    $$\textstyle \rho_e^{(k + 1)} = \rho^{(k)} - \varepsilon^{(k)} \cdot \incidence[e]{S^{(k)}}  = \rho_e - \sum_{i = 1}^{k} \varepsilon^{(i)} \cdot \incidence[e]{S^{(i)}}$$
    where the second identity uses the induction hypothesis. Moreover, note that $\rho^{(k)}_e - \varepsilon^{(k)} \cdot \incidence[e]{S^{(k)}} \geq 0$ for all $e \in E$ as $\varepsilon^{(k)} \leq \min_{e \in S^{(k)}} \rho^{(k)}_e$ by construction.
    
    \item 
    We prove the statement via induction on $k$. 
Note that the statement holds in the base case $k = 1$ because $\rho^{(1)} = \rho \in \polystar$.
For the induction step note that
\begin{align}
    \textstyle \sum_{e \in P} \rho^{(k+1)}_e & \textstyle  \;=\; \sum_{e \in P} \left( \rho^{(k)}_e - \incidence[e]{S^{(k)}} \cdot \varepsilon^{(k)} \right) \notag\\
    & \textstyle \;=\; \sum_{e \in P} \rho^{(k)}_e \ -\ |P \cap S^{(k)}| \cdot \varepsilon^{(k)}. \label{eq:cond-induction}
\end{align}
We distinguish two cases.
\begin{itemize}
    \item If $|P \cap S^{(k)}| \leq 1$, then we can apply the induction hypothesis to conclude that the right-hand side of \eqref{eq:cond-induction} is at least $\pi^{(k)}_P - \varepsilon^{(k)} = \pi^{(k+1)}_P$.
    \item If $|P \cap S^{(k)}| > 1$, then $\varepsilon^{(k)} \leq \frac{\pi^{(k)}_P - \sum_{e \in P} \rho^{(k)}_e}{1 - |P \cap S^{(k)}|}$ by construction in the algorithm.
    From this we conclude that 
    \begin{align*}
        \textstyle |P \cap S^{(k)}| \cdot \varepsilon^{(k)} & \textstyle  = \varepsilon^{(k)} + (|P \cap S^{(k)}| - 1) \cdot \varepsilon^{(k)} \\
        & \textstyle  \leq \varepsilon^{(k)} - \bigg( \pi^{(k)} - \sum_{e \in P} \rho^{(k)}_e \bigg).
    \end{align*}
    Combining this with \eqref{eq:cond-induction} yields $\sum_{e \in P} \rho^{(k+1)}_e \geq \pi^{(k)} - \varepsilon^{(k)} = \pi^{(k+1)}_P$. 
\end{itemize}

    \item 
We first show that $S^{(k)} \neq \emptyset$.
Let $\sets' := \{P \in \sets : \pi_{P}^{(k)} > 0\}$.
By \cref{lem:dominance-transitive}, there exists $P \in \sets'$ that is non-dominated in $\sets'$ with respect to $\sqsubseteq_{\pi^{(k)},\rho^{(k)}}$.
Because $S^{(k)}$ is an ASC, property~\ref{prop:cut:S-hits-necessary-sets} implies $S^{(k)} \cap P \neq \emptyset$.

We proceed to show $\varepsilon^{(k)} > 0$. Recall that $\varepsilon^{(k)} = \varepsilon_{\pi^{(k)},\rho^{(k)}}$ is the minimum of three terms.
The first term, $\min_{e \in S^{(k)}} \rho^{(k)}_e$ is positive by~\ref{prop:cut:positive}.
Also the second term $\max_{P \in \sets} \pi^{(k)}_P$ is positive by termination criterion of the while loop.
The third term is 
\begin{align*}
    \delta_{\pi^{(k)},\rho^{(k)}}(S^{(k)}) = {\textstyle \inf_{P \in \sets : |P \cap S^{(k)}| > 1}}\ 
    \frac{ \pi^{(k)}_P - \sum_{e \in P} \rho^{(k)}_e }{ 1 - |P \cap S^{(k)}| }.
\end{align*}
Note that either $\delta_{\pi^{(k)},\rho^{(k)}}(S^{(k)}) = \infty$ (in which case $\varepsilon^{(k)}$ is equal to one of the first two terms and we are done) or there is $P' \in \sets$ with $|P' \cap S^{(k)}| > 1$ that attains the infimum in the definition of $\delta_{\pi^{(k)},\rho^{(k)}}(S^{(k)})$.
Note that $|P' \cap S^{(k)}| > 1$ implies $P' \notin \sets^{=}_{\pi^{(k)},\rho^{(k)}}$ by property~\ref{prop:cut:S-tight-sets} and hence $\pi^{(k)}_{P'} - \sum_{e \in P'} \rho^{(k)}_e < 0$ by statement~\ref{inv:pi}.
Therefore $\delta_{\pi^{(k)},\rho^{(k)}}(S^{(k)}) =  \frac{ \pi^{(k)}_{P'} - \sum_{e \in P'} \rho^{(k)}_e }{ 1 - |P' \cap S^{(k)}| } > 0$. \qed
\end{enumerate}
\end{proof}

\restateLemCIterations*

\begin{proof}
    Note that in every iteration $k \in K$ at least one of the following three statements is fulfilled:
    \begin{enumerate}
        \item There is $e \in S^{(k)}$ such that $\rho^{(k)}_e = \varepsilon^{(k)}$.
        \item There is $P \in \sets$ with $\pi^{(k)}_P = \varepsilon^{(k)}$.
        \item There there is $P \in \sets$ with $|P \cap S^{(k)}| > 1$ and 
        $\frac{ \pi^{(k)}_P - \sum_{e \in P} \rho^{(k)}_e }{ 1 - |P \cap S^{(k)}|} = \varepsilon^{(k)}$.
    \end{enumerate}
    If at least one of the first two statements holds for iteration $k \in K$, we say $k$ is an \emph{element iteration}. Otherwise, we say that $k$ is an \emph{member iteration}.
    Note that if~$k$ is an element iteration of the first type, then $\rho^{(k+1)}_e = 0$ for the corresponding element $e \in S^{(k)}$.
    Moreover, if $k$ is an element iteration of the second type, then the while loop terminates after this iteration because $\pi^{(k)}_Q - \varepsilon^{(k)} \leq 0$ for all $Q \in \sets$. Moreover, such an iteration can only be preceded by $|E| - 1$ element iterations of the first type, as there must be at least one $e$ with $\rho^{(k)}_e > 0$.
    Hence there can only be $|E|$ element iterations in total.
    To bound the number of member iterations, we prove the following claim.

    \begin{claim}
        Let $k \in K$ and $e, e' \in E_{\rho^{(k)}}$.
        If there is $P \in \mathcal{P}^{=}_{\pi^{(k)}, \rho^{(k)}}$ with $e, e' \in P$, then for every $k' \geq k$ with
        $e, e' \in E_{\rho^{(k')}}$
        there is $P \in \mathcal{P}^{=}_{\pi^{(k')}, \rho^{(k')}}$ with $e, e' \in P'$.
    \end{claim}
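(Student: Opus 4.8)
The claim is a persistence statement: once two positive-weight elements $e, e'$ lie together in a tight set $P \in \mathcal{P}^{=}_{\pi^{(k)},\rho^{(k)}}$, they continue to lie together in some tight set in every later iteration in which both remain positive. The natural approach is induction on $k'$, reducing to a single step: assuming there is a tight set $P$ containing $e$ and $e'$ at iteration $k$, and that both $e, e'$ are still positive at iteration $k+1$, I would produce a tight set at iteration $k+1$ containing both. Once the one-step version is established, the full claim follows by chaining it along $k, k+1, \dots, k'$ (noting that if $e, e' \in E_{\rho^{(k')}}$ then necessarily $e, e' \in E_{\rho^{(i)}}$ for all $i$ between $k$ and $k'$, since $\bar\rho$ is nonincreasing by \cref{lem:C-invariants}\ref{inv:rho}).

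**The one-step argument.** Consider the transition from iteration $k$ to $k+1$, with chosen ASC $S := S^{(k)}$ and $\varepsilon := \varepsilon^{(k)}$. Let $P \in \mathcal{P}^{=}_{\pi^{(k)},\rho^{(k)}}$ with $e, e' \in P$. By property~\ref{prop:cut:S-tight-sets} of the ASC, $|P \cap S| \le 1$. From \eqref{eq:cond-induction}-type bookkeeping (as in the proof of \cref{lem:C-invariants}\ref{inv:pi}), $\sum_{f \in P}\rho^{(k+1)}_f = \sum_{f \in P}\rho^{(k)}_f - |P \cap S|\,\varepsilon = \pi^{(k)}_P - |P\cap S|\,\varepsilon$, while $\pi^{(k+1)}_P = \pi^{(k)}_P - \varepsilon$. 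If $|P \cap S| = 1$, then $P$ stays tight: $\sum_{f\in P}\rho^{(k+1)}_f = \pi^{(k)}_P - \varepsilon = \pi^{(k+1)}_P$, and since $e, e'$ are still positive they are still in $P$, so $P$ itself is the witness. The remaining case is $|P \cap S| = 0$, where the slack of $P$ increases by $\varepsilon$ and $P$ need no longer be tight; this is where the real work lies.

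**Handling the case $P \cap S = \emptyset$.** Here I would appeal to property~\ref{prop:cut:S-hits-necessary-sets} together with the structure of the dominance relation $\sqsubseteq_{\pi^{(k)},\rho^{(k)}}$. Since $P$ is tight and $\rho^{(k)} \in Y^\star$ for $\pi^{(k)}$, if $\pi^{(k)}_P > 0$ then $P$ is dominated by some non-dominated $Q$ with $Q \cap S \neq \emptyset$ and $P \sqsubseteq_{\pi^{(k)},\rho^{(k)}} Q$; unwinding the definition, $\pi^{(k)}_P \le \pi^{(k)}_Q - \sum_{f \in Q \setminus P}\rho^{(k)}_f$, and combined with tightness of $P$ and \eqref{eq:cond} for $Q$ this forces $\sum_{f \in P}\rho^{(k)}_f = \pi^{(k)}_P \le \sum_{f \in Q \cap P}\rho^{(k)}_f$, hence $P \cap E_{\rho^{(k)}} \subseteq Q$ and $Q$ is tight as well. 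So $e, e' \in Q$, $Q$ is tight at iteration $k$, and $|Q \cap S| \ge 1$; moreover property~\ref{prop:cut:S-tight-sets} gives $|Q \cap S| \le 1$, so $|Q \cap S| = 1$ and the previous paragraph applies to $Q$ in place of $P$, yielding a tight witness at iteration $k+1$. The one subtlety is the degenerate subcase $\pi^{(k)}_P \le 0$: then tightness gives $\sum_{f\in P}\rho^{(k)}_f \le 0$, so by nonnegativity every element of $P$ has $\rho^{(k)}_f = 0$, contradicting $e \in E_{\rho^{(k)}}$; hence this subcase cannot arise.

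**Main obstacle.** The crux is the $P \cap S = \emptyset$ case — specifically, verifying that the dominating set $Q$ furnished by \ref{prop:cut:S-hits-necessary-sets} is itself \emph{tight} and still contains \emph{both} $e$ and $e'$. The tightness of $Q$ is not immediate from the ASC definition; it has to be wrung out of the inequality $P \sqsubseteq_{\pi^{(k)},\rho^{(k)}} Q$ together with $\rho^{(k)}$ satisfying \eqref{eq:cond} (equivalently \cref{lem:C-invariants}\ref{inv:pi}), via a short squeezing argument. Once that is in hand, $P \cap E_{\rho^{(k)}} \subseteq Q$ automatically captures $e$ and $e'$, and the rest is the routine bookkeeping already done in \cref{lem:C-invariants}.
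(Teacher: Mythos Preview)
Your proposal is correct and follows essentially the same route as the paper's proof: induct on $k'$, and for the one-step transition use the dominance relation to pass from the current tight witness to a non-dominated tight set containing $e,e'$, which by \ref{prop:cut:S-hits-necessary-sets} and \ref{prop:cut:S-tight-sets} is hit by $S^{(k)}$ exactly once and hence stays tight. The only cosmetic difference is that you first split into the cases $|P\cap S|\in\{0,1\}$ and invoke dominance only when $P\cap S=\emptyset$, whereas the paper immediately replaces $Q$ by a non-dominated $P'\sqsupseteq_{\pi^{(k)},\rho^{(k)}} Q$ (possibly $P'=Q$) and derives the same squeezing identities in one stroke.
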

    \begin{proof}
        We prove the claim by induction.
        The base case $k' = k$ is trivial with $P' = P$.
        For $k' > k$ with $e, e' \in E_{\rho^{(k')}}$, the induction hypothesis implies that there is $Q \in \mathcal{P}^{=}_{\pi^{(k' - 1)}, \rho^{(k' - 1)}}$ with $e, e' \in Q$ (note that $e, e' \in E_{\rho^{(k')}}$ implies $e, e' \in E_{\rho^{(k' - 1)}}$).
        Because $\sqsubseteq_{\pi^{(k'-1)},\rho^{(k'-1)}}$ is a partial order, 
        there must be a non-dominated $P' \in \sets$ with
        $Q \sqsubseteq_{\pi^{(k'-1)},\rho^{(k'-1)}} P'$ (possibly $P' = Q$).
        Note that
        $$\textstyle \sum_{q \in Q} \rho^{(k'-1)}_q = \pi^{(k'-1)}_Q \leq \pi^{(k'-1)}_{P'} - \sum_{q \in P' \setminus Q} \rho^{(k'-1)}_q,$$
        which implies $\sum_{q \in Q \cup P'} \rho^{(k'-1)}_q \leq \pi^{(k'-1)}_{P'}$. 
        Because $\sum_{q \in P'} \rho^{(k'-1)}_q \geq \pi^{(k'-1)}_{P'}$ by \cref{lem:C-invariants}\ref{inv:pi}, we conclude that in fact 
        $$\textstyle \sum_{q \in P'} \rho^{(k'-1)}_q = \pi^{(k'-1)}_{P'} \text{ and } \sum_{q \in Q \setminus P'} \rho^{(k'-1)}_q = 0.$$
        Hence $e, e' \in Q \cap P'$ and $P' \in \sets^{=}_{\pi^{(k'-1)},\rho^{(k'-1)}}$.
        Because $P'$ is non-dominated, we have $|P' \cap S^{(k'-1)}| = 1$ by property~\ref{prop:cut:S-hits-necessary-sets} of the ASC $S^{(k'-1)}$.
        Therefore, it holds that $\sum_{q \in P'} \rho^{(k')}_q = \sum_{q \in P'} \rho^{(k' - 1)}_q - \varepsilon^{(k'-1)} = \pi^{(k'-1)}_{P'} - \varepsilon^{(k'-1)} = \pi^{(k')}_{P'}$.
        We conclude that $P' \in \sets^{=}_{\pi^{(k')},\rho^{(k')}}$ as desired.
        \hfill$\diamondsuit$
    \end{proof}

        In particular, the claim implies that if $e, e' \in P$ for some $P \in \sets^{=}_{\pi^{(k)},\rho^{(k)}}$ and some $k \in K$, then $|\{e, e'\} \cap S^{(k')}| \leq 1$ for all $k' \geq k$ because of property~\ref{prop:cut:S-tight-sets} of the ASC $S^{(k')}$.
    Note that for every member iteration $k$, by definition, there are at least two distinct elements $e, e' \in S^{(k)} \subseteq E_{\rho^{(k)}}$ with $e, e' \in P$ for a member $P \in \sets$ with $\frac{ \pi^{(k)}_P - \sum_{e \in P} \rho^{(k)}_e }{ 1 - |P \cap S^{(k)}|} = \varepsilon^{(k)}$.
    Moreover, $$\textstyle\sum_{e \in P} \rho^{(k+1)}_e = \sum_{e \in P} \rho^{(k)}_e - |P \cap S^{(k)}| \varepsilon^{(k)} = \pi^{(k)}_P - \varepsilon^{(k)} = \pi^{(k+1)}_P$$ 
    implies $P \in \sets^{=}_{\pi^{(k + 1)},\rho^{(k + 1)}}$ and hence $|\{e, e'\} \cap S^{(k')}| \leq 1$ for all $k' > k$.
    In particular, the element pairs $e, e' \in S^{(k)} \cap P$ are distinct for distinct member iterations and thus there can be at most $\binom{|E|}{2}$ member iterations. \qed
\end{proof}

\restateLemCTotalEpsilon*

\begin{proof}
    Note that $\pi^{(k)}_P = \pi_P - \sum_{i = 1}^{k-1} \varepsilon^{(i)}$ for all $P \in P$ and $k \in K$, which in particular implies $\max_{P \in \sets} \pi^{(\ell)}_P = \max_{P \in \sets} \pi_P - \sum_{i=1}^{\ell-1} \varepsilon^{(i)}$.
    Moreover, it holds that $\varepsilon^{(\ell)} \leq \max_{P \in \sets} \pi^{(\ell)}_P$ by construction of $\varepsilon^{(\ell)}$, as well as $\pi^{(\ell)}_P - \varepsilon^{(\ell)} \leq 0$
    by the termination criterion of the while loop.
    Putting this together, we obtain
    $\varepsilon^{(\ell)} = \max_{P \in \sets} \pi^{(\ell)}_P = \max_{P \in \sets} \pi_P - \sum_{i=1}^{\ell-1} \varepsilon^{(i)}$, which implies the statement of the lemma. \qed
\end{proof}

\section{Lattice Polyhedra (Missing Details from \cref{sec:lattice-polyhedra})}
\label{app:lattice}

We first provide complete proofs of \cref{lem:lattice-S} and \cref{thm:asc-lattice}.
We then show how these results imply that we can in fact compute feasible decompositions for extreme points of $\polyplus$ using \cref{alg:decomposition}.
Finally, we provide a complete proof of \cref{thm:lattice}, showing how to obtain feasible decompositions for arbitrary~\mbox{$\rho \in \polystar$}.

\subsection{Proof of \cref{lem:lattice-S} and \cref{thm:asc-lattice}}
Throughout this section, we will use the notation $P \sim Q$ to indicate that $P$ and $Q$ are incomparable with respect to $\preceq$, i.e., neither $P \preceq Q$ nor $Q \preceq P$.
We will use the following properties, which are a direct consequence of properties~\ref{prop:greedy-e_i-in_P_i}-\ref{prop:greedy-solution}.

\begin{lemma}[\cite{frank1999increasing}]
    Let $\rho$ be an extreme point of $Y^+$.
    Then any greedy support for $\rho$ fulfils the following properties:
    \begin{enumerate}[label=(G\arabic*),start=5,align=left,leftmargin=*]
        \item $P_1 \succ \ldots \succ P_m$,
        \label{prop:greedy-chain}
        \item $e_i \notin P_j$ for all $i, j \in [m]$ with $i < j$, \label{prop:greedy-e_i}
        \item $P_i \cap \{e_1, \dots, e_m\} \subseteq Q \cap \{e_1, \dots, e_m\}$ for all $Q \in \sets$ and all $i \in [m]$ with $P_i \succeq Q \succeq P_{i+1}$, \label{prop:greedy-chain-containment}
        \item $e_m \in Q$ for all $Q \in \sets$ with $Q \succeq P_m$ with $\pi_Q \geq 0$. \label{prop:greedy-e_m}
    \end{enumerate}
\end{lemma}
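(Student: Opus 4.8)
The plan is to derive \ref{prop:greedy-chain}--\ref{prop:greedy-e_m} in this order, feeding each property into the next, and to use the nesting of the sublattices $\sets[\cdot]$ together with consecutivity \eqref{prop:consecutivity} (and, for \ref{prop:greedy-e_m}, monotonicity of $\pi$). For \ref{prop:greedy-chain}: by \ref{prop:greedy-P_i-max}, $P_{i+1}$ lies in $\sets[E \setminus \{e_1, \dots, e_{i-1}\}]$, whose $\preceq$-maximum is $P_i$, so $P_i \succeq P_{i+1}$; and $P_i \neq P_{i+1}$ since $e_i \in P_i$ by \ref{prop:greedy-e_i-in_P_i} while $e_i \notin P_{i+1}$ (because $P_{i+1} \subseteq E \setminus \{e_1, \dots, e_i\}$ by \ref{prop:greedy-P_i-max}). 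Concatenating these strict relations yields the chain in \ref{prop:greedy-chain}, and \ref{prop:greedy-e_i} is immediate from $P_j \subseteq E \setminus \{e_1, \dots, e_{j-1}\}$.

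For \ref{prop:greedy-chain-containment}, fix $i$ and $Q$ with $P_{i+1} \prec Q \preceq P_i$ (reading the relation as strict at $P_{i+1}$, which is what the proof of \cref{thm:asc-lattice} uses) and take $e_j \in P_i \cap \{e_1, \dots, e_m\}$; by \ref{prop:greedy-e_i} we have $j \geq i$, hence $P_j \preceq P_i$ by \ref{prop:greedy-chain}. If $Q = P_i$ there is nothing to prove, so assume $Q \prec P_i$. When $j > i$ we have $P_j \preceq P_{i+1} \preceq Q \prec P_i$, and either $P_j = Q$ (done) or $P_j \prec Q \prec P_i$, in which case consecutivity \eqref{prop:consecutivity} gives $e_j \in P_j \cap P_i \subseteq Q$, using $e_j \in P_j$ from \ref{prop:greedy-e_i-in_P_i}. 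The genuinely delicate case is $j = i$, i.e.\ showing $e_i \in Q$: since $Q \succ P_{i+1} = \max_{\succeq} \sets[E \setminus \{e_1, \dots, e_i\}]$, the set $Q$ is not contained in $E \setminus \{e_1, \dots, e_i\}$, so it contains some $e_h$ with $h \leq i$; if $h = i$ we are done, and if $h < i$ then $Q \prec P_i \preceq P_{h+1} \prec P_h$ by \ref{prop:greedy-chain}, so consecutivity \eqref{prop:consecutivity} forces $e_h \in Q \cap P_h \subseteq P_{h+1}$, contradicting $e_h \notin P_{h+1}$ from \ref{prop:greedy-e_i}. A few boundary subcases ($h+1 = i$, or $Q$ coinciding with some $P_k$) must be written out, but reduce to the same mechanism.

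Finally, for \ref{prop:greedy-e_m}: given $Q \succeq P_m$ with $\pi_Q \geq 0$, monotonicity of $\pi$ and \ref{prop:greedy-positive} give $\pi_Q \geq \pi_{P_m} > 0$, so by the second part of \ref{prop:greedy-positive} the set $Q$ is not contained in $E \setminus \{e_1, \dots, e_m\}$ and therefore meets $\{e_1, \dots, e_m\}$; the remaining task is to upgrade this to $e_m \in Q$, which I would attempt by again exploiting the interval structure of \ref{prop:greedy-chain}--\ref{prop:greedy-chain-containment} together with \ref{prop:greedy-solution} to rule out that the index of the element of $\{e_1, \dots, e_m\}$ hit by $Q$ is strictly below $m$. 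I expect this last step — isolating $e_m$ rather than an arbitrary $e_j$ — and the $j = i$ case of \ref{prop:greedy-chain-containment} to be the main obstacles; all the other implications are short consequences of the nesting of the sublattices $\sets[\cdot]$ and of \ref{prop:greedy-e_i-in_P_i}--\ref{prop:greedy-solution}.
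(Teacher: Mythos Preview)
The paper does not prove this lemma; it is attributed to \citep{frank1999increasing} and stated without argument, so there is no paper proof to compare your proposal against. I therefore assess your attempt on its own.

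Your derivations of \ref{prop:greedy-chain} and \ref{prop:greedy-e_i} are correct. Your argument for \ref{prop:greedy-chain-containment} is also correct, and you are right to read the hypothesis with $Q \succ P_{i+1}$ strict: when $Q = P_{i+1}$ the inclusion fails because $e_i \in P_i \setminus P_{i+1}$, and every use of \ref{prop:greedy-chain-containment} in the paper (\cref{lem:greedy-tight-sets,lem:greedy-greater-P_m}) indeed has the strict inequality. The boundary subcases you flag dissolve on inspection: if $h+1 = i$ then $P_{h+1} = P_i \succ Q$ and consecutivity still applies with the triple $Q \prec P_{h+1} \prec P_h$; and a $Q$ strictly between $P_i$ and $P_{i+1}$ cannot coincide with any $P_k$.

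The obstruction you hit in \ref{prop:greedy-e_m} is genuine, but it stems from what appears to be a typo in the statement rather than a missing idea on your side. Both places where the paper invokes \ref{prop:greedy-e_m} (\cref{lem:greedy-witnesses}, Cases~2 and~3) do so for $Q \prec P_m$, not $Q \succeq P_m$. With the corrected hypothesis $Q \preceq P_m$ and $\pi_Q > 0$, the proof is exactly the $j = i$ mechanism you already carried out for \ref{prop:greedy-chain-containment}: by the second half of \ref{prop:greedy-positive} there is some $e_h \in Q$ with $h \in [m]$; if $h < m$ then $Q \preceq P_m \preceq P_{h+1} \prec P_h$, hence $Q \prec P_{h+1} \prec P_h$ (treating $Q = P_m$ separately), and consecutivity forces $e_h \in Q \cap P_h \subseteq P_{h+1}$, contradicting \ref{prop:greedy-e_i}. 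Note that your monotonicity step would point the wrong way under $Q \preceq P_m$, but it is not needed: the hypothesis $\pi_Q > 0$ alone, combined with \ref{prop:greedy-positive}, already gives $Q \not\subseteq E \setminus \{e_1, \dots, e_m\}$.
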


\subsubsection{Proof of \cref{lem:lattice-S}.}
To prove \cref{lem:lattice-S}, we first state a simple greedy algorithm. In the proof of the lemma, we show that this algorithm indeed returns a set with the desired properties when provided with a greedy support.

\medskip

\begin{algorithm}[H]
  \caption{ASCs for Lattice Polyhedra}\label{alg:good-lattice}
  \setstretch{1.1}
  Let $P_1, \dots, P_m \in \sets$ and $e_1, \dots, e_m \in E$ be the greedy support for $\rho$.\\
  Initialize $S := \emptyset$.\\
  \For{$i = m$ \textup{ down to } $1$}{
    \If{$P_i \cap S = \emptyset$}{
        Set $S := S \cup \{e_{i}\}$.\\
    }
  }
  \Return $S$
\end{algorithm}

\restateLemLatticeS*

\begin{proof}
    We prove the lemma by showing that the set $S$ computed by \cref{alg:good-lattice} indeed fulfils \eqref{prop:good-S}.
    Let $S^{(k)}$ be state of the set $S$ at the end of the iteration of the for loop with $i = k$.
    Note that $S^{(k)} \subseteq \{e_k, \dots, e_m\}$ and that $S^{(1)}$ is the set returned by the algorithm. We show that the following conditions hold for all $k \in [m]$:
    \begin{enumerate}[label=(\roman*),align=left]
        \item $|S^{(k)} \cap P_j| = 1$ for all $j \in [m]$ with $j \geq k$, \label{prop:intersection-lb}
        \item $|S^{(k)} \cap P_j| \leq 1$ for all $j \in [m]$ with $j < k$, and \label{prop:intersection-ub}
        \item $S^{(k)} \subseteq E_{\rho}$. \label{prop:positive-Sk}
    \end{enumerate}
    Note that the case $k = 1$ corresponds to \eqref{prop:good-S}, hence implying the lemma.

    We show the three statements by induction on $k$, starting with the base case $k = m$. For this base case, observe that $S^{(m)} = \{e_m\}$ as $S$ is initially empty, hence $P_m \cap S^{(m)} = \{e_m\}$ and \ref{prop:intersection-lb} and \ref{prop:intersection-ub} hold for $k = m$.
    Moreover, \ref{prop:greedy-e_i} implies $P_m \cap \{e_1, \dots, e_m\} = \{e_m\}$, and hence $0 < \pi_{P_m} = \rho_{e_m}$ by \ref{prop:greedy-positive} and~\ref{prop:greedy-solution}.

    For the induction step, consider any $k \in [m-1]$ and assume that \ref{prop:intersection-lb}-\ref{prop:positive-Sk} holds for all $k' > k$.
    We distinguish two cases.
    First, assume $S^{(k+1)} \cap P_k \neq \emptyset$. In this case $S^{(k)} = S^{(k+1)}$ because $e_k$ is not added by the algorithm in iteration $i = k$. 
    The induction hypothesis directly implies that \ref{prop:intersection-lb}-\ref{prop:positive-Sk} holds for $k$.

    Second, assume $S^{(k+1)} \cap P_k = \emptyset$. In this case $S^{(k)} = S^{(k+1)} \cup \{e_k\}$ because~$e_k$ is added by the algorithm in iteration $i = k$. 
    Note that $e_k \notin P_j$ for $j > k$ and hence \ref{prop:intersection-lb} holds for $k$ by induction hypothesis.
    Moreover, note that for $j < k$ it holds that $P_j \cap S^{(k+1)} \subseteq P_k \cap S^{(k+1)} = \emptyset$ by \eqref{prop:consecutivity} because $S^{(k+1)} \subseteq \{e_{k+1}, \dots, e_m\}$.
    Thus $P_j \cap S^{(k)} \subseteq \{e_k\}$ for $j < k$, implying \ref{prop:intersection-ub} for $k$.
    Finally, let~$e'$ be the unique element in $S^{(k+1)} \cap P_{k}$ (which exists by induction hypothesis) and note that 
    \begin{align*}
        0 & \textstyle \leq \pi_{P_{k}} - \pi_{P_{k+1}} = \sum_{e \in P_k} \rho_e - \sum_{e \in P_{k+1}} \rho_e\\
        & \textstyle = \sum_{i = k}^{m} \indicator{e_i \in P_k} \cdot  \rho_{e_i} - \sum_{i = k + 1}^{m} \indicator{e_i \in P_{k+1}} \cdot \rho_{e_i}\\
        & \textstyle \leq \rho_{e_k} - \rho_{e'}
    \end{align*}
    where the first inequality follows from monotonicity and \ref{prop:greedy-chain}, the two identities follow from \ref{prop:greedy-solution} and \ref{prop:greedy-e_i} and the final inequality follows from $e_{k} \in P_{k} \setminus P_{k+1}$ and from the fact hat $e_i \in P_k$ for some $i > k$ implies $e_i \in P_{k+1}$ by \eqref{prop:consecutivity}. Thus $\rho_{e_k} \geq \rho_{e'} > 0$ by induction hypothesis, showing that \ref{prop:positive-Sk} holds for $k$. \qed
\end{proof}

\subsubsection{Proof of \cref{thm:asc-lattice}.} For a complete proof of \cref{thm:asc-lattice}, we provide the following three lemmas (note that the proof of the first lemma is also given---with some omissions---in the proof sketch for \cref{thm:asc-lattice} in the main text).

\begin{restatable}{lemma}{restateLemGreedyTightSets}\label{lem:greedy-tight-sets}
    If $S$ fulfils~\eqref{prop:good-S}, then $|Q \cap S| \leq 1$ for all $Q \in \sets^=_{\pi,\rho}$.
\end{restatable}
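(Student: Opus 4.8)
The plan is to argue by contradiction along the three-case split sketched for \cref{thm:asc-lattice}, after isolating two reusable facts. Fix a greedy support $e_1,\dots,e_m$, $P_1,\dots,P_m$ for $\rho$. By \ref{prop:greedy-solution} we have $E_\rho \subseteq \{e_1,\dots,e_m\}$, so $S \subseteq E_\rho \subseteq \{e_1,\dots,e_m\}$ by \eqref{prop:good-S}; and $P_1 = \max_{\succeq}\sets$ by \ref{prop:greedy-P_i-max}, so $P_1 \succeq R$ for every $R \in \sets$.

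\textbf{Two observations.} \emph{(A)} Every $R \in \sets$ with $R \preceq P_m$ has $|R \cap S| \leq 1$: the case $R = P_m$ is \eqref{prop:good-S}, and if $R \prec P_m$ then $R \prec P_m \prec P_j$ for every $j < m$ by \ref{prop:greedy-chain}, so \eqref{prop:consecutivity} forces $R \cap P_j \subseteq P_m$; since $e_j \in P_j$ by \ref{prop:greedy-e_i-in_P_i} but $e_j \notin P_m$ by \ref{prop:greedy-P_i-max}, this gives $e_j \notin R$ for all $j < m$, hence $R \cap S \subseteq \{e_m\}$. \emph{(B)} Every $R \in \sets^=_{\pi,\rho}$ with $P_j \succeq R \succeq P_{j+1}$ for some $j \in [m-1]$ has $|R \cap S| \leq 1$: for $R \in \{P_j,P_{j+1}\}$ use \eqref{prop:good-S}, and otherwise $P_j \succ R \succ P_{j+1}$, so \ref{prop:greedy-chain-containment} gives $P_j \cap E_\rho \subseteq R \cap E_\rho$ while monotonicity of $\pi$ together with $R,P_j \in \sets^=_{\pi,\rho}$ gives $\sum_{e \in R}\rho_e = \pi_R \leq \pi_{P_j} = \sum_{e \in P_j}\rho_e$; since $\rho_e > 0$ precisely for $e \in E_\rho$, these two facts force $R \cap E_\rho = P_j \cap E_\rho$, hence $|R \cap S| = |P_j \cap S| = 1$ by \eqref{prop:good-S}.

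\textbf{Main argument.} Suppose some $Q \in \sets^=_{\pi,\rho}$ has $|Q \cap S| \geq 2$, and take such a $Q$ that is $\succeq$-maximal; by \eqref{prop:good-S}, $Q \notin \{P_1,\dots,P_m\}$. If $Q$ is comparable to every $P_j$ then, since $P_1 \succeq Q$, either $Q \prec P_m$, contradicting (A), or $P_j \succ Q \succ P_{j+1}$ for some $j \in [m-1]$, contradicting (B). Otherwise $Q \sim P_i$ for some $i$; pick $i$ maximal with $Q \sim P_i$ and set $Q_+ := Q \vee P_i \succ Q$ and $Q_- := Q \wedge P_i \prec P_i$. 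The uncrossing chain
\begin{align*}
    \pi_{Q_+} + \pi_{Q_-}
    &\;\geq\; \pi_Q + \pi_{P_i}
    \;=\; \textstyle\sum_{e \in E}\bigl(\incidence[e]{Q} + \incidence[e]{P_i}\bigr)\rho_e\\
    &\;\geq\; \textstyle\sum_{e \in E}\bigl(\incidence[e]{Q_+} + \incidence[e]{Q_-}\bigr)\rho_e
    \;\geq\; \pi_{Q_+} + \pi_{Q_-},
\end{align*}
whose inequalities use supermodularity of $\pi$, then $Q,P_i \in \sets^=_{\pi,\rho}$, then \eqref{prop:submodularity} with $\rho \geq 0$, then $\rho \in \polyplus$, must collapse to equalities. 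The last equality yields $Q_+,Q_- \in \sets^=_{\pi,\rho}$; equality of the two sums over $E$, combined with $\incidence[e]{Q} + \incidence[e]{P_i} \geq \incidence[e]{Q_+} + \incidence[e]{Q_-}$ and $\rho_e > 0$ for $e \in E_\rho$, yields $\incidence[e]{Q} + \incidence[e]{P_i} = \incidence[e]{Q_+} + \incidence[e]{Q_-}$ for all $e \in E_\rho$, and summing over $e \in S$ gives $|Q \cap S| + |P_i \cap S| = |Q_+ \cap S| + |Q_- \cap S|$. Here $|P_i \cap S| = 1$ by \eqref{prop:good-S}; $|Q_+ \cap S| \leq 1$ by $\succeq$-maximality of $Q$ (as $Q_+ \in \sets^=_{\pi,\rho}$ and $Q_+ \succ Q$); and $|Q_- \cap S| \leq 1$ by (A) if $i = m$, since then $Q_- \prec P_m$, or by (B) if $i < m$, since then maximality of $i$ makes $Q$ comparable to $P_{i+1}$ and hence $Q \succ P_{i+1}$ (otherwise $Q \preceq P_{i+1} \prec P_i$), so $P_{i+1} \preceq Q_- \prec P_i$. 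Therefore $|Q \cap S| = |Q_+ \cap S| + |Q_- \cap S| - 1 \leq 1$, a contradiction.

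\textbf{Main obstacle.} The delicate step is the incomparable case, where the single uncrossing computation must do double duty: certify $Q_+,Q_- \in \sets^=_{\pi,\rho}$ and, simultaneously, upgrade the per-element submodular inequality \eqref{prop:submodularity} to an exact equation over $E_\rho$ and hence an exact count on the witness set $S$. One also has to place $Q_-$ in the chain $P_1 \succ \dots \succ P_m$ precisely enough to invoke (A) or (B); the comparable cases and the two observations are then routine consequences of \eqref{prop:consecutivity}, monotonicity of $\pi$, and the greedy-support properties \ref{prop:greedy-e_i-in_P_i}, \ref{prop:greedy-P_i-max}, \ref{prop:greedy-chain}, \ref{prop:greedy-chain-containment}.
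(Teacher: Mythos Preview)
Your proof is correct and follows essentially the same three-case approach as the paper: your Observations~(A) and~(B) are precisely the paper's Cases~1 and~2, and your main argument is the paper's Case~3 uncrossing, reusing (A) and (B) to bound $|Q_-\cap S|$ exactly as the paper does. The only cosmetic difference is that you factor out the two comparable cases as standalone observations before the contradiction argument, whereas the paper presents all three cases inside the contradiction; the logical content is identical.
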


\begin{proof}
    Assume by contradiction that there is $Q \in \sets^=_{\pi,\rho}$ with $|Q \cap S| > 1$.
    Without loss of generality, we can assume that $Q$ is $\succeq$-maximal with this property.
    We distinguish three cases.

    \begin{itemize}
        \item Case~1: $Q \preceq P_m$.
        Note that $e_j \notin Q$ for all $j \in [m]$ with $j < m$, as otherwise $Q \prec P_m \prec P_j$ would imply $e_j \in P_m$, a contradiction to \ref{prop:greedy-e_i}.
        Therefore $Q \cap S \subseteq \{e_m\}$, from which we conclude $|Q \cap S| \leq 1$.

        \item Case~2: There is $i \in [m]$ with $P_i \succeq Q \succ P_{i+1}$.
        In this case $P_i \cap E_{\rho} \subseteq Q \cap E_{\rho}$ by \ref{prop:greedy-chain-containment}. 
        Because $\sum_{e \in P_i} \rho_e = \pi_{P_i} \geq \pi_{Q} = \sum_{e \in Q} \rho_e$ by monotonicity and $P_i, Q \in \sets^=_{\pi,\rho}$, we conclude that in fact $P_i \cap E_{\rho} = Q \cap E_{\rho}$.
        In particular $P_i \cap S = Q \cap S$ and thus $|Q \cap S| \leq 1$ by~\eqref{alg:good-lattice}.
        
        \item Case 3: There is $i \in [m]$ with $Q \sim P_i$.
        We let $i \in [m]$ be maximal with that property.
        Let $Q_+ := Q \vee P_i$ and $Q_- := Q \wedge P_i$.
        Using $\rho \in \polyplus$, submodularity of the lattice, and $P_i, Q \in \sets^=_{\pi,\rho}$, we obtain
        \begin{align*}
            \textstyle \pi_{Q_+} + \pi_{Q_-} \leq \sum_{e \in Q_+} \rho_e + \sum_{e \in Q_-} \rho_e \leq \sum_{e \in P_i} \rho_e + \sum_{e \in Q} \rho_e = \pi_{P_i} + \pi_{Q}.
        \end{align*}
        Note that furthermore $\pi_{P_i} + \pi_{Q} \leq \pi_{Q_+} + \pi_{Q_-}$ by supermodularity of $\pi$ and hence the above must hold with equality everywhere.
        Thus, in particular $Q_+, Q_- \in \sets^=_{\pi,\rho}$ and $\incidence{Q_+ \cap S} + \incidence{Q_- \cap S} = \incidence{P_i \cap S} + \incidence{Q \cap S}$ as $\rho_e > 0$ for all $e \in S$.
        Note that $|Q_+ \cap S| \leq 1$ by maximality of $Q \in \sets^=_{\pi,\rho}$ with $|Q \cap S| > 1$.
        We show that $|Q_- \cap S| \leq 1$, which, using the above and $|P_i \cap S| = 1$ by~\eqref{alg:good-lattice}, implies $|Q \cap S| \leq |Q_+ \cap S| + |Q_- \cap S| - |P_i \cap S| \leq 1$, a contradiction.

        It remains to show $|Q_- \cap S| \leq 1$, for which we distinguish two subcases.
        First, if $i = m$, then $Q_- \cap S \subseteq \{e_m\}$ as shown in case~1 above.
        Hence $Q_- \cap S = P_i \cap S$ in this case.
        Second, if $i < m$, then maximality of $i$ with $P_i \sim Q$ implies $Q \succ P_{i+1}$ and therefore $P_i \succ Q_- = P_i \wedge Q \succeq P_{i+1}$.
        Thus either $Q_- = P_{i+1}$ and hence $|Q_- \cap S| = 1$ by~\eqref{alg:good-lattice} or $P_i \succ Q_- \succ P_{i+1}$, in which case $|Q_- \cap S| \leq 1$ as shown in case~2 above. \qed
    \end{itemize}
\end{proof}

\begin{lemma}\label{lem:greedy-greater-P_m}
    If $S$ fulfils~\eqref{prop:good-S}, then $Q \cap S \neq \emptyset$ for all $Q \in \sets$ with $Q \succeq P_m$.
\end{lemma}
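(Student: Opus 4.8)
The plan is to reduce the statement to a single observation: every $S$ satisfying \eqref{prop:good-S} is forced to contain $e_m$, and every $Q \succeq P_m$ is forced to contain $e_m$ as well. First I would pin down $S \cap P_m$. Since $S \subseteq E_{\rho}$ and \ref{prop:greedy-solution} gives $E_{\rho} \subseteq \{e_1, \dots, e_m\}$, we have $S \cap P_m \subseteq P_m \cap \{e_1, \dots, e_m\}$. By \ref{prop:greedy-e_i-in_P_i} we have $e_m \in P_m$, while \ref{prop:greedy-e_i} gives $e_i \notin P_m$ for every $i < m$, so $P_m \cap \{e_1, \dots, e_m\} = \{e_m\}$. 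Combining this with $|S \cap P_m| = 1$ from \eqref{prop:good-S} yields $S \cap P_m = \{e_m\}$; in particular $e_m \in S$. (This is exactly why the lemma holds for an arbitrary $S$ fulfilling \eqref{prop:good-S}, not just the one produced by \cref{alg:good-lattice}: the chain structure of the greedy support leaves no other possible representative of $P_m$ in $S$.)

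Next I would use monotonicity of $\pi$: for $Q \succeq P_m$ we get $\pi_Q \geq \pi_{P_m}$, and $\pi_{P_m} > 0$ by \ref{prop:greedy-positive}, so in particular $\pi_Q \geq 0$. Property \ref{prop:greedy-e_m} then applies to $Q$ and gives $e_m \in Q$. Putting the two steps together, $e_m \in Q \cap S$, hence $Q \cap S \neq \emptyset$, as claimed.

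There is essentially no obstacle here: once \ref{prop:greedy-e_m} is in hand the argument is a two-line combination, and the only point requiring a moment's care is that $S$ is merely assumed to satisfy \eqref{prop:good-S} — so one may not appeal to the construction in \cref{alg:good-lattice} — which is handled precisely by the forced identity $S \cap P_m = \{e_m\}$. This lemma then feeds into the proof of \cref{thm:asc-lattice} as one of the additional consequences of \ref{prop:greedy-e_i-in_P_i}--\ref{prop:greedy-solution} needed to verify property~\ref{prop:cut:S-hits-necessary-sets} for $S$.
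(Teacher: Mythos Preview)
Your proof is correct and takes a genuinely different, more direct route than the paper. The paper argues by induction along $\preceq$: for $Q \succ P_m$ it picks the largest $i$ with $P_i \succeq Q$ and then either sandwiches $Q$ between $P_i$ and $P_{i+1}$ (using \ref{prop:greedy-chain-containment} to push $P_i \cap S$ into $Q$) or, when $Q \sim P_{i+1}$, uncrosses via $Q \vee P_{i+1}$ and $Q \wedge P_{i+1}$ and combines submodularity \eqref{prop:submodularity} with the induction hypothesis and $|P_{i+1} \cap S| = 1$. By contrast, you bypass all of this by observing that the constraints of \eqref{prop:good-S} and the interval structure \ref{prop:greedy-e_i-in_P_i}, \ref{prop:greedy-e_i} force $S \cap P_m = \{e_m\}$, and then invoke monotonicity together with \ref{prop:greedy-positive} and \ref{prop:greedy-e_m} to get $e_m \in Q$ for every $Q \succeq P_m$. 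Your argument is shorter and avoids both the induction and the lattice submodularity; the paper's approach, on the other hand, does not rely on \ref{prop:greedy-e_m} (it uses only \ref{prop:greedy-chain-containment} among \ref{prop:greedy-chain}--\ref{prop:greedy-e_m}), so it would survive in settings where \ref{prop:greedy-e_m} is unavailable.
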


\begin{proof}
    We prove the lemma inductively, showing that $Q \cap S \neq \emptyset$ for $Q \in \sets$ if $Q' \cap S$ for all $Q' \in \sets$ with $Q \succ Q' \succeq P_m$.
    The base case is $Q = P_m$, for which $Q \cap S \neq \emptyset$ follows directly from~\eqref{prop:good-S}.
    
    Now let $Q \in \sets$ with $Q \succ P_m$ and let $i := \max \{j \in [m] \st P_j \succeq Q\}$, which exists and is at most $m-1$ because $P_1 = \max_{\succeq} \sets \succeq Q \succ P_m$.
    We distinguish two cases.
    \begin{itemize}
        
        \item Case~1: $Q \succ P_{i+1}$.
        In this case $P_i \cap S \subseteq P_i \cap \{e_1, \dots, e_m\} \subseteq Q$ by \ref{prop:greedy-chain-containment}.
        We conclude that $Q \cap S \neq \emptyset$ as $P_i \cap S \neq \emptyset$ by~\eqref{prop:good-S}.
        
        \item Case~2: $Q \sim P_{i+1}$.
        Define $Q_+ := Q \vee P_{i+1}$ and $Q_- := Q \wedge P_{i+1}$.
        Note that $P_i \succeq Q_+ \succ P_{i+1}$ and hence $\emptyset \neq P_i \cap S \subseteq Q_+$ by \ref{prop:greedy-chain-containment} and~\eqref{prop:good-S}.
        Furthermore, $Q \succ Q_- = Q \wedge P_{i+1} \succeq P_m$ as $P_m \preceq Q$ and $P_m \preceq P_{i+1}$.
        Thus $Q_- \cap S \neq \emptyset$ by induction hypothesis.
        Now submodularity yields
        $|P_{i+1} \cap S| + |Q \cap S| \geq |Q_+ \cap S| + |Q_- \cap S| \geq 2$, from which we conclude $|Q \cap S| \geq 1$ because $|P_{i+1} \cap S| = 1$ by~\eqref{prop:good-S}. 
        \qed
    \end{itemize}
\end{proof}

\begin{lemma}\label{lem:greedy-witnesses}
    If $S$ fulfils~\eqref{prop:good-S}, then $S \cap Q \neq \emptyset$ for every non-dominated $Q \in \sets$ with $\pi_Q > 0$.
\end{lemma}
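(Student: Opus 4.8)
This lemma supplies the missing property~\ref{prop:cut:S-hits-necessary-sets} needed to complete \cref{thm:asc-lattice}, and the plan is to reduce an arbitrary non-dominated $Q \in \sets$ with $\pi_Q > 0$ to the case already handled by \cref{lem:greedy-greater-P_m} by locating $Q$ relative to the greedy chain $P_1 \succ \dots \succ P_m$. First I would record that $e_m \in S$ for \emph{any} $S$ satisfying~\eqref{prop:good-S}: by~\ref{prop:greedy-e_i-in_P_i} and~\ref{prop:greedy-e_i} one has $P_m \cap \{e_1,\dots,e_m\} = \{e_m\}$, and since $E_{\rho} \subseteq \{e_1,\dots,e_m\}$ by~\ref{prop:greedy-solution} while $S \subseteq E_{\rho}$ and $|S \cap P_m| = 1$, necessarily $S \cap P_m = \{e_m\}$. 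Hence if $e_m \in Q$ we are done immediately, so from now on assume $e_m \notin Q$.

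Since $\pi_Q > 0$, property~\ref{prop:greedy-positive} gives $Q \cap \{e_1,\dots,e_m\} \neq \emptyset$, so together with $e_m \notin Q$ the index $j := \min\{i \in [m] : e_i \in Q\}$ exists and satisfies $j < m$; then $Q \subseteq E \setminus \{e_1,\dots,e_{j-1}\}$ yields $Q \preceq P_j$ by~\ref{prop:greedy-P_i-max}. The key step here is a consecutivity argument: if $Q \prec P_m$, then since $j < m$ implies $P_m \prec P_j$ (by~\ref{prop:greedy-chain}), applying~\eqref{prop:consecutivity} to $Q \prec P_m \prec P_j$ would give $e_j \in Q \cap P_j \subseteq P_m$, contradicting $e_j \notin P_m$ (as $j \le m-1$ and $P_m \subseteq E \setminus \{e_1,\dots,e_{m-1}\}$ by~\ref{prop:greedy-P_i-max}). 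Thus $Q \not\prec P_m$; moreover $Q \neq P_m$ since $e_m \in P_m \setminus Q$, so $Q \not\preceq P_m$, and hence either $Q \succ P_m$, in which case \cref{lem:greedy-greater-P_m} already gives $Q \cap S \neq \emptyset$, or $Q$ and $P_m$ are incomparable.

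The remaining incomparable case is the crux, and I expect it to be the main obstacle. Here I would set $Q^{+} := Q \vee P_m$ and $Q^{-} := Q \wedge P_m$ (all four comparisons $Q^{-} \prec Q$, $Q^{-} \prec P_m$, $Q^{+} \succ Q$, $Q^{+} \succ P_m$ being strict because $Q \sim P_m$) and aim to show $Q \sqsubseteq_{\pi,\rho} Q^{+}$, which contradicts $Q$ being non-dominated. The engine is the auxiliary claim that $\pi_{Q^{-}} \le 0$ and $Q^{-} \cap E_{\rho} = \emptyset$. To see it, rerun the consecutivity argument for $Q^{-} \prec P_m$: if $Q^{-} \cap \{e_1,\dots,e_m\} \neq \emptyset$ and its minimal index $j^{-}$ were $< m$, then $Q^{-} \preceq P_{j^{-}}$, $P_m \prec P_{j^{-}}$, and~\eqref{prop:consecutivity} on $Q^{-} \prec P_m \prec P_{j^{-}}$ forces $e_{j^{-}} \in P_m$, impossible; so $j^{-} = m$, i.e.\ $e_m \in Q^{-}$. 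But $e_m \in Q^{-}$ forces $e_m \notin Q^{+}$ by~\eqref{prop:submodularity} evaluated at $e_m$ (it reads $1 + \incidence[e_m]{Q^{+}} \le 0 + 1$), whereas $\pi_{Q^{+}} > 0$ (immediate from supermodularity of $\pi$ and monotonicity, using $\pi_{Q^{-}} \le \pi_{P_m}$) together with $Q^{+} \succeq P_m$ forces $e_m \in Q^{+}$ via~\ref{prop:greedy-e_m} --- a contradiction. Hence $Q^{-} \cap \{e_1,\dots,e_m\} = \emptyset$, so $\pi_{Q^{-}} \le 0$ by~\ref{prop:greedy-positive} and $\sum_{e \in Q^{-}} \rho_e = 0$ since $E_{\rho} \subseteq \{e_1,\dots,e_m\}$ by~\ref{prop:greedy-solution}.

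With this claim in hand the rest is routine. Supermodularity of $\pi$ gives $\pi_{Q^{+}} - \pi_Q \ge \pi_{P_m} - \pi_{Q^{-}} \ge \pi_{P_m} > 0$, so $\pi_Q < \pi_{Q^{+}}$. Next,~\eqref{prop:submodularity} shows $Q^{+} \setminus Q \subseteq P_m$ (for $e \in Q^{+}\setminus Q$ it reads $1 + \incidence[e]{Q^{-}} \le 0 + \incidence[e]{P_m}$), whence $\sum_{e \in Q^{+} \setminus Q} \rho_e \le \sum_{e \in P_m} \rho_e = \pi_{P_m}$, the equality by~\ref{prop:greedy-solution}. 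Combining, $\sum_{e \in Q^{+} \setminus Q}\rho_e \le \pi_{P_m} \le \pi_{Q^{+}} - \pi_Q$, i.e.\ $\pi_Q \le \pi_{Q^{+}} - \sum_{e \in Q^{+} \setminus Q}\rho_e$; together with $\pi_Q < \pi_{Q^{+}}$ and $Q^{+} \neq Q$ this gives $Q \sqsubseteq_{\pi,\rho} Q^{+}$, the desired contradiction, so the incomparable case cannot occur and the lemma follows. The only genuinely delicate points are obtaining the \emph{strict} inequality $\pi_Q < \pi_{Q^{+}}$ without assuming strict monotonicity of $\pi$ (this is exactly where the sign of $\pi_{Q^{-}}$ enters), and keeping in mind that $\vee$ and $\wedge$ are lattice operations, not set operations, so every containment among $Q, P_m, Q^{+}, Q^{-}$ must be extracted from~\eqref{prop:submodularity} rather than assumed.
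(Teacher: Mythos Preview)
Your argument is largely sound and parallels the paper's case split, but there is a genuine gap in the incomparable case. You invoke \ref{prop:greedy-e_m} to conclude that $Q^{+}\succeq P_m$ together with $\pi_{Q^{+}}>0$ forces $e_m\in Q^{+}$. Unfortunately the $\succeq$ in the stated form of \ref{prop:greedy-e_m} is a typo in the paper: the paper itself applies \ref{prop:greedy-e_m} only to sets \emph{below} $P_m$ (Case~2 with $Q\prec P_m$, and Case~3 with $Q_{-}\prec P_m$), and the $\succeq$-version is simply false. For a two-element chain $P_1\succ P_2$ with $P_1=\{a\}$, $P_2=\{b\}$ one gets $e_m=b\notin P_1$ even though $P_1\succ P_m$ and $\pi_{P_1}>0$. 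Consequently your auxiliary claim that $Q^{-}\cap\{e_1,\dots,e_m\}=\emptyset$ (and hence $\pi_{Q^{-}}\le 0$) cannot be established this way; the case $e_m\in Q^{-}$, equivalently $\pi_{Q^{-}}>0$, is not ruled out, and in that case your derivation of $Q\sqsubseteq_{\pi,\rho}Q^{+}$ breaks down because the bound $\sum_{e\in Q^{+}\setminus Q}\rho_e\le\pi_{P_m}$ no longer implies $\sum_{e\in Q^{+}\setminus Q}\rho_e\le\pi_{Q^{+}}-\pi_Q$.

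The fix is exactly the paper's second subcase of Case~3: if $e_m\in Q^{-}$, do not aim for a dominance contradiction at all. You have already shown $Q^{-}\cap\{e_1,\dots,e_{m-1}\}=\emptyset$ via consecutivity, so $Q^{-}\cap S=\{e_m\}=P_m\cap S$. Summing \eqref{prop:submodularity} over $e\in S$ yields $|Q^{+}\cap S|+|Q^{-}\cap S|\le|Q\cap S|+|P_m\cap S|$, hence $|Q\cap S|\ge|Q^{+}\cap S|\ge 1$ by \cref{lem:greedy-greater-P_m}. Thus the incomparable case splits into two subcases rather than collapsing into a single dominance argument. Apart from this, your treatment of the cases $e_m\in Q$ and $Q\succ P_m$ is correct and in fact slightly cleaner than the paper's, since you derive the impossibility of $Q\prec P_m$ (under $e_m\notin Q$) directly from consecutivity rather than citing \ref{prop:greedy-e_m}.
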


\begin{proof}
    Let $Q \in \sets$ with $\pi_Q > 0$ be non-dominated with respect to $\sqsubseteq_{\pi,\rho}$.
    We distinguish three cases.
    
    \begin{itemize}
        \item Case~1: $Q \succeq P_m$.
        Then $Q \cap S \neq \emptyset$ by \cref{lem:greedy-greater-P_m}.
        
        \item Case~2: $Q \prec P_m$.
        Then $e_m \in Q$ by \ref{prop:greedy-e_m} and hence $P_m \cap S = \{e_m\} = Q \cap S$ by~\eqref{prop:good-S}.
        
        \item Case~3: $Q \sim P_m$.
        Define $Q_+ := Q \vee P_m$ and $Q_- := Q \wedge P_m$.
        Note that $Q_+ \cap S \neq \emptyset$ by \cref{lem:greedy-greater-P_m} because $Q_+ \succ P_m$.
        We distinguish two subcases.
        
        First, assume $\pi_{Q_-} \leq 0$. 
        Then $\pi_{Q} \leq \pi_{Q_+} + \pi_{Q_-} - \pi_{P_i} \leq \pi_{Q_+} - \sum_{e \in P_i} \rho_e$, where the first inequality is due to supermodularity and the second inequality follows from $\sum_{e \in P_i} \rho_e = \pi_{P_i}$ and $\pi_{Q_-} \leq 0$.
        Note that this implies $\pi_{Q} < \pi_{Q_+}$ because $\sum_{e \in P_i} \rho_e = \pi_{P_i} > 0$ and that moreover $Q_+ \setminus Q \subseteq P_i$ because $Q_+ \subseteq P_i \cup Q$ by~\eqref{prop:submodularity}.
        We thus conclude $Q \sqsubseteq_{\pi,\rho} Q_+$, a contradiction.

        Second, assume $\pi_{Q_-} > 0$.
        Then $P_m \cap S = \{e_m\} = Q_- \cap S$ by \ref{prop:greedy-e_m} because~$Q_- \prec P_m$.
        Combining this with \eqref{prop:submodularity} yields $\emptyset \neq Q_+ \cap S \subseteq Q \cap S$. \qed
    \end{itemize}
\end{proof}

\restateThmASCLattice*
\begin{proof}
    Note that $S$ fulfils \ref{prop:cut:positive} as $S \subseteq E_\rho$ by \eqref{prop:good-S}.
    Moreover, \cref{lem:greedy-tight-sets} implies that $S$ fulfils \ref{prop:cut:S-tight-sets}.
    Finally, \cref{lem:greedy-witnesses} implies that $S$ fulfils \ref{prop:cut:S-hits-necessary-sets}.
\end{proof}

\subsection{Computing Feasible Decompositions for Extreme Points}

In this section, we show that \cref{thm:asc-lattice,lem:lattice-S} imply that we can in fact use \cref{alg:decomposition} to obtain feasible decompositions of $\rho \in [0, 1]^E$ when $\rho$ is an extreme point of $\polyplus$, hence establishing the following theorem.

\begin{theorem}\label{thm:lattice-extreme-points}
    Let $\rho$ be an extreme point of $\polyplus$.
    Then $\rho$ has a feasible decomposition for $(E, \sets, \pi)$.
    Moreover, there is an algorithm that, given an extreme point $\rho$ of $\polyplus$ and a greedy support for $\rho$, finds in polynomial time in $|E|$ and $\mathcal{T}$, a feasible decomposition of $\rho$, where $\mathcal{T}$ is the time for a call to a lattice oracle for $(\sets, \preceq)$ and $\pi$.
\end{theorem}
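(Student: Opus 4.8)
The plan is to run \cref{alg:decomposition} on $\rho$ and apply \cref{thm:decomposition-algo}: it then suffices to produce an ASC for the residuals $\bar\pi,\bar\rho$ in every iteration and, for the algorithmic part, to compute it together with $\varepsilon_{\bar\pi,\bar\rho}(S)$ in time polynomial in $|E|$ and $\mathcal{T}$. First I would observe that $\rho\in\polystar$: as an extreme point of $\polyplus$, $\rho$ is nonnegative and fulfils \eqref{eq:cond}, and it lies in $[0,1]^E$ because \ref{prop:greedy-P_i-max} forces $P_i\cap\{e_1,\dots,e_m\}\subseteq\{e_i,\dots,e_m\}$ while \ref{prop:greedy-e_i-in_P_i} gives $e_i\in P_i$, so back-substituting in the triangular system \ref{prop:greedy-solution} yields $0\le\rho_{e_i}\le\pi_{P_i}\le 1$, and $\rho_e=0$ for $e\notin\{e_1,\dots,e_m\}$.

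The core of the argument is an \emph{invariant} maintained throughout the while loop: at the start of each iteration the residual $\bar\rho$ is an extreme point of the residual lattice polyhedron $\{y\in\mathbb{R}_+^E:\sum_{e\in P}y_e\ge\bar\pi_P\ \forall\, P\in\sets\}$, and a greedy support for $(\bar\rho,\bar\pi)$ is available, namely the prefix $(e_1,P_1),\dots,(e_{m'},P_{m'})$ of the previous iteration's support with $m':=\max\{i:\bar\pi_{P_i}>0\}$. Granting this, \cref{lem:lattice-S} applied to that greedy support computes in time $\mathcal{O}(|E|)$ a set $S$ with $S\subseteq E_{\bar\rho}$ and $|S\cap P_i|=1$ for all $i$, and \cref{thm:asc-lattice}, invoked with $(\bar\pi,\bar\rho)$ in place of $(\pi,\rho)$ — legitimate because subtracting a constant from the requirement preserves supermodularity and monotonicity while the submodular, consecutive lattice $(\sets,\preceq)$ is unchanged — certifies that $S$ is an ASC. \cref{thm:decomposition-algo} then delivers a feasible decomposition of $\rho$ together with the bound of $\mathcal{O}(|E|^2)$ iterations, which already proves the first statement.

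I would prove the invariant by induction on the iteration count, the base case being the hypothesis of the theorem. For the step, let $S$ be the ASC of the current iteration (so $|S\cap P_i|=1$ for every $i$ of the current support) and $\varepsilon:=\varepsilon_{\bar\pi,\bar\rho}(S)$. Each $P_i$ stays tight, since $\sum_{e\in P_i}\bar\rho_e-|S\cap P_i|\varepsilon=\bar\pi_{P_i}-\varepsilon$; and whenever the new requirement $\bar\pi_{P_i}-\varepsilon\le 0$, nonnegativity of the residual marginals (\cref{lem:C-invariants}) forces $\sum_{e\in P_i}(\bar\rho_e-\varepsilon\mathbbm{1}_{e\in S})=0$, hence that requirement is exactly $0$ and $\bar\rho_{e_i}$ drops to $0$. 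Because $\bar\pi$ is monotone and $P_1\succ\dots\succ P_m$ by \ref{prop:greedy-chain}, the new requirements are non-increasing along the chain, so $\{i:\bar\pi_{P_i}-\varepsilon>0\}$ is a prefix $\{1,\dots,m''\}$; \ref{prop:greedy-P_i-max} and monotonicity give that every $Q\in\sets[E\setminus\{e_1,\dots,e_{m''}\}]$ satisfies $Q\preceq\max_\succeq\sets[E\setminus\{e_1,\dots,e_{m''}\}]=P_{m''+1}$ and hence has nonpositive new requirement (if $m''=m$ one instead uses that the old support already had nonpositive requirement on $\sets[E\setminus\{e_1,\dots,e_m\}]$ and the requirement only decreased). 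Finally \ref{prop:greedy-P_i-max} and \ref{prop:greedy-e_i-in_P_i} make the incidence matrix $(\mathbbm{1}_{e_j\in P_i})_{i,j\le m''}$ triangular with $1$s on the diagonal, hence nonsingular, so the updated $\bar\rho$ is the unique solution of the system \ref{prop:greedy-solution} for the truncated support; by the extreme-point characterisation of \citet{frank1999increasing} stated in \cref{sec:lattice-polyhedra}, the truncated support is a greedy support and the updated $\bar\rho$ is an extreme point of the updated residual lattice polyhedron.

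For the running time, maintaining the greedy support costs $\mathcal{O}(|E|)$ per iteration (truncate the list). In $\varepsilon_{\bar\pi,\bar\rho}(S)$, the term $\min_{e\in S}\bar\rho_e$ is immediate and $\max_{P\in\sets}\bar\pi_P=\bar\pi_{P_1}$ since $P_1=\max_\succeq\sets$ and $\bar\pi$ is monotone, while $\delta_{\bar\pi,\bar\rho}(S)$ I would evaluate via the separation oracle for $\polystar$ furnished by the two-phase greedy algorithm (\cref{lem:lattice-optimizer}) together with the discrete Newton scheme from the footnote to \cref{thm:decomposition-algo}, using $\mathcal{O}(|S|)$ separation calls, each polynomial in $|E|$ and $\mathcal{T}$. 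Summing over the $\mathcal{O}(|E|^2)$ iterations, plus the closing application of \cref{lem:feasible-relax} (whose support has size $\mathcal{O}(|E|^2)$), yields the stated bound. The step I expect to be the main obstacle is precisely re-establishing \ref{prop:greedy-positive} for the truncated support — that no set outside $\{e_1,\dots,e_{m''}\}$ regains a positive residual requirement when coordinates of $\bar\rho$ and the requirements of several $P_i$ vanish simultaneously; monotonicity of $\bar\pi$ and the chain structure of the $P_i$ are exactly what make this, and hence the persistence of the greedy support as a prefix, go through.
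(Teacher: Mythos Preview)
Your proposal is correct and follows essentially the same route as the paper: maintain the invariant that $\bar\rho$ is an extreme point of the residual lattice polyhedron with a truncated greedy support (the paper's \cref{lem:extreme-point-maintained}), feed this into \cref{lem:lattice-S} and \cref{thm:asc-lattice} to obtain ASCs in every iteration, and invoke \cref{thm:decomposition-algo}; for the running time you use, like the paper, the separation oracle of \cref{lem:lattice-optimizer} to evaluate $\delta_{\bar\pi,\bar\rho}(S)$. The one place you differ is in showing $\rho\in[0,1]^E$: you exploit the triangular structure of \ref{prop:greedy-solution} to get $\rho_{e_i}\le\pi_{P_i}\le 1$, whereas the paper gives a short perturbation argument (\cref{lem:extremepoints-1}) that does not use the greedy support; both are valid and equally short.
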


We first show that $\pi_P \leq 1$ for all $P \in \sets$ implies that the extreme points of~$\polyplus$ are indeed contained in~$\polystar = \polyplus \cap [0, 1]^E$.

\begin{lemma}\label{lem:extremepoints-1}
    Let $y$ be an extreme point of $\polyplus$. Then $y \in [0, 1]^E$.
\end{lemma}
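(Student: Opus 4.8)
The plan is to argue directly, ruling out $y_{e^*} > 1$ for any $e^* \in E$ (nonnegativity of $y$ being immediate from $\polyplus \subseteq \mathbb{R}_+^E$). Suppose for contradiction that $y$ is an extreme point of $\polyplus$ with $y_{e^*} > 1$. First I would record the key observation that, because $\pi_P \le 1$ for every $P \in \sets$, no constraint defining $\polyplus$ is tight at $y$ in the coordinate $e^*$: for each $P \in \sets$ with $e^* \in P$ we have $\sum_{e \in P} y_e \ge y_{e^*} > 1 \ge \pi_P$ (using $y \ge 0$), so that inequality is strict, and the nonnegativity constraint $y_{e^*} \ge 0$ is strict as well since $y_{e^*} > 1 > 0$.

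Next I would exhibit a feasible perturbation. Let $u \in \mathbb{R}^E$ be the vector with $u_{e^*} = 1$ and $u_e = 0$ for $e \neq e^*$. Moving from $y$ in direction $+u$ leaves every constraint $\sum_{e \in P} \rho_e \ge \pi_P$ satisfied (the left-hand side only stays the same or increases) and keeps $y + \delta u \ge 0$, so $y + \delta u \in \polyplus$ for all $\delta \ge 0$. Moving in direction $-u$: constraints $P$ with $e^* \notin P$ are unchanged, while the finitely many constraints $P$ with $e^* \in P$ each have strictly positive slack $\sum_{e \in P} y_e - \pi_P$ by the observation above (and $\sets$ is finite since $\sets \subseteq 2^E$ with $E$ finite); hence for $0 < \delta < \min\bigl\{\, y_{e^*},\ \min_{P \in \sets :\, e^* \in P}\bigl(\sum_{e \in P} y_e - \pi_P\bigr)\bigr\}$ we still have $y - \delta u \ge 0$ and all constraints satisfied, i.e.\ $y - \delta u \in \polyplus$. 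Then $y = \tfrac{1}{2}(y + \delta u) + \tfrac{1}{2}(y - \delta u)$ writes $y$ as a proper convex combination of two distinct points of $\polyplus$, contradicting extremality. Since $\polystar = \polyplus \cap [0,1]^E$, this simultaneously gives the claimed inclusion of extreme points of $\polyplus$ in $\polystar$.

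I expect no real obstacle here; the only point requiring a little care is that $\polyplus$ is in general unbounded, so I would phrase the argument as ``$y$ is the midpoint of a nondegenerate segment contained in $\polyplus$'' rather than via a bounded vertex representation — but the two explicit points $y \pm \delta u$ make this self-contained. Equivalently one could invoke the standard fact that the normal vectors of the constraints tight at an extreme point must span $\mathbb{R}^E$; here they all lie in the coordinate hyperplane $\{v \in \mathbb{R}^E : v_{e^*} = 0\}$ and hence cannot span, again contradicting extremality. I would include the explicit perturbation version in the paper.
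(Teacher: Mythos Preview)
Your proof is correct and follows essentially the same approach as the paper: assume $y_{e^*}>1$, observe that every constraint involving $e^*$ has strictly positive slack since $\pi_P\le 1$, and perturb along the unit vector $u$ in coordinate $e^*$ to write $y$ as the midpoint of a nondegenerate segment in $\polyplus$. Your version is slightly more careful in including $y_{e^*}$ in the bound on $\delta$ (ensuring nonnegativity after the perturbation), but the argument is the same.
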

\begin{proof}
    By contradiction assume that $y_{e'} > 1$ for some $e' \in E$.
    Let $$\textstyle \varepsilon := \min_{P \in \sets : e' \in P} \sum_{e \in P} y_e - \pi_P.$$
    Note that $\varepsilon > 0$ because $\pi_P \leq 1$ for all $P\ in \sets$. 
    Let $v'$ be the unit vector for $e'$, i.e., $v'_{e'} = 1$ and $v_e = 0$ for all $e \in E \setminus \{e'\}$.
    Then $y + \varepsilon v', y - \varepsilon v' \in \polyplus$, a contradiction to $y$ being an extreme point.
\end{proof}

To show that we can indeed apply \cref{alg:decomposition}, we further need to establish that the properties under which we can construct ASCs are maintained throughout the run of algorithm, i.e., if $\rho$ is an extreme point of $\polyplus$ initially, then $\bar{\rho}$ remains an extreme point for the corresponding polyhedron defined by $\bar{\pi}$ also in the next iteration. 
The following lemma shows that this is indeed the case.

\begin{restatable}{lemma}{restateLemExtremePointMaintained}\label{lem:extreme-point-maintained}
    If $S$ fulfils \eqref{prop:good-S}, then
    $\bar{\pi}$ defined by $\bar{\pi}_P := \pi_P - \varepsilon_{\pi,\rho}(S)$ for $P \in \sets$ is supermodular and monotone, and $\bar{\rho}$ defined by 
    $\bar{\rho}_e := \rho_e - \incidence[e]{S} \cdot \varepsilon_{\pi,\rho}(S)$ is an extreme point solution for $\bar{Y}^+ := \left\{y \in \mathbb{R}^E_+ \st \sum_{e \in P} y_e \geq \bar{\pi}_P \ \forall\, P \in \sets\right\}$.
\end{restatable}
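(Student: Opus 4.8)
The plan is to dispose of $\bar\pi$ immediately and then show that $\bar\rho$ is an extreme point of $\bar{Y}^+$ by exhibiting an explicit greedy support for it, so that the greedy-support characterization of extreme points of $\polyplus$ due to \citet{frank1999increasing} applies. Write $\varepsilon := \varepsilon_{\pi,\rho}(S)$ and let $e_1,\dots,e_m$ and $P_1,\dots,P_m$ be the greedy support of the extreme point $\rho$ for which $S$ satisfies \eqref{prop:good-S}. Supermodularity and monotonicity of $\bar\pi$ are immediate, since $\bar\pi$ arises from $\pi$ by subtracting the constant $\varepsilon$ from every value, which affects neither $\pi_{P\vee Q}+\pi_{P\wedge Q}\geq\pi_P+\pi_Q$ nor $\pi_P\leq\pi_Q$. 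That $\bar\rho\in\bar{Y}^+$ is equally routine: $\varepsilon\leq\min_{e\in S}\rho_e$ gives $\bar\rho\geq 0$, and for each $P\in\sets$ one has $\sum_{e\in P}\bar\rho_e=\sum_{e\in P}\rho_e-|P\cap S|\,\varepsilon$, which is at least $\bar\pi_P=\pi_P-\varepsilon$ when $|P\cap S|\leq 1$ because $\rho\in\polyplus$, and when $|P\cap S|\geq 2$ because $\varepsilon\leq\delta_{\pi,\rho}(S)$ (this is the one-iteration specialization of \cref{lem:C-invariants}\ref{inv:rho} and~\ref{inv:pi}).

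Since $\bar\pi$ is monotone and $P_1\succ\dots\succ P_m$ by \ref{prop:greedy-chain}, the values $\bar\pi_{P_i}$ are nonincreasing in $i$, so $\{i\in[m]:\bar\pi_{P_i}>0\}$ is a (possibly empty) prefix; let $\ell$ denote its largest element, and set $\ell=0$ if it is empty. The claim is that $e_1,\dots,e_\ell$ together with $P_1,\dots,P_\ell$ form a greedy support for $\bar\rho$ with respect to $\bar\pi$; as $(\sets,\preceq)$ is still a submodular consecutive lattice and $\bar\pi$ is still monotone and supermodular, the theorem of \citet{frank1999increasing} then yields that $\bar\rho$ is an extreme point of $\bar{Y}^+$. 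Properties \ref{prop:greedy-e_i-in_P_i} and \ref{prop:greedy-P_i-max} depend only on $\sets$ and $\preceq$ and carry over verbatim.

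The heart of the argument is the assertion that $\bar\rho_e=0$ for every $e\notin\{e_1,\dots,e_\ell\}$. For $e\notin\{e_1,\dots,e_m\}$ this is clear: $\rho_e=0$ since $E_{\rho}\subseteq\{e_1,\dots,e_m\}$ by \ref{prop:greedy-solution}, and $e\notin S\subseteq E_{\rho}$. For $e=e_i$ with $\ell<i\leq m$ we exploit $\bar\pi_{P_i}\leq 0$, i.e.\ $\varepsilon\geq\pi_{P_i}=\sum_{f\in P_i}\rho_f$ by \ref{prop:greedy-solution}, together with $|S\cap P_i|=1$ from \eqref{prop:good-S}: if the unique element of $S\cap P_i$ is $e_i$ itself, then $\rho_{e_i}\leq\sum_{f\in P_i}\rho_f\leq\varepsilon$, so $\bar\rho_{e_i}=\rho_{e_i}-\varepsilon\leq 0$; otherwise $S\cap P_i=\{g\}$ with $g\neq e_i$ and $\rho_g\geq\varepsilon$, so $\rho_{e_i}\leq\sum_{f\in P_i}\rho_f-\rho_g\leq 0$ and $\bar\rho_{e_i}=\rho_{e_i}=0$. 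In both cases nonnegativity of $\bar\rho$ forces $\bar\rho_{e_i}=0$.

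Granting this, property \ref{prop:greedy-positive} follows: $\bar\pi_{P_i}>0$ for $i\leq\ell$ by the choice of $\ell$, while for any $Q\in\sets[E\setminus\{e_1,\dots,e_\ell\}]$ we have $\bar\pi_Q\leq\sum_{e\in Q}\bar\rho_e=0$ because $\bar\rho$ vanishes off $\{e_1,\dots,e_\ell\}$. Property \ref{prop:greedy-solution} follows as well: $\sum_{e\in P_i}\bar\rho_e=\sum_{e\in P_i}\rho_e-|P_i\cap S|\,\varepsilon=\pi_{P_i}-\varepsilon=\bar\pi_{P_i}$ for $i\leq\ell$ (using \ref{prop:greedy-solution} for $\rho$ and $|P_i\cap S|=1$) and $\bar\rho$ vanishes off $\{e_1,\dots,e_\ell\}$, and uniqueness holds because the $\ell\times\ell$ incidence matrix $(\incidence[e_j]{P_i})_{i,j\in[\ell]}$ has $\incidence[e_j]{P_i}=0$ whenever $j<i$ (by \ref{prop:greedy-e_i}) and $\incidence[e_i]{P_i}=1$ (by \ref{prop:greedy-e_i-in_P_i}), hence is triangular with unit diagonal and therefore invertible. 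I expect the main obstacle to be precisely the vanishing assertion of the previous paragraph, which is what makes the truncated greedy support legitimate; its decisive ingredients are $|S\cap P_i|=1$ from \eqref{prop:good-S} and $\varepsilon\leq\min_{e\in S}\rho_e$, which together pin $\rho_{e_i}$ below $\varepsilon$ exactly on those indices where $\bar\pi_{P_i}$ has dropped to $0$ or below, whereas supermodularity and monotonicity of $\bar\pi$, feasibility of $\bar\rho$, and the triangularity argument for uniqueness are routine.
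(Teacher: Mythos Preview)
Your proof is correct and follows the paper's approach: truncate the greedy support at $\bar m=\max\{i:\bar\pi_{P_i}>0\}$ and verify \ref{prop:greedy-e_i-in_P_i}--\ref{prop:greedy-solution} for the truncated sequence. The only notable difference is in showing $\bar\rho_{e_i}=0$ for $i>\bar m$, where the paper instead observes $\bar\pi_{P_m}=\bar\rho_{e_m}\geq 0$ (since $P_m\cap\{e_1,\dots,e_m\}=\{e_m\}$ by \ref{prop:greedy-e_i}), so by monotonicity $\bar\pi_{P_i}=0$ for all $i>\bar m$, whence $\sum_{e\in P_i}\bar\rho_e=\bar\pi_{P_i}=0$ and nonnegativity forces $\bar\rho_{e_i}=0$; your case analysis via $|S\cap P_i|=1$ and $\varepsilon\leq\min_{e\in S}\rho_e$ reaches the same conclusion more hands-on, and your explicit triangularity argument for uniqueness in \ref{prop:greedy-solution} makes precise what the paper leaves implicit.
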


\begin{proof}
    Supermodularity and monotonicty of $\bar{\pi}$ follows directly from supermodularity and monotonicity of $\pi$ as subtracting a constant does not change these properties.
    Moreover, note that $\bar{\rho} \in \polystar \subseteq \polyplus$ by \cref{lem:C-invariants}\ref{inv:pi}.
    
    We show that $\bar{\rho}$ is indeed an extreme point of $\bar{Y}^+$ by constructing a greedy support that fulfils \ref{prop:greedy-e_i-in_P_i}-\ref{prop:greedy-solution} for $\bar{\rho}$ with respect to $\bar{\pi}$.
    For this, consider the greedy support $P_1, \dots, P_m$ and $e_1, \dots, e_m$ for $\rho$ used by \cref{alg:good-lattice} to compute $S$.
    Let $\bar{m} := \max \{i \in [m] \st \bar{\pi}_{P_i} > 0\} \cup \{0\}$.
    We show that $P_1, \dots, P_{\bar{m}}$ and $e_1, \dots, e_{\bar{m}}$ is a greedy support for $\bar{\rho}$ (if $\bar{m} = 0$, we show that $\bar{\rho} = 0$, which is a greedy solution with empty support).

    Indeed, $P_1, \dots, P_{\bar{m}}$ and $e_1, \dots, e_{\bar{m}}$ inherits \ref{prop:greedy-e_i-in_P_i} and \ref{prop:greedy-P_i-max} directly from the original greedy support.
    Note further that
    \begin{align}
        \textstyle \sum_{e \in P_i} \bar{\rho}_{e} = \sum_{e \in P_i} \rho_{e} - \incidence[e]{S} \cdot \varepsilon_{\pi,\rho}(S) = \pi_{P_i} -\varepsilon_{\pi,\rho}(S) = \bar{\pi}_{P_i} \label{eq:barpi-tight}
    \end{align}
    for all $i \in [m]$.
    Moreover, $\bar{\rho}_e = \rho_e = 0$ for all $e \in E \setminus \{e_1, \dots, e_m\}$ because $S \subseteq \{e_1, \dots, e_m\}$.
    This implies $\bar{\pi}_{P_i} \geq \bar{\pi}_{P_m} = \bar{\rho}_{e_m} \geq 0$ for all $i \in [m]$ and furthermore
    $\bar{\rho}_{e_i} = 0$ for all $i > \bar{m}$ (hence $\bar{\rho} = 0$ if $\bar{m} = 0$).
    Therefore, $e_1, \dots, e_{\bar{m}}$ and $P_1, \dots, P_{\bar{m}}$ fulfil~\ref{prop:greedy-solution} for $\bar{\rho}$ and $\bar{\pi}$.
    
    It remains to show that $e_1, \dots, e_{\bar{m}}$ and $P_1, \dots, P_{\bar{m}}$ fulfils \ref{prop:greedy-positive} for $\bar{\pi}$. 
    Observe that $\bar{\pi}_{P_{i}} \geq \bar{\pi}_{P_{\bar{m}}} > 0$ by monotonicity and definition of $\bar{m}$.
    Moreover, if $\bar{m} = m$, then $\bar{\pi}_Q \leq \pi_{Q} \leq 0$ for all $Q \in \sets[E \setminus \{e_1, \dots, e_m\}]$ because~\ref{prop:greedy-positive} holds for $\pi$.
    If~$\bar{m} < m$, then $P_{\bar{m} + 1} = \max \sets[E \setminus \{e_1, \dots, e_{\bar{m}}\}]$ and thus $\bar{\pi}_Q \leq \bar{\pi}_{P_{\bar{m} + 1}} \leq 0$ for all $Q \in \sets[E \setminus \{e_1, \dots, e_{\bar{m}}\}]$ by monotonicity.
    \qed
\end{proof}

By the preceding lemma we can indeed apply \cref{alg:decomposition} using the ASCs provided by \cref{lem:lattice-S}, establishing the existence of a feasible decomposition for any extreme point $\rho$ of $\polyplus$.
Moreover, the proof of \cref{lem:extreme-point-maintained} implies that a greedy support for the extreme point $\bar{\rho}$ in the next iteration can be obtained from the greedy support for the current iteration by simply removing $P_{\bar{m}+1}, \dots, P_{m}$.
Hence we can compute the corresponding ASCs using \cref{alg:good-lattice} throughout the algorithm.
To complete the proof of \cref{thm:lattice-extreme-points}, we further need to show how to compute $\varepsilon_{\bar{\pi},\bar{\rho}}(S)$ in each iteration of the algorithm.
For this, we make use of the following insight, which implies that we can find maximum violated constraints of $\polystar$ using the lattice oracle and the two-phase greedy algorithm.

\begin{lemma}\label{lem:lattice-optimizer}
    There is an algorithm that given $\rho \in [0, 1]^E$, finds a maximizer of $\max_{P \in \sets} \pi_P - \sum_{e \in P} \rho_e$ in polynomial time in $|E|$ and $\mathcal{T}$, where $\mathcal{T}$ is the time for a call to a lattice oracle for $(\sets, \preceq)$ and $\pi$.
\end{lemma}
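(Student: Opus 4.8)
The plan is to recognize this as a lattice-submodular minimization problem and to solve it with the lattice oracle, possibly wrapped in a binary search. Observe that $\max_{P\in\sets}\pi_P - \sum_{e\in P}\rho_e = -\min_{P\in\sets} g(P)$ with $g(P) := \sum_{e\in P}\rho_e - \pi_P$, that a minimizer of $g$ is exactly a maximizer of $\pi_{(\cdot)} - \sum_{e\in(\cdot)}\rho_e$, and that $g$ is submodular on the lattice $(\sets,\preceq)$: the map $P\mapsto\sum_{e\in P}\rho_e$ is submodular by \eqref{prop:submodularity} together with $\rho\ge 0$, and $\pi$ is supermodular by assumption, so their difference is submodular. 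Hence it suffices to compute $\arg\min_{P\in\sets} g(P)$ in time polynomial in $|E|$ and the lattice-oracle time $\mathcal{T}$.

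This minimization is precisely the separation problem for the lattice polyhedron $\polyplus$: since the given $\rho$ lies in $[0,1]^E\subseteq\mathbb{R}^E_{\ge 0}$, the only non-redundant inequalities of $\polyplus$ that $\rho$ can violate are the $\sum_{e\in P}y_e\ge\pi_P$ for $P\in\sets$. I would solve it using the lattice oracle in one of two ways. Directly, by a deletion-type procedure extending the first phase of the two-phase greedy algorithm of \citet{kornblum1978greedy,frank1999increasing}: consecutivity \eqref{prop:consecutivity} together with the oracle returning $\max_{\succeq}\sets[U]$ forces the members of $\sets$ lying above a fixed member to have a nested ``interval'' structure on $E$, which should make the minimization of $g$ tractable. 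Alternatively, via the polynomial-time equivalence of optimization and separation \citep{schrijver2003combinatorial}: the two-phase greedy algorithm optimizes linear functions over $\polyplus$ given the oracle, and Frank's vertex characterization (property~\ref{prop:greedy-solution}) certifies that $\polyplus$ is a rational polyhedron of polynomially bounded facet complexity, so that a violated member $P\in\sets$, and not merely a separating hyperplane, can be recovered.

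If the routine so obtained returns only some violated member rather than a minimizer of $g$, I would bootstrap it to a maximizer by a parametric binary search. For any $\lambda\in\mathbb{R}$, the requirement function $\pi-\lambda\mathbf{1}$ is again monotone and supermodular, the lattice $(\sets,\preceq)$ is unchanged, and a lattice oracle for $(\sets,\preceq)$ and $\pi-\lambda\mathbf{1}$ is obtained from the given one by subtracting $\lambda$ from the returned value; and because $\rho\ge 0$, one has $\rho\in\polyplus(\sets,\pi-\lambda\mathbf{1})$ if and only if $\lambda\ge\lambda^* := \max_{P\in\sets}\pi_P - \sum_{e\in P}\rho_e$. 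One oracle call with $U=E$ yields $P_0 := \max_{\succeq}\sets$, and monotonicity of $\pi$ gives $\pi_{P_0}=\max_{P\in\sets}\pi_P$, hence $\pi_{P_0}-\sum_{e\in P_0}\rho_e\le\lambda^*\le\pi_{P_0}$. Binary-searching $\lambda$ in this interval, each step testing whether $\lambda\ge\lambda^*$ by the separation routine applied to $\rho$ and $\polyplus(\sets,\pi-\lambda\mathbf{1})$, pins down $\lambda^*$ exactly after polynomially many steps by the usual bit-complexity estimates; then, letting $D$ be a polynomially bounded common denominator of the numbers $\pi_P-\sum_{e\in P}\rho_e$, one final separation call on $\rho$ against $\polyplus(\sets,\pi-(\lambda^*-\tfrac{1}{2D})\mathbf{1})$ returns a member $P$ with $\sum_{e\in P}\rho_e < \pi_P-\lambda^*+\tfrac{1}{2D}$, and since $\pi_P-\sum_{e\in P}\rho_e$ is a multiple of $1/D$ lying in $(\lambda^*-\tfrac{1}{2D},\lambda^*]$ it equals $\lambda^*$, so $P$ is the desired maximizer.

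The main obstacle is the separation subroutine itself: carrying out the lattice-submodular minimization combinatorially from the $\max_{\succeq}$-oracle, and in particular returning an actual violated member of $\sets$ rather than only a separating hyperplane. This is where submodularity \eqref{prop:submodularity}, consecutivity \eqref{prop:consecutivity}, and the vertex structure underlying the two-phase greedy algorithm must be exploited in full. A secondary, purely technical point is the bit-complexity bookkeeping --- the existence and polynomial size of $D$, and the polynomial bound on the number of binary-search iterations --- needed to turn ``find some violated inequality'' into ``find a maximally violated one''; everything else, namely the reformulation via $g$, the reduction via $\pi-\lambda\mathbf{1}$, and the simulation of the shifted lattice oracle, is routine.
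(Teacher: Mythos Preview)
Your proposal is correct and follows essentially the same approach as the paper: use the two-phase greedy algorithm (which runs on the same lattice oracle for any shifted requirement $\pi-\lambda\mathbf{1}$) to optimize over $Y^+_\lambda$, invoke optimization--separation equivalence to obtain a separation routine, and binary-search on $\lambda$ to locate $\lambda^*=\max_{P\in\sets}\pi_P-\sum_{e\in P}\rho_e$. Your write-up is in fact more careful than the paper's in spelling out how to recover the actual maximizer $P$ (via the final call at $\lambda^*-\tfrac{1}{2D}$) and in handling the bit-complexity bookkeeping; the paper simply states that binary search ``solves'' the problem.
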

\begin{proof}
    Note that observe that the lattice oracle can be used to run the two-phase greedy algorihtm for $(\sets, \preceq)$ for any $\bar{\pi}$ of the form $\bar{\pi}_P = \pi_P - \lambda$ for some $\lambda \in \mathbb{R}$ (as the set returned by the oracle is independent from $\pi$).
    Hence we can optimize linear functions over the polyhedron $Y^+_{\lambda} := \{y \in \mathbb{R}^E_+ \st \sum_{e \in P} y_e \geq \pi_P - \lambda \ \forall\, P \in \sets\}$.
    This implies that we can also separate the constraints of $Y^+_{\lambda}$, i.e., given $y \in \mathbb{R}^E_+$ find a $P \in \sets$ with $\pi_P - \sum_{e \in P} y_e > \lambda$ or assert that $\max_{P \in \sets} \pi_P - \sum_{e \in P} y_e \leq \lambda$. 
    Using $y = \rho$ and applying binary search for the largest $\lambda$ such that $\rho \in Y^+_{\lambda}$, we can solve $\max_{P \in \sets} \pi_P - \sum_{e \in P} \rho_e$ as desired.~\qed
\end{proof}

\subsection{Proof of \cref{thm:lattice}}

We now provide a detailed proof of \cref{thm:lattice}.

\restateThmLattice*
\begin{proof}
    Let $\rho \in \polystar$.
    Because $\polystar \subseteq \polyplus$, there are extreme points $\rho^{(1)}, \dots, \rho^{(k)}$ of $\polyplus$ such that $\rho = \sum_{i=1}^{k} \lambda_i \rho^{(i)} + r$ for some $\lambda \in \mathbb{R}_+^k$ with $\sum_{i = 1}^{k} \lambda_i = 1$ and some ray $r$ of $\polyplus$ (possibly $r = 0$). Note that $r \geq 0$ as $y \geq 0$ for all $y \in \polyplus$.
    By \cref{thm:lattice-extreme-points}, the extreme points $\rho^{(1)}, \dots, \rho^{(k)}$ have feasible decompositions $z^{(1)}, \dots, z^{(k)} \in \polyDist$ for $(E, \sets, \pi)$.
    Let $z' := \sum_{i=1}^{k} \lambda_i z^{(i)}$ and note that $z' \in \polyDist$ and 
    $$\textstyle \sum_{S : e \in S} z_S = \sum_{i=1}^{k} \lambda_i \rho^{(i)}_e = \rho_e - r_e \leq \rho_e$$
    for all $e \in E$.
    Hence, we can obtain a feasible decomposition for $\rho$ by applying \cref{lem:feasible-relax} to $z'$.

    To show that these steps can also be carried out efficiently, note that we can use the two-phase greedy algorithm to optimize linear functions over $\polyplus$.
    Hence we can verify in oracle-polynomial time if $\rho \in \polystar$ by the equivalence of optimization and separation, and obtain the extreme points $\rho^{(1)}, \dots, \rho^{(k)}$ and the corresponding coefficients using Carath\'{e}odory's algorithm~\citep[Corollary~6.5.13]{grotschel2012geometric}.
    Furthermore, as we obtained the extreme points using the two-phase greedy algorithm, we also obtained the corresponding greedy supports and we can apply \cref{thm:lattice-extreme-points} to obtain the desired feasible decompositions in polynomial time. \qed
\end{proof}

\section{Perfect Decompositions and Balanced Hypergraphs (Missing Details from \cref{sec:balanced})}

We provide a complete proof of \cref{thm:balanced}.
Recall that we call a system $(E, \sets)$ \emph{decomposition-friendly} if $(E, \sets, \pi)$ is {\starsuff} for all $\pi : \sets \rightarrow (-\infty, 1]$.
Also recall that we call $z$ a \emph{perfect decomposition} of $\rho$ for $\sets$ if $z$ is a feasible decomposition for $(E, \sets, \pi^{\rho})$, where $\pi^{\rho}$ is defined by $\pi^{\rho}_P := \min \left\{\sum_{e \in P} \rho, 1\right\}$ for all $P \in \sets$. 

Note that if $(E, \sets)$ is decomposition-friendly then every marginal vector $\rho \in [0, 1]^E$ has a perfect decomposition as $\rho$ fulfils \eqref{eq:cond} for $\pi^{\rho}$ by construction.
Conversely, if a system has $(E, \sets)$ has the property that every marginal vector $\rho$ has a perfect decomposition, then $(E, \sets)$ is decomposition-friendly, because any perfect decomposition of $\rho$ for $\sets$ is also a feasible decomposition $z$ for $(E, \sets, \pi)$ for any $\pi$ such that $\rho$ fulfils \eqref{eq:cond} for $\pi$, because $\sum_{S : S \cap P \neq \emptyset} z_S \geq \pi^{\rho}_P = \min \left\{ \sum_{e \in P} \rho_e, 1 \right\} \geq \pi_P$ for all $P \in \sets$ in this case.

\restateThmBalanced*

\begin{proof}
    We first show that any decomposition-friendly system is indeed a balanced hypergraph.
    We prove this by showing that the existence of an odd-length special cycle in $(E, \sets)$ implies the existence of a marginal vector that does not have a perfect decomposition for $\sets$.
    
    Indeed consider such a special cycle $(C, \mathcal{C})$ of odd length $k$ and define a marginal vector $\rho \in [0, 1]^E$ by $\rho_e := \frac{1}{2}$ for $e \in C$ and $\rho_e = 0$ for $e \in E \setminus C$.
    By contradiction assume there is a perfect decomposition $z$ of $\rho$.
    Note that $\frac{k}{2} = \sum_{e \in C} \rho_e = \sum_{S \subseteq E} |S| z_S \leq \max \{|S| \st S \subseteq E,\  z_S > 0\}$.
    Hence, there must be $S' \subseteq E$ with $z_{S'} > 0$ and $|S'| \geq \lceil\frac{k}{2}\rceil$, which implies that there is $i \in [k]$ with $e_i, e_{i+1} \in {S'}$ (with $e_{i+1} = e_1$ if $i = k$).
    But then 
    \begin{align*}
        \textstyle \sum_{S : S \cap P_i \neq \emptyset} z_S & \textstyle \leq \sum_{S : e_i \in S} z_S + \sum_{S : e_{i+1} \in S} z_S - \sum_{S : e_i, e_{i+1} \in S} z_S\\
        & \textstyle \leq 1 - z_{S'} < 1 = \pi^{\rho}_{P_i},
    \end{align*} 
    a contradiction. 

    Next, we show that any balanced hypergraph is decomposition-friendly.
    For this consider the system $(E', \sets')$ with $$E' := E \cup \{e_P \st P \in \sets\} \text{ and } \sets' := \{P \cup \{e_P\} \st P \in \sets\}.$$
    Note that $(E', \sets')$ is a balanced hypergraph, as each element $e_P$ appears in exactly one $P' \in \sets'$ and thus $e_P$ cannot appear in any special cycle.
    Therefore, the polytope
    $\textstyle Y := \left\{y \in [0,1]^{E'} \st \sum_{e \in P'} y_e  \geq 1 \  \forall\, P' \in \sets' \right\}$
    is integral~\citep{fulkerson1974balanced}.
    Note that every extreme point of $Y$ is the incidence vector of some set $T \in \mathcal{S}$, where $$\mathcal{S} := \{S \subseteq E' \st S \cap P' \neq \emptyset \ \forall\, P' \in \sets'\}.$$
    Let $\rho \in [0, 1]^E$ be any marginal vector on $E$. 
    We define $\rho' \in [0, 1]^{E'}$ by setting $\rho'_e = \rho_e$ for $e \in E$ and $\rho'_{e_P} = 1 - \pi^{\rho}_P$ for $P \in \sets$.
    Then $$\textstyle \sum_{e' \in P'} \rho'_{e'} = 1 - \pi^{\rho}_P + \sum_{e \in P} \rho_e \geq 1$$ for all $P \in \sets'$ and hence $\rho' \in Y$.
    By integrality of $Y$, we can express $\rho'$ as a convex combination of extreme points of $Y$, i.e., $\rho'_e = \sum_{T \in \mathcal{S} : e \in T} \lambda_T$ for $\lambda \in [0, 1]^{\mathcal{S}}$ with $\sum_{t \in \mathcal{S}} \lambda_T= 1$.
    Define $z \in [0, 1]^{2^E}$ by $z_S := \sum_{T \in \mathcal{S} : T \cap E = S} \lambda_T$. Note that $\sum_{S \subseteq E : e \in S} z_S = \sum_{T \in \mathcal{S} : e \in T} \lambda_T = \rho'_e = \rho_e$ for all $e \in E$ and that 
    \begin{align*}
        \textstyle \sum_{S \subseteq E : S \cap P \neq \emptyset} z_S 
        & \textstyle = \sum_{T \in \mathcal{S} : P \cap T\neq \emptyset} \lambda_T \\
        & \textstyle \geq 1 - \sum_{T \in \mathcal{S} : e_P \in T} \lambda_T \\
        & \textstyle = 1 - \rho'_{e_P} = \pi^{\rho}_P
    \end{align*} for all $P \in \sets$, where the inequality follows from $\sum_{T \in \mathcal{S}} \lambda_T = 1$ and the fact that for $T \in \mathcal{S}$ it holds sthat $P \cap T \neq \emptyset$ or $e_P \in T$.
    Hence, $x$ is a perfect decomposition of $\rho$ for $\sets$.

    Note further that we can construct the extreme point represenation of $\rho$ by applying Carath\'{e}odory's algorithm, thus obtaining $z$ in time polynomial in~$|E|$~(recall that $|\sets|$ is bounded by $\mathcal{O}(|E|^2)$).~\qed
\end{proof}

\end{document}